\def\hmath$#1${\texorpdfstring{{\rmfamily\textit{#1}}}{#1}}
\numberwithin{equation}{section}
\theoremstyle{plain}
\newtheorem{thm}{Theorem}[section]
\newtheorem{lem}{Lemma}[section]
\newtheorem{prop}{Proposition}[section]
\newtheorem{cor}{Corollary}[section]
\def\ba{\begin{align}}
\def\ea{\end{align}}
\def\ban{\begin{align*}}
\def\ean{\end{align*}}
\def\be{\begin{eqnarray}}
\def\ee{\end{eqnarray}}
\def\ben{\begin{eqnarray*}}
\def\een{\end{eqnarray*}}
\def\bqq{\begin{equation}}
\def\eqq{\end{equation}}
\def\bqqn{\begin{equation*}}
\def\eqqn{\end{equation*}}
\def\elabel#1{\label{e:#1}}
\def\sq{$\Box$}
\def\qed{\ifmmode\sq\else{\unskip\nobreak\hfil
\penalty50\hskip1em\null\nobreak\hfil\sq
\parfillskip=0pt\finalhyphendemerits=0\endgraf}\fi\par\medbreak}
\newsavebox{\junk}
\savebox{\junk}[1.6mm]{\hbox{$|\!|\!|$}}
\def\til={{\widetilde =}}
 \def\eq#1/{(\ref{#1})}
\def\eq#1/{(\ref{e:#1})}
\newcommand{\beqn}[1]{\notes{#1}%
\begin{eqnarray} \elabel{#1}}
\newcommand{\eeqn}{\end{eqnarray} } 
\newcommand{\beq}[1]{\notes{#1}%
\begin{equation}\elabel{#1}}
\newcommand{\eeq}{\end{equation}} 
\def\bdes{\begin{description}}
\def\edes{\end{description}}
\def\notes#1{}
\definecolor{mag}{rgb}{0.7,0,0.3}
\definecolor{dgreen}{rgb}{0.1,0.5,0.1}
\definecolor{dred}{rgb}{.8,0,0}
\definecolor{gray}{rgb}{.8,.8,.8}
\definecolor{brown}{rgb}{0.6451,0.3706,0.1745}
\begin{document}

\title{Posterior Representations for Bayesian Context Trees: \\ Sampling, Estimation and Convergence}

\author
{

 	Ioannis Papageorgiou
    \thanks{Department of Engineering,
        University of Cambridge,
        Trumpington Street, Cambridge CB2 1PZ, UK.
                Email: \texttt{\href{mailto:ip307@cam.ac.uk}%
			{ip307@cam.ac.uk}}.
 	}
 \and
    Ioannis Kontoyiannis
    \thanks{Statistical Laboratory, Centre for Mathematical Sciences, University of Cambridge, Wilberforce Road, Cambridge CB3 0WB, UK.
                Email: \texttt{\href{mailto: yiannis@maths.cam.ac.uk}%
			{ yiannis@maths.cam.ac.uk}}.
	I.K. was 
supported in part by the Hellenic
Foundation for Research and Innovation (H.F.R.I.) under the ``First Call for 
H.F.R.I.\ Research Projects to support
Faculty members and Researchers and the procurement of high-cost research equipment 
grant,'' project number 1034.
        }
}

\maketitle


\begin{center}
    \large \textbf{Abstract}
\end{center}

We revisit the
Bayesian Context Trees (BCT) modelling framework for 
discrete time series, which was recently found 
to be very effective in numerous tasks 
including model selection, estimation and prediction. 
A novel representation
of the induced posterior distribution on model space
is derived in terms of a simple branching process,
and several consequences of this are 
explored in theory and in practice.
First, it is shown that the branching process
representation leads to a simple variable-dimensional 
Monte Carlo sampler for the joint posterior distribution 
on models and parameters, which can efficiently produce
independent samples. This sampler is found to be
more efficient than earlier MCMC samplers for the same tasks.
Then, the branching process representation
is used to establish the asymptotic consistency 
of the BCT posterior, including the derivation 
of an almost-sure convergence rate. Finally,
an extensive study is carried out
on the performance of the induced Bayesian 
entropy estimator.
Its utility is illustrated
through both
simulation experiments and real-world applications,
where it is found to outperform several state-of-the-art
methods.  \\

\noindent \textbf{Keywords.} Discrete time series, Bayesian context trees, branching processes, 
exact sampling, consistency, model selection, prediction, 
entropy estimation, context-tree weighting.


{\let\thefootnote\relax\footnote{{Preliminary versions of some of the results in this work were presented in~\cite{branch_isit}} and~\cite{our_entropy}.}}

\newpage


\section{Introduction}

The statistical modelling and analysis of discrete time series are 
important scientific and engineering tasks, with 
a very wide range of applications. Numerous
Markovian model classes have been developed in 
connection with these and related problems,
including mixture transition distribution (MTD) 
models \citep{raftery1985model,berchtold2002mixture}, variable-length 
Markov chains
(VLMC) \citep{buhlmann1999variable,buhlmann2000model,machler2004variable} 
and sparse Markov chains \citep{jaaskinen2014sparse,xiong2016recursive}. Alternative approaches also include the use of multinomial logit or probit regression \citep{zeger1986longitudinal}, 
categorical regression models \citep{fokianos:03},
and conditional tensor factorisation \citep{sarkar2016bayesian}.

A popular and useful class of relevant models 
for discrete time series are the {\em context-tree sources}, 
introduced by \cite{ris83a,ris83b,ris86} as descriptions of variable-memory Markov chains, a flexible and rich class of chains that admit parsimonious 
representations. Their key feature is that the memory length
of the chain is allowed to 
depend on the most recently observed symbols, providing
a richer model class than ordinary Markov chains. 
Context-tree sources have been very successful in
information-theoretic applications in connection with
data compression~\citep{weinb,ctw}, and the celebrated
{\em context tree weighting} (CTW) algorithm~\citep{ctw,ctw2},
also based on context-tree sources, has 
been used widely as an efficient compression method
with extensive theoretical guarantees and justifications.

Recently, \cite{our}  revisited context-tree models and 
the CTW algorithm from a Bayesian inference point of view. A general 
modelling framework, called {\em Bayesian Context Trees} (BCT), 
was developed for discrete time series, along with a collection of 
efficient algorithmic tools both for exact inference and for posterior sampling via Markov chain Monte Carlo (MCMC).
The BCT methods were found to be 
very effective in important statistical tasks, including model selection, 
estimation, prediction and change-point detection~\citep{ourisit,lungu2022bayesian,changepoint}; see also the R package BCT~\citep{bctrpackage}.

In this work we derive an alternative representation
of the posterior distribution induced by the BCT framework,
and explore several ways in which it facilitates inference,
both in theory and in practice. Our
first main contribution is described 
in Sections~\ref{s:prior} and~\ref{s:posterior},
where we show
that both the prior and posterior distributions on
model space admit explicit representations 
as simple branching processes. In particular,
sampling tree models from the prior or the
posterior is shown to be equivalent to
generating 
trees via an appropriate Galton-Watson 
process \citep{athreya2004branching,harris1963theory}, 
stopped at a given depth. 
Therefore, in
some sense the BCT model prior acts as a `conjugate' prior 
for variable-memory Markov chains.
An immediate first practical consequence of this
representation
is that it facilitates direct Monte Carlo (MC) sampling
from the posterior, where independent and identically
distributed (i.i.d.) samples can be efficiently obtained
from the joint posterior distribution on models and
parameters. This variable-dimensional sampler and its
potential utility in a wide range of applications
(including model selection, parameter estimation
and Markov order estimation)
are described in Section~\ref{s:sampling}.

The Bayesian perspective adopted in this work is neither 
purely subjective nor purely objective. For example,
we think of the model posterior distribution as a summary
of the most accurate, data-driven representation
of the regularities present in a given time series, 
but we also examine the frequentist properties of
the resulting inferential procedures
\citep{gelman1995bayesian,chipman:01,bernardo2009bayesian}.
Indeed, in 
Section~\ref{theory} we employ the 
branching process representation
to show that the posterior asymptotically
almost surely 
concentrates on the ``true'' underlying model 
(Theorem~\ref{post_conc_thm}), and in 
Theorem~\ref{t:postrate} we derive an 
explicit rate for this convergence
as a function of the sample size.
Analogous results are established in
Theorems~\ref{t:out} and \ref{t:outrate}
in the case of out-of-class modelling,
when the data are generated by a model
outside the BCT class. Importantly,
the limiting model is explicitly 
identified in this case.
The branching process representation
is also used
in Proposition~\ref{pred_distr}
to provide a simple, explicit representation
of the posterior predictive distribution.
These theoretical results are the
second main contribution of this work.

Our last contribution,
in Section~\ref{exp},
is a brief experimental
evaluation of the utility of the MC
sampler of Section~\ref{branch},
and a careful examination of the performance
of the induced Bayesian {\em entropy estimator}.
In Section~\ref{s:MCMC}, the new i.i.d.\ sampler
is compared with the MCMC samplers introduced by \cite{our}
on simulated data.
As expected,
it is found that
the i.i.d.\ sampler has superior performance,
both in terms of estimation accuracy and,
as expected, in terms of mixing.

Finally, in Section~\ref{s:entropy} we consider
the important problem of estimating the entropy rate
of a discrete time series. Starting with the 
original work of \cite{shannon1951prediction}, 
many different approaches have been developed for this task,
including
Lempel-Ziv~(LZ) estimators 
\citep{ziv1977universal,wyner1989some}, prediction by partial matching~(PPM) 
\citep{cleary1984data}, the CTW algorithm \citep{gao2008estimating},
and block sorting methods \citep{cai2004universal}; 
for an extensive review of the relevant
literature, see \cite{verdu2019empirical}. 
Entropy estimation has also received a lot 
of attention 
in the neuroscience literature
\citep{strong1998entropy,london2002information,nemenman2004entropy,paninski2003estimation},  in an effort to describe and quantify the amount of information transmitted by~neurons.

In contrast with most earlier work, here we adopt
a fully-Bayesian approach. Since the entropy
rate is a functional of the
model and associated parameters, the 
branching process sampler of Section~\ref{branch}
makes it possible to effectively sample
from (and hence estimate) the actual posterior
distribution of the entropy rate.
This of course provides a much richer picture
than the simple point estimates employed in most
applications. The performance of the BCT
entropy estimator is illustrated both on
simulated data and
real-world applications from neuroscience, 
finance and animal communication,
where it is seen to outperform several of the
state-of-the-art methods.

In closing this introduction, we mention that 
there are, of course, numerous other approaches
to the problem of inference for discrete time
series. In addition to the extensive review
given by \cite{our}, those include the class
of reversible variable-memory chains 
examined by \cite{bacallado:11,bacallado2013bayesian,bacallado:16},
and the Bayesian analyses of discrete models with
priors that encourage sparse representations
developed in \citet{heiner:19,heiner:22}.

\newpage

\section{Bayesian context trees} 
\label{bct}

In this section, we briefly review the BCT model class,
the associated prior structure, and some relevant 
properties and results that will be needed in subsequent sections.

\smallskip

The BCT model class consists of
variable-memory Markov chains, where
the memory length of the process may
depend on the values of the 
most recently observed symbols. 
Variable-memory Markov chains 
admit natural representations as context trees. 
Let~$\{X_n\}$ 
be a $d$th order Markov chain,
for some~$d\geq 0$, taking values in the alphabet 
$A=\{0,1,\ldots,m-1\}$.
The {\em model} describing $\{X_n\}$ 
as a variable-memory chain is represented by
a proper $m$-ary tree $T$ as in the example 
in Figure~\ref{tree},
where a tree $T$ is called {\em proper} if any node in $T$ 
that is not a leaf has exactly $m$ children. 

\smallskip

Each leaf of the tree $T$ corresponds to a string $s$ determined by the 
sequence of symbols along the path from the root node $\lambda$ to that leaf. 
At each leaf $s$, there is an associated set of parameters~$\theta_s$
that form a probability vector,
$\theta_s=(\theta_s(0),\theta_s(1),\ldots,\theta_{s}(m-1))$.
At every time $n$, the conditional distribution of the next symbol $X_n$, 
given the past~$d$ observations $(x_{n-1}, \ldots , x_{n-d})$, is 
given by the vector $\theta_s$ associated to the unique leaf~$s$ of $T$ 
that is a suffix of $(x_{n-1}, \ldots , x_{n-d})$. 
Throughout this paper, every variable-memory Markov chain is described by a tree model $T$ and a set of associated parameters $\theta=\{\theta_s\;;\;s\in T\}$,
where~$T$ is viewed as the collection of its~leaves.

\begin{figure}[!ht]
\centering
 \includegraphics[width= 0.6 \linewidth]{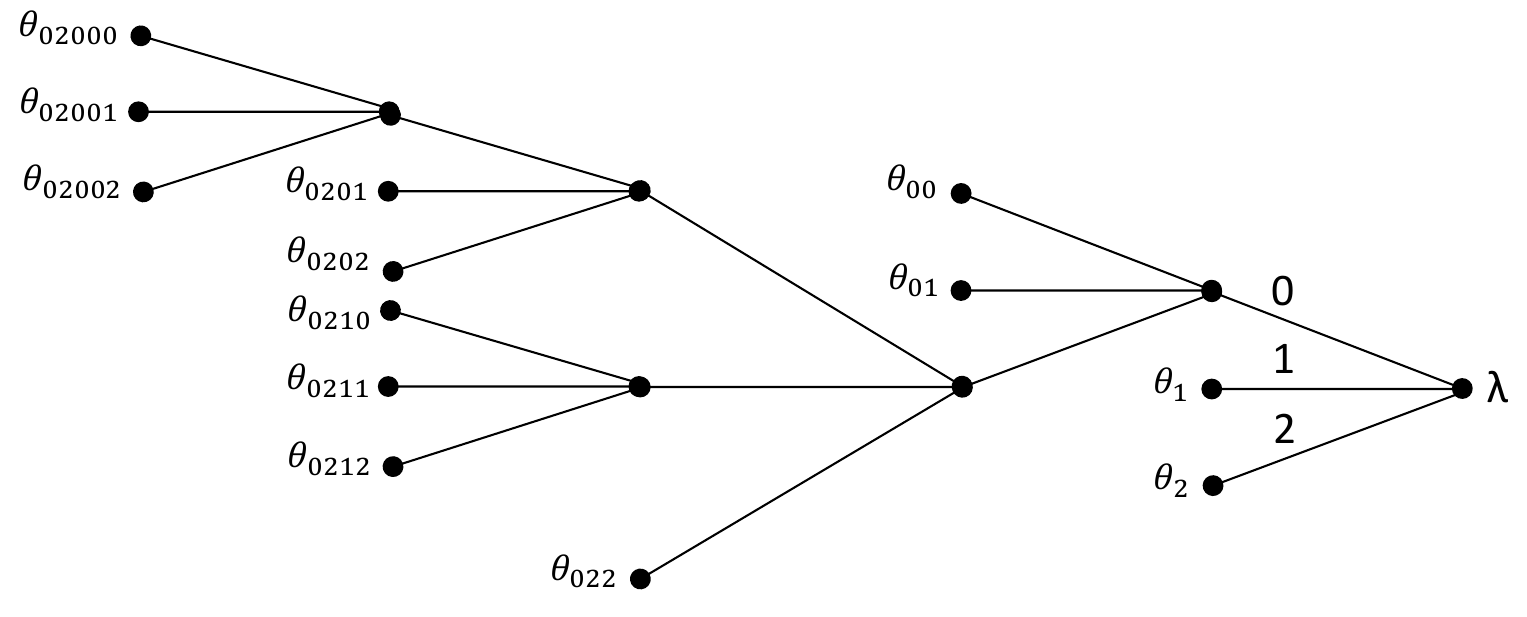}
\vspace*{-0.2 cm}
\caption{Tree model representation of a 5th order variable-memory chain.}
\label{tree}
\end{figure}


\noindent
{\bf Model prior. } Given a maximum depth $D\ge 0$, 
let $\mathcal {T} (D) $ denote the collection of all proper $m$-ary trees 
with depth no greater than $D$.
In \cite{our}, the following prior distribution is introduced 
on $\mathcal {T} (D)$,
\begin{equation}\label{prior_T}
\pi(T)=\pi_D(T;\beta)=\alpha^{|T|-1}\beta^{|T|-L_D(T)} \ ,
\end{equation}
where $\beta\in(0,1)$ is a hyperparameter, $\alpha$ is given 
by $\alpha=(1-\beta)^{1/(m-1)}$,
$|T|$ is the number of leaves of $T$,
and $L_D(T)$ is the number of leaves of 
$T$ at depth $D$.

\smallskip

This prior clearly penalises larger trees by an exponential amount,
and larger values of $\beta$ make the penalisation more severe. 
We adopt the default value
$\beta = 1 - 2 ^ {-m+1}$ for~$\beta$; see
\cite{our} for an extensive discussion of the properties
of this prior and the choice of the hyperparameter $\beta$.

 \medskip

\noindent
{\bf Prior on parameters. } Given a tree model $T \in \mathcal{T} (D)$, an independent Dirichlet prior
with parameters $(1/2,1/2,\ldots,1/2)$ is placed
on each $\theta_s$, so that: 
\begin{equation}\label{prior_theta}
\pi (\theta | T) = \prod _ {s \in T} \pi (\theta_s) = \prod _ {s \in T} \text{Dir}(1/2, 1/2, \ldots , 1/2).   \vspace*{-0.1 cm}
\end{equation}

Let $x=(x_{-D+1},\ldots,x_0,x_1,\ldots,x_n)$ denote a time series
with values in $A$. For each~$i\leq j$, we write $x_i^j$
for the segment $(x_i,x_{i+1},\ldots,x_j)$, so that
$x$ consists
of the observations $x_1^n$ along with an initial context
$x_{-D+1}^0$ of length $D$.

\smallskip

One of the main observations of \cite{our} is that
the prior predictive likelihood, averaged over
both models and parameters, \vspace*{-0.1 cm}
\begin{equation}
P(x)=\sum_{T\in\mathcal {T}(D)}
\int_\theta P(x|T,\theta)\pi(\theta|T)\pi(T)d\theta,  \vspace*{-0.1 cm}
\label{eq:ppl}
\end{equation}
can be computed exactly and efficiently by a version of the CTW
algorithm (where $P(x|T,\theta)$ denotes
the probability of $x$ under model $T$ with parameters $\theta$),
which
of course facilitates numerous
important statistical tasks.
For a given time series 
$x=x_{-D+1}^n$,
the CTW algorithm uses the \textit{estimated probabilities} $P_{e,s}$
defined as follows. For any tree model $T\in\mathcal{T}(D)$ and any
context $s$ (not necessarily a leaf),  \vspace*{-0.05 cm}
\begin{equation} 
P_{e,s} = P_{e}(a_s)= 
\frac{\prod_{j=0}^{m-1} 
[(1/2)(3/2)\cdots 
(a_s(j)- 1/2)]}
{(m/2)(m/2+1) 
\cdots (m/2+M_s-1)},
\label{eq:Pe}
\end{equation} 
where the elements of each \textit{count vector}
$a_s=(a_s(0),a_s(1),\ldots,a_s(m-1))$
are given by,
\begin{equation}
a_s(j)=\mbox{\# times symbol $j\in A$ follows 
context $s$ in $x_1^n$},
\label{eq:vector-a}
\end{equation}
and $M_s=a_s(0)+a_s(1)+\cdots+a_s(m-1)$.


\medskip

\noindent
{\bf CTW: The context tree weighting algorithm. }
\vspace*{-0.05in}
\begin{enumerate}
\item Build the tree $T_{\text{MAX}}$, 
which is the smallest proper tree that contains
all the contexts $x_{i-D+1} ^ {i}, \  i=1,2,\ldots,n$,
as leaves. Compute $P_{e,s}$ as given in (\ref{eq:Pe}) for 
each node $s$ of $T_{\text{MAX}}$.

\item Starting at the leaves and proceeding recursively towards the root, for each node~$s$ of $T_{\text{MAX}}$ compute the \textit{weighted probabilities} $P_{w,s}$, given by, \begin{equation} \label{pws}
P_{w,s}\!=\!
\left\{
\begin{array}{ll}
P_{e,s},  \; \; \; &\mbox{if $s$ is a leaf,}\\
\beta  P_{e,s}+(1-\beta)  \prod_{j=0}^{m-1} P_{w,sj},  \; \; \; &\mbox{otherwise,}
\end{array}\!\!
\right.\!\!
\end{equation} where $sj$ is the concatenation
of context $s$ and symbol $j$.
\end{enumerate}


As shown in \cite{our}, the weighted probability 
$P_{w,\lambda}$ produced by the CTW algorithm at the root 
$\lambda$, is indeed exactly equal to the prior predictive 
likelihood in~(\ref{eq:ppl}).
A family of Markov chain Monte Carlo~(MCMC) samplers 
for the posterior $\pi(T|x)$ or $\pi(T,\theta|x)$
were also introduced in \cite{our}. 
However, as will be seen below,
the representation of Section~\ref{branch}
leads to a simple i.i.d.\ MC sampler
that typically outperforms these MCMC samplers.


\section{Branching process representations} \label{branch}

In this section we show that 
the BCT model prior $\pi(T)=\pi_D(T;\beta)$ 
of~(\ref{prior_T}) and
the resulting posterior $\pi(T|x)$ 
both admit natural and easily interpretable representations
in terms of simple branching processes.
We also discuss how the posterior representation
leads to efficient samplers that can be used 
for model selection and estimation.

\subsection{The prior branching process}
\label{s:prior}

Given $D\geq 0$ and $\beta\in(0,1)$,
let $T=\{\lambda\}$ consist of only the root
node $\lambda$ and consider the following procedure:
\begin{itemize}
\item
If $D=0$, stop.
\item
If $D>0$, then, with probability $\beta$, mark the root as a 
leaf and stop, or, with probability $(1-\beta)$,
add all~$m$ children of $\lambda$ at depth~$1$ to~$T$. 
If \mbox{$D=1$}, stop.
\item
If $D>1$, 
examine each of
the $m$ new nodes, and either
mark a node as a leaf with probability $\beta$,
or add all~$m$ of its children to $T$ with 
probability $(1-\beta)$, independently from node to node. 
\item
Continue recursively, at each
step examining all non-leaf nodes at depths~strictly
smaller than $D$, until no more eligible nodes remain
to be examined.
\item
Output the resulting tree $T \in \mathcal {T} (D)$.
\end{itemize}

The above construction is a simple
Galton-Watson process \citep{athreya2004branching,harris1963theory}
with offspring distribution $(\beta,(1-\beta))$
on $\{0,m\}$, stopped at generation~$D$. The following proposition 
states that
the distribution of a tree $T$ generated by this process is
exactly the prior $\pi_D(T;\beta)$. Note that this also 
implies that the expression for $\pi_D(T;\beta)$ given
in~(\ref{prior_T}) indeed defines
a probability distribution on $\mathcal {T} (D)$, 
giving an alternative proof of \cite[Lemma~2.1]{our}.

\begin{prop}
\label{lem:branch_prior}
For any $D\geq 0$ and any $\beta\in(0,1)$,
the probability that the above
branching process produces any particular
tree $T\in\mathcal {T} (D)$ is given by $\pi_D(T;\beta)$
as in~{\em (\ref{prior_T})}.
\end{prop}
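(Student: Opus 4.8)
The plan is to compute the probability directly as a product of the independent node-by-node decisions that the branching process makes, and then to match the exponents against the defining formula~(\ref{prior_T}). The key observation is that the event ``the process outputs exactly $T$'' can be written as an intersection of independent single-node events. Precisely, when the process produces $T$, the set of nodes it ever examines is exactly the node set of $T$: every internal (non-leaf) node of $T$ is a node at which the process chose to branch (probability $1-\beta$); every leaf of $T$ at depth strictly less than $D$ is a node at which the process chose to stop (probability $\beta$); and every leaf of $T$ at depth exactly $D$ is a node the process reached but at which no decision is taken, since such nodes are automatically marked as leaves and so contribute a factor of $1$. Because the decisions at distinct nodes are independent, $\mathrm{Pr}(\text{output}=T)$ factors into a product of $\beta$'s and $(1-\beta)$'s, and it remains only to count how many of each occur.

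Next I would count the two types of decision. The number of leaves is $|T|$, and the number of leaves at depth $D$ is $L_D(T)$ by definition, so the number of ``stop'' decisions (leaves at depth strictly less than $D$) is exactly $|T|-L_D(T)$, contributing $\beta^{|T|-L_D(T)}$. For the ``branch'' decisions I would invoke the standard fact that a proper $m$-ary tree with $|T|$ leaves has exactly $(|T|-1)/(m-1)$ internal nodes: counting nodes two ways (every non-root node is the child of a unique internal node, and each internal node has $m$ children) gives $I+|T|-1=mI$, hence $I=(|T|-1)/(m-1)$. Each internal node contributes a factor $(1-\beta)$, for a total of $(1-\beta)^{(|T|-1)/(m-1)}$.

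Combining the two counts, the probability of producing $T$ is
\begin{equation*}
(1-\beta)^{(|T|-1)/(m-1)}\,\beta^{|T|-L_D(T)},
\end{equation*}
and substituting the definition $\alpha=(1-\beta)^{1/(m-1)}$ turns the first factor into $\alpha^{|T|-1}$, yielding exactly $\pi_D(T;\beta)=\alpha^{|T|-1}\beta^{|T|-L_D(T)}$. I do not expect a genuinely hard step; the subtlest point is the bookkeeping at the depth-$D$ boundary, where reached nodes are forced to be leaves and therefore contribute no probability factor, so that the exponent of $\beta$ is $|T|-L_D(T)$ rather than $|T|$. If one prefers to avoid the explicit node count, the same result follows by a short induction on $D$: conditioning on the root's first decision and applying the inductive hypothesis to the $m$ independent subtrees rooted at its children, using the multiplicativity of~(\ref{prior_T}) across subtrees to close the step. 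Either argument simultaneously confirms that $\pi_D(\cdot;\beta)$ is a probability distribution on $\mathcal{T}(D)$, since it arises as the output law of a genuine random process, recovering \cite[Lemma~2.1]{our}.
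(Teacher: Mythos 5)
Your argument is correct, and it takes a different route from the paper's. The paper proves the proposition by induction on the number of $m$-branches of $T$: starting from $T=\{\lambda\}$, it adds one $m$-branch at a time to a leaf $s$ and tracks how the branching-process probability changes, splitting into two cases according to whether $s$ is at depth at most $D-2$ (so the new leaves contribute $\beta^m$ and $L_D$ is unchanged) or at depth $D-1$ (so the new leaves are forced and $L_D$ increases by $m$). You instead compute the output probability in closed form in one shot, as a product of independent node decisions: $(1-\beta)$ per internal node, $\beta$ per leaf at depth strictly less than $D$, and a factor of $1$ per forced leaf at depth $D$, and then count $I=(|T|-1)/(m-1)$ internal nodes to recover $\alpha^{|T|-1}$. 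Your bookkeeping at the depth-$D$ boundary is exactly right and matches the two cases in the paper's induction. The direct count is arguably more transparent about where the exponents $|T|-1$ and $|T|-L_D(T)$ come from, while the paper's branch-by-branch induction is the template it then reuses almost verbatim for the posterior version (Proposition~\ref{p:branching_post}), where no comparably clean closed-form product over nodes is available a priori; that is the main thing the inductive route buys. Both arguments, as you note, also certify that $\pi_D(\cdot;\beta)$ sums to one over $\mathcal{T}(D)$.
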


\begin{proof}
When $D=0$, $\mathcal {T} (D)$ consists of a single tree,
$T=\{\lambda\}$, which has probability~1 under 
both $\pi_D$ and the branching 
process construction. 
Assume $D\geq 1$. Note that every tree $T\in \mathcal {T} (D) $
can be viewed as a collection of a number, $k$, say,
of $m$-branches, since every node in $T$ 
has either zero or $m$ children. The proposition is proven 
by induction on $k$.
The result is trivial for $k=0$,
since the only tree with no $m$-branches
is $T=\{\lambda\}$ and its probability
under
both $\pi_D$ and the 
branching process construction is equal to~$\beta$.

\smallskip

Suppose the claim of the proposition is true for all trees 
with $k$ $m$-branches, and let
$T'\in \mathcal {T} (D)$ consist of $(k+1)$ $m$-branches.
Then $T'$ can be obtained from some $T\in \mathcal {T} (D)$
that has $k$ $m$-branches, by adding a single $m$-branch
to one of its leaves, $s$, say.
Two cases are considered.

\smallskip

$(i)$
If $s$ is at depth $D-2$ or smaller,
then the probability $\pi_b(T')$ of $T'$ under the branching 
process construction is, \vspace*{-0.1 cm}
$$\pi_b(T')=\frac{\pi_b(T)}{\beta}(1-\beta)\beta^m
=\frac{\pi_D(T;\beta)}{\beta}(1-\beta)\beta^m
=\frac{\alpha^{|T|-1}\beta^{|T|-L_D(T)}}{\beta}\alpha^{m-1}\beta^m, 
\vspace*{-0.1cm} $$
where the second equality follows from the inductive hypothesis 
and the third from the definition of $\pi_D (T;\beta)$.
Therefore, since $|T'|=|T|+m-1$ and no leaves are added at depth $D$, 
so that $L_D(T')=L_D(T)$,
$$\pi_b(T')
=\alpha^{[|T|+m-1]-1}\beta^{|T|+m-1-L_D(T)}
=\alpha^{|T'|-1}\beta^{|T'|-L_D(T')}=\pi_D(T';\beta),$$
as required.

$(ii)$
If $s$ is at depth $D-1$, we similarly find that,  \vspace*{-0.1 cm}
$$\pi_b(T')=\frac{\pi_b(T)}{\beta}(1-\beta)
=\frac{\pi_D(T;\beta)}{\beta}(1-\beta)
=\frac{\alpha^{|T|-1}\beta^{|T|-L_D(T)}}{\beta}\alpha^{m-1}, \vspace*{-0.1 cm} $$
and since $|T'|=|T|+m-1$, but now, $L_D(T')=L_D(T)+m$,
$$\pi_b(T')
=\alpha^{[|T|+m-1]-1}\beta^{[|T|+m-1]-[L_D(T)+m]}
=\alpha^{|T'|-1}\beta^{|T'|-L_D(T')}=\pi_D(T';\beta),$$
completing the proof. 
\end{proof}

\vspace*{-0.2 cm}

Apart from being aesthetically appealing, 
this representation 
also offers a simple and practical way of 
sampling from $\pi_D(T;\beta)$. Moreover, 
using well-known properties of the Galton-Watson process
we can perform some 
direct computations that offer 
better insight into the nature and specific properties 
of the BCT prior.

\medskip

\noindent
{\bf Interpretation and choice of $\beta$. }
The branching process description of $\pi_D(T,\beta)$  further clarifies the role of the hyperparameter $\beta$: It is exactly the probability that, when a node is added to the tree $T$, it is marked as a leaf and its children are not included in $T$.

In terms of choosing the value of the hyperparameter
$\beta$ appropriately, 
recall that, for a Galton-Watson process, 
the expected number of children of each node,
in this case $\rho=m(1-\beta)$, 
governs the probability of extinction $P_e$:
If $\rho \leq 1$ we have $P_e=1$, whereas if
$\rho>1$, $P_e$ is strictly less than one.
Therefore, in the 
binary case $m=2$, the original choice $\beta=1/2$ 
used in the CTW algorithm
gives an expected number of children equal to the critical 
value $\rho=1$. This suggests that a reasonable choice for 
general alphabets could be $\beta = 1 -1/m$, which keeps $\rho =1$, 
so that the resulting prior would have similar 
qualitative characteristics with the well-studied binary case. 
This is also in line with the observation of \cite{our} that 
$\beta$ should decrease with $m$.

Now suppose $T$ is a random model generated by the prior
and let $L_d(T)$ denote the number of nodes at depth $d=0,1,\ldots,D$.
Then, standard Galton-Watson theory 
\citep{athreya2004branching,harris1963theory} provides
the useful expressions, 
\[
\mathbb {E}\left  [ L_d (T) \right ] = \rho ^ d    , \quad \quad \quad  \text{Var} \left  [ L_d (T) \right ]  = \left\{
\begin{array}{ll}
 \sigma ^2 d,  \; \; \; &\mbox{if $\rho =1$,}\\
\sigma ^2 \rho  ^ {d-1} \frac{1 - \rho ^ d}{1 - \rho},  \; \; \; &\mbox{if $\rho \neq 1$},
\end{array}\!\!
\right. \vspace*{-0.15 cm}
\]
where $\sigma ^ 2 = m ^ 2 \beta (1 - \beta ) $.

\subsection{The posterior branching process}
\label{s:posterior}

Given a time series $x=x_{-D+1}^n$, a maximum depth $D$,
and $\beta\in(0,1)$,
for any context $s$ with length strictly smaller than $D$
we define the \textit{branching probabilities} $P_{b,s}$ as, \vspace*{-0.1 cm}
\begin{equation} \label{pbs}
P_{b,s}:=\frac{\beta P_{e,s}}{P_{w,s}}, \vspace*{-0.1 cm}
\end{equation}
where the estimated and weighted probabilities, $P_{e,s}$ 
and $P_{w,s}$, are defined in~(\ref{eq:Pe})-(\ref{pws}), 
and with the convention that $P_{b,s} = \beta$ for all contexts $s$ 
that do not appear in $x$. Starting with $T=\{\lambda\}$,
the following construction
produces a sample model $T\in{\mathcal T}(D)$ from the model posterior $\pi(T|x)$:

\begin{itemize}
\item
If $D=0$, stop.
\item
If $D>0$, then, with probability $P_{b,\lambda}$, 
mark the root as a leaf and stop,
or, with probability $(1-P_{b,\lambda})$,
add all~$m$ children of $\lambda$ at depth $1$ to $T$. 
If $D=1$, stop.
\item
If $D>1$,
examine each of
the $m$ new nodes and either
mark a node $s$ as a leaf with probability $P_{b,s}$,
or add all~$m$ of its children to $T$ with probability
$(1-P_{b,s})$, independently from node to node. 
\item
Continue recursively, at each
step examining all non-leaf nodes at depths strictly
smaller than $D$, until no more eligible nodes remain.
\item
Output the resulting tree $T \in \mathcal {T} (D)$.
\end{itemize}

\begin{prop}
\label{p:branching_post}
For any $D\geq 0$ and any $\beta\in(0,1)$,
the probability that the above
branching process produces any particular
tree $T\in\mathcal {T}(D)$ is given by $\pi(T|x)$.
\end{prop}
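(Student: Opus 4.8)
The plan is to mirror the structure of the prior case (Proposition~\ref{lem:branch_prior}) but now tracking the data-dependent factors. The key observation is that the posterior $\pi(T|x)$ factorises through the CTW recursion, and the branching probabilities $P_{b,s}=\beta P_{e,s}/P_{w,s}$ are precisely designed so that the product of branch-by-branch acceptance and rejection probabilities telescopes into the correct posterior weight. First, I would write down the explicit form of the posterior: by Bayes' rule,
\[
\pi(T|x)=\frac{\pi(T)P(x|T)}{P(x)}
=\frac{\pi(T)\prod_{s\in T}P_{e,s}}{P_{w,\lambda}},
\]
where $P(x|T)=\prod_{s\in T}P_{e,s}$ is the model-averaged likelihood under $T$ (integrating out $\theta$ against the Dirichlet prior, which yields exactly the estimated probabilities of~(\ref{eq:Pe})), and $P(x)=P_{w,\lambda}$ by the main CTW identity of \cite{our}. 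Substituting $\pi(T)=\alpha^{|T|-1}\beta^{|T|-L_D(T)}$ gives a clean closed form that I want the branching construction to reproduce.

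Next, I would compute the probability $\pi_b(T)$ that the posterior branching process outputs a fixed tree $T$. For each internal node $s$ of $T$ (those strictly above depth $D$) the process rejects with probability $(1-P_{b,s})$, and for each leaf $s$ of $T$ at depth $<D$ it accepts with probability $P_{b,s}$; leaves at depth exactly $D$ contribute no factor. Using $P_{b,s}=\beta P_{e,s}/P_{w,s}$ and $1-P_{b,s}=(P_{w,s}-\beta P_{e,s})/P_{w,s}=(1-\beta)\prod_{j}P_{w,sj}/P_{w,s}$ (from the recursion~(\ref{pws})), I expect a telescoping cancellation of the intermediate $P_{w,s}$ factors along every path: each $P_{w,s}$ appearing in a denominator at node $s$ is matched by the same factor produced as $P_{w,sj}$ in its parent's numerator. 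What survives after the cancellation should be $P_{w,\lambda}$ in the overall denominator, the product $\prod_{s\in T}P_{e,s}$ in the numerator (the $P_{e,s}$ at the leaves, times those surfacing from the $\beta P_{e,s}$ acceptance terms), together with the combinatorial powers of $\beta$ and $(1-\beta)$ that reconstruct $\alpha^{|T|-1}\beta^{|T|-L_D(T)}$.

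I would organise this rigorously by induction on the number $k$ of $m$-branches in $T$, exactly as in Proposition~\ref{lem:branch_prior}, adding one branch at a leaf $s$ and checking the two cases ($s$ at depth $\le D-2$ versus depth $D-1$) to verify that the extra factors introduced in $\pi_b$ match the change in $\pi(T|x)$. The base case $T=\{\lambda\}$ gives $\pi_b=P_{b,\lambda}=\beta P_{e,\lambda}/P_{w,\lambda}$, which is indeed $\pi(\{\lambda\})P(x|\{\lambda\})/P(x)$ since $\pi(\{\lambda\})=\beta$ and $P(x|\{\lambda\})=P_{e,\lambda}$. The main obstacle is bookkeeping the telescoping correctly: I must ensure every internal $P_{w,s}$ cancels and that the boundary treatment at depth $D$ (where no branching factor appears, and where $P_{b,s}$ is not even defined) is handled consistently with the $L_D(T)$ exponent. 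The convention $P_{b,s}=\beta$ for contexts $s$ absent from $x$ needs a brief justification: for such $s$ one has $P_{e,s}=1$ and $P_{w,s}=\beta\cdot 1+(1-\beta)\cdot 1=1$, so $P_{b,s}=\beta$ is forced and the construction reduces locally to the prior, consistent with the data carrying no information below an unvisited context.
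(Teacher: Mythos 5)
Your proposal is correct and follows essentially the same route as the paper's proof: induction on the number of $m$-branches, the base case $\pi_b(\{\lambda\})=\beta P_{e,\lambda}/P_{w,\lambda}=\pi(\{\lambda\}|x)$ via $P(x|T)=\prod_{s\in T}P_{e,s}$ and $P(x)=P_{w,\lambda}$, and the two-case inductive step (leaf at depth $\le D-2$ versus $D-1$) matching the posterior odds to the branching factors. Your telescoping observation, using $1-P_{b,s}=(1-\beta)\prod_j P_{w,sj}/P_{w,s}$, would in fact yield a clean direct (non-inductive) verification of the product formula $\pi(T|x)=\prod_{s\in T_o}(1-P_{b,s})\prod_{s\in T}P_{b,s}$, but as written you fall back on the same induction the paper uses, so the arguments coincide.
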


\vspace*{-0.2 cm}

The proof, which follows along the same lines as that of 
Proposition~\ref{lem:branch_prior}, is given in Section~A of the
supplementary material. 
It is perhaps somewhat remarkable that the posterior 
$\pi(T|x)$ on 
the vast model space~$\mathcal {T}(D)$ admits such a simple 
description. Indeed, the posterior branching process is of
exactly the same form as that of the prior,
which can then
naturally be viewed as a conjugate prior on $\mathcal {T}(D)$. 

\medskip

\noindent
{\bf Model posterior probabilities. } 
Proposition~\ref{p:branching_post}
allows us to write an exact expression for the posterior 
of any model $T \in \mathcal {T} (D)$ in terms of
the branching probabilities $P_{b,s}$, \vspace*{-0.1 cm}
\begin{equation} \label {post_alt_expr}
\pi (T|x ) = \prod _ {s \in T _{o}} \left ( 1 - P_{b,s}\right ) \prod _ {s \in T} P_{b,s}, \vspace*{-0.1 cm}
\end{equation}
where $T_{o}$ denotes the set of all \textit{internal} nodes of $T$, 
and with the convention that $P_{b,s}=1$ for all leaves of $T$ 
at depth $d=D$. This expression will be the starting
point in the proofs of the
asymptotic results of Section~\ref{theory} for $\pi(T|x)$.


In terms of inference,
the main utility of Proposition~\ref{p:branching_post}
is that 
it offers a practical way of obtaining exact i.i.d.\ samples 
directly from the model posterior, as described in the next section.

\subsection{Sampling from the posterior} \label{section3.3}
\label{s:sampling}

The branching process representation of 
$\pi(T|x)$ readily leads to a simple way for obtaining
i.i.d.\ 
samples $\{T^{(i)}\}$ from
the posterior on model space.
And since the
full conditional density of 
the parameters $\pi(\theta |T,x)$ is explicitly identified
by \cite{our} as a product of Dirichlet densities,
\begin{equation} \label{post_theta}
\pi(\theta |T,x) = \prod _ {s \in T} \text{Dir}
\left (1/2 + a_s (0), 1/2 + a_s (1), \ldots,1/2 + a_s (m-1) \right ),
\end{equation}
for each $T^{(i)}$
we can draw a conditionally independent sample 
$ \theta ^ {(i)} \sim \pi ( \theta | T ^ {(i)},x  )$,
producing a sequence of
exact i.i.d.\ samples $\{(T ^ {(i)}, \theta ^ {(i)})\}$ from 
the joint posterior $\pi(T,\theta|x)$. 

This facilitates numerous applications.
For example, effective parameter estimation can be 
performed by simply keeping the samples 
$\{ \theta ^ {(i)}\}$, which come from the marginal posterior 
distribution $\pi(\theta |x)$. Similarly, Markov order
estimation can be performed by collecting the sequence
of maximum depths of the models $\{T^{(i)}\}$.
And in model selection tasks, the model posterior can 
be extensively explored, offering better insight and 
deeper understanding of the underlying structure 
and dependencies present in the data. 

Although a family of MCMC samplers was introduced 
and successfully used for the same
tasks in \cite{our}, MCMC sampling has well-known limitations
and drawbacks, including potentially slow mixing, 
high correlation between samples, and the need for
convergence diagnostics
\citep{gelman1992inference,cowles1996markov,robert2004monte}.
Partly for these reasons, 
being able to obtain i.i.d.\ samples from the posterior 
is generally much more desirable, as illustrated 
in Section~\ref{s:MCMC}.

\medskip

\noindent{\bf Estimation of general functionals. }
Consider the general Bayesian estimation problem, where 
the goal is to estimate an arbitrary functional $F=F (T,\theta)$ 
of the underlying variable-memory chain, based on data $x$. 
Using the above sampler, the entire posterior distribution
of the statistic $F$ can be explored, by considering
the i.i.d.\ samples 
$F ^ {(i)} = F ( T ^ {(i)} , \theta ^ {(i)} )$,
distributed according to the desired posterior $\pi(F|x)$.

In connection with classical estimation 
techniques, and in order to evaluate estimation 
performance more easily in practice, several reasonable 
point estimates can also be obtained.
The most common choices are the empirical average
approximation to the posterior mean,
\begin{equation} \label {post_mean_mc}
\widehat F _ {\text{MC}} = \frac{1}{N} \sum _{i=1} ^ N F^{(i)} = 
\frac{1}{N} \sum _{i=1} ^ N  F ( T ^ {(i)}, \theta ^ {(i)}),
\end{equation}
or the posterior mode, i.e., the maximum 
\textit{a posteriori} probability (MAP) estimate,
$\widehat F _ {\text {MAP}} $. 
In cases where the conditional mean 
$\bar{F}(T)= {E} \big ( F(T,\theta)|x,T \big )$ can be computed for any 
model~$T$ (as e.g. in the case of parameter estimation), a 
lower-variance Rao-Blackwellised
estimate \citep{blackwell1947conditional,gelfand1990sampling} 
for the posterior mean can also be obtained as,
$$\widehat F _ {\text{RB}} = \frac{1}{N} \sum _{i=1} ^ N \bar F(T ^{(i)}) .$$
Importantly, as this posterior sampler provides access to the entire
posterior distribution $\pi(F|x)$ of the statistic of interest $F$,
standard Bayesian methodology can be applied to quantify the
resulting uncertainty of any estimator $\hat{F}$, for example
by obtaining credible intervals in terms of the posterior $\pi(F|x)$.

\smallskip

An interesting special case of particular importance in practice
is the estimation of the \textit{entropy rate} 
$H$ of the underlying process,
$F(T,\theta) = H(T,\theta )$. The performance of all methods 
discussed above, with emphasis on the estimation 
of the entropy rate, is illustrated through simulated experiments 
and real-world applications in Section~\ref{s:entropy}. 

\section{Theoretical results} 
\label{theory}

Using the branching process representation of 
Proposition~\ref{p:branching_post},
we show how to derive precise
results on the asymptotic behaviour
of the BCT posterior $\pi(T|x)$ on model space,
and provide an explicit, useful expression
for the posterior predictive distribution.

\smallskip

Let $\{X_n\}$ be a variable-memory chain with model 
$T\in\mathcal {T}(D)$. 
The specific model $T$ that describes the chain is typically
not unique, for the same reason, e.g., that every i.i.d.\
process can also trivially be described as a first-order
Markov chain: Adding $m$ children to any leaf
of $T$ which is not at maximal depth, and giving each of them
the same parameters as their parent, leaves the distribution
of the chain unchanged.

The natural main goal in model selection is to identify
the ``minimal'' model, i.e., the smallest model that can
fully describe the distribution of the chain. A model $T\in\mathcal {T}(D)$ is called {\em minimal} 
if every $m$-tuple of leaves $\{sj\;;\;j=0,1,\ldots,m-1\}$
in $T$
contains at least two with non-identical parameters,
i.e., there are $j\neq j'$ such that $\theta_{sj}\neq\theta_{sj'}$.
It is easy to see that every $D$th order 
Markov chain $\{X_n\}$ has a unique minimal model 
$T^*\in\mathcal {T}(D)$. 

\smallskip

A variable-memory chain $\{X_n\}$
with model $T\in\mathcal {T}(D)$ 
and with associated parameters $\theta=\{\theta_s ; s\in T\}$
is {\em ergodic},
if the corresponding first-order chain 
\mbox{$\{Z_n:=X_{n-D+1}^n ; n\geq 1\}$} 
taking values in $A^D$
is irreducible 
and aperiodic.
In order to avoid uninteresting technicalities, 
in most of our results we will assume that
the data are generated by a {\em positive-ergodic}
chain $\{X_n\}$, namely 
that all its parameters $\theta_s (j)$ are nonzero,
so that its unique stationary
distribution $\pi$ gives strictly positive probability 
to all finite contexts $s$.

\subsection{Posterior consistency and concentration} \label{post_conc}

Our first theorem is a strong consistency result, which states
that, if the data $x=x_{-D+1} ^ n $ are generated by an ergodic chain with minimal model $T^*\in\mathcal {T}(D)$, then the model posterior asymptotically 
almost surely (a.s.) concentrates on $T^*$. 
Theorem~\ref{post_conc_thm} both strengthens
and generalises a weaker result on the asymptotic
behaviour of the MAP model established in 
\cite[Theorem~8]{willems-shtarkov-tjalkens:preS}.

\begin{thm} 
\label{post_conc_thm}
Let $X_{-D+1}^n=(X_{-D+1},\ldots,X_0,X_1,\ldots,X_n)$ be 
a time series generated
by a positive-ergodic, variable-memory chain 
$\{ X_n \}$ 
with minimal 
model $T^*\in\mathcal {T}(D)$. For any 
value of the prior hyperparameter $\beta\in(0,1)$, 
the posterior distribution 
over models concentrates on~$T^*$, i.e.,
\begin{equation*}
 \pi(T^*|X_{-D+1}^n) \to 1, \quad \text{a.s., as } n \to \infty .
\end{equation*}
\end{thm}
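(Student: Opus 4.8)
The plan is to argue directly from the exact posterior formula~(\ref{post_alt_expr}) supplied by Proposition~\ref{p:branching_post}, namely
\[
\pi(T^*|x)=\prod_{s\in T^*_o}\bigl(1-P_{b,s}\bigr)\prod_{s\in T^*}P_{b,s},
\]
and to show that each individual factor tends to $1$ almost surely. Since $T^*$ has only finitely many nodes and every factor lies in $[0,1]$, this is enough. Writing $P_{b,s}=\beta P_{e,s}/P_{w,s}$ and using the recursion~(\ref{pws}) for a non-leaf $s$, it is convenient to set $R_s:=\bigl(\prod_{j=0}^{m-1}P_{w,sj}\bigr)/P_{e,s}$, so that $P_{b,s}=\bigl(1+\tfrac{1-\beta}{\beta}R_s\bigr)^{-1}$. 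The statement then reduces to two claims: at every leaf $s\in T^*$ of depth $<D$ one has $R_s\to0$ a.s.\ (forcing $P_{b,s}\to1$), while at every internal node $s\in T^*_o$ one has $R_s\to\infty$ a.s.\ (forcing $P_{b,s}\to0$, i.e.\ $1-P_{b,s}\to1$); leaves at depth $D$ contribute the factor $P_{b,s}=1$ by convention.

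The analytic inputs are the Krichevsky--Trofimov asymptotics $-\log P_{e,s}=M_s\widehat H_s+\tfrac{m-1}{2}\log M_s+O(1)$, where $\widehat H_s$ is the empirical entropy of the symbols following $s$, together with the CTW pointwise redundancy bound, which for any context $u$ and any subtree $S$ rooted at $u$ gives $P_{w,u}\ge c(S)\prod_{\ell}P_{e,u\ell}$ over the leaves $\ell$ of $S$, with $c(S)>0$ depending only on $\beta$ and $|S|$. Positive-ergodicity enters throughout via the ergodic theorem: a.s.\ every context eventually occurs, $M_s/n\to\pi(s)>0$, all counts $a_s(j)\to\infty$, each child count $M_{sj}$ grows linearly in $M_s$, and $\widehat H_s$ converges to the relevant true conditional entropy.

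For an internal node $s\in T^*_o$ (the under-fitting direction), treating $s$ as a single leaf fits the \emph{merged} next-symbol law $\bar\theta_s$, so $-\log P_{e,s}=M_sH(\bar\theta_s)+o(M_s)$; applying the redundancy bound to each child $sj$ with $S=T^*_{sj}$, the true subtree below $sj$, yields $-\log\prod_jP_{w,sj}\le M_s\bar H_s+o(M_s)$, where $\bar H_s$ is the average true conditional entropy below $s$. Hence
\[
\log R_s\;\ge\;M_s\bigl(H(\bar\theta_s)-\bar H_s\bigr)+o(M_s),
\]
and this tends to $+\infty$ because the gap $H(\bar\theta_s)-\bar H_s$ is a strictly positive conditional mutual information: minimality of $T^*$ forces at least two children of $s$ to carry distinct parameters, which makes conditioning on the finer context strictly informative. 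For a leaf $s\in T^*$ (the over-fitting direction) the symbols below $s$, and below each child $sj$, are i.i.d.\ $\theta_s$, so the true sub-model at each $sj$ is the single node $\{sj\}$; a short induction from depth $D$ upward (using the redundancy bound together with $P_{w,sj}\ge\beta P_{e,sj}$) gives $-\log P_{w,sj}=-\log P_{e,sj}+O(1)$. Combining this with the KT expansion and $\sum_j\log M_{sj}=m\log M_s+O(1)$ produces
\[
\log\tfrac{1}{R_s}\;=\;\tfrac{(m-1)^2}{2}\log M_s\;-\;I_n\;+\;O(1),
\]
where $I_n:=M_s\widehat H_s-\sum_jM_{sj}\widehat H_{sj}\ge0$ is the empirical information gain from splitting.

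The step I expect to be the main obstacle is precisely the control of $I_n$ in this over-fitting case: unlike the internal-node analysis, where a linear-in-$M_s$ penalty trivially dominates all fluctuations, here the model-complexity penalty is only of order $\log M_s$, so one must establish that the reward $I_n$ is $o(\log M_s)$ \emph{almost surely}. This is the delicate $\log$-versus-$\log\log$ margin typical of consistency-against-overfitting arguments: a law-of-the-iterated-logarithm bound (or a Borel--Cantelli argument on the relevant sums) gives $I_n=O(\log\log M_s)$ a.s., so the $\tfrac{(m-1)^2}{2}\log M_s$ penalty wins and $R_s\to0$. Granting this fluctuation bound, and noting that all estimates hold simultaneously at the finitely many nodes of $T^*$, the finite product~(\ref{post_alt_expr}) converges a.s.\ to $1$, which is the assertion of the theorem.
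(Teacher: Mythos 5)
Your proposal is correct and follows essentially the same strategy as the paper's proof: starting from the product representation (\ref{post_alt_expr}), you reduce the theorem to showing $P_{b,s}\to 0$ a.s.\ at internal nodes and $P_{b,s}\to 1$ a.s.\ at leaves (the paper's Lemmas~\ref{internal} and~\ref{leaf}), drive the internal-node case by a strictly positive information gap at linear rate in $n$, and drive the leaf case by the $\tfrac{(m-1)^2}{2}\log n$ Krichevsky--Trofimov penalty against an $O(\log\log n)$ fluctuation term. You correctly single out the control of the empirical information gain $I_n$ as the delicate step; this is exactly where the paper invokes the LIL for Markov chains, via a $\chi^2$ bound on the relative entropies $D(\widehat p_{sj}\|\pi_{sj})$. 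The one place where your route genuinely differs is the treatment of internal nodes that are not parents of leaves: the paper works with the one-step mutual information $I(X;J|s)$, which can vanish at such nodes (cf.\ the ``bimodal posterior'' example in which $X_n$ depends only on $X_{n-3}$), and therefore needs a separate iterative lemma propagating the decay of $P_{b,s}$ from a descendant with $I(X;J|s)>0$ up to its ancestors. You instead lower-bound $\prod_j P_{w,sj}$ by the product of estimated probabilities over all leaves of $T^*$ below $s$ (the CTW redundancy bound) and compare against the full leaf refinement, whose conditional information gap is strictly positive at \emph{every} internal node of a minimal, positive-ergodic model, since some sibling $m$-tuple of leaves below $s$ must contain two distinct parameter vectors of positive stationary weight. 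This handles all internal nodes uniformly and avoids the propagation argument, at the harmless cost of a constant depending on the (fixed, finite) subtree size. Both arguments are sound; yours is slightly more self-contained here, while the paper's one-step formulation is the one that transfers directly to the out-of-class setting of Theorem~\ref{t:out}, where the limiting tree is defined through the quantities $I(X;J|s)$.
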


\begin{proof}
Recalling the posterior representation in~(\ref{post_alt_expr}),
it suffices to show that, as $n\to\infty$,
$P_{b,s} \to 0 $ a.s.\ for all internal nodes of $T^*$, and $P_{b,s} \to 1$ 
a.s.\ for all leaves of $T^*$. These two claims are established in 
Lemmas~\ref{internal} and~\ref{leaf} below.
\end{proof}

We first recall the following simple bounds on the estimated 
probabilities $P_{e,s}$; see
\cite{krichevsky1981performance,xie-barron:00}
and \cite[Ch.~1]{catoni:04}.

\begin{lem}
\label{lem:KT}
For every node $s$ with count vector $a_s=(a_s(0),a_s(1),\ldots,a_s(m-1))$
and $M_s=a_s(0)+\cdots+a_s(m-1)$, 
the estimated probabilities $P_{e,s}$ of~{\em (\ref{eq:Pe})}
satisfy:
\begin{align}
\log P_{e,s}\geq&
\sum_{j=0}^{m-1}a_s(j)\log\frac{a_s(j)}{M_s}
-\frac{m-1}{2}\log M_s-\log m;
\label{eq:PeLB}\\
\log P_{e,s}\leq&
\sum_{j=0}^{m-1}a_s(j)\log\frac{a_s(j)}{M_s}
-\frac{m-1}{2}\log\frac{M_s}{2\pi}-\log\frac{\pi^{m/2}}{\Gamma(m/2)}.
\label{eq:PeUB}
\end{align}
\end{lem}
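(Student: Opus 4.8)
The plan is to put $P_{e,s}$ into closed form and reduce both inequalities to a single two-sided estimate on a ``redundancy'' quantity. Using $(1/2)(3/2)\cdots(a_s(j)-1/2)=\Gamma(a_s(j)+1/2)/\Gamma(1/2)$ (an empty product, equal to $1$, when $a_s(j)=0$), the identity $(m/2)(m/2+1)\cdots(m/2+M_s-1)=\Gamma(M_s+m/2)/\Gamma(m/2)$, and $\Gamma(1/2)=\sqrt\pi$, definition~(\ref{eq:Pe}) rewrites as
\[
P_{e,s}=\frac{\Gamma(m/2)}{\pi^{m/2}}\cdot\frac{\prod_{j=0}^{m-1}\Gamma(a_s(j)+1/2)}{\Gamma(M_s+m/2)},
\]
which incidentally exhibits $P_{e,s}$ as the Dirichlet$(1/2,\ldots,1/2)$-averaged i.i.d.\ likelihood of the counts, matching the prior~(\ref{prior_theta}). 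Abbreviating $a_j=a_s(j)$ and writing $\log P_{e,s}=\log(\Gamma(m/2)/\pi^{m/2})+\sum_j a_j\log(a_j/M_s)+D_s$ with
\[
D_s:=\log\frac{\prod_{j}\Gamma(a_j+1/2)}{\Gamma(M_s+m/2)}-\sum_{j}a_j\log\frac{a_j}{M_s},
\]
a direct substitution shows (the prefactor cancelling in one case but not the other) that~(\ref{eq:PeLB}) and~(\ref{eq:PeUB}) are equivalent, respectively, to the lower and upper halves of the sandwich
\[
-\tfrac{m-1}{2}\log M_s-\log m-\log\tfrac{\Gamma(m/2)}{\pi^{m/2}}\ \le\ D_s\ \le\ -\tfrac{m-1}{2}\log\tfrac{M_s}{2\pi}.
\]
Hence it suffices to bound the redundancy $D_s$ from both sides (with the standing assumption $M_s\ge1$, without which the bounds are vacuous).

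Next I would feed the nonasymptotic Stirling expansion $\log\Gamma(x)=(x-\tfrac12)\log x-x+\tfrac12\log(2\pi)+r(x)$, with $0\le r(x)\le 1/(12x)$ (valid for every argument here, since each is $\ge1/2$), into $D_s$, applying it to each $\Gamma(a_j+1/2)$ and to $\Gamma(M_s+m/2)$. The linear terms $-(a_j+\tfrac12)$ sum to $-(M_s+m/2)$ and cancel against the corresponding contribution of the denominator; the $\tfrac12\log(2\pi)$ constants combine into $\tfrac{m-1}{2}\log(2\pi)$; and the half-integer shift in the denominator produces the target logarithm. Collecting terms and using $-\sum_j a_j\log(a_j/M_s)=-\sum_j a_j\log a_j+M_s\log M_s$ gives
\[
D_s=-\tfrac{m-1}{2}\log\!\big(M_s+\tfrac{m}{2}\big)+\tfrac{m-1}{2}\log(2\pi)+\sum_{j}a_j\log\!\Big(1+\tfrac{1}{2a_j}\Big)-M_s\log\!\Big(1+\tfrac{m}{2M_s}\Big)+R,
\]
where $R=\sum_j r(a_j+\tfrac12)-r(M_s+m/2)$, with the convention that the $a_j=0$ terms in the sum vanish. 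The two logarithmic error terms are controlled by $0\le a_j\log(1+\tfrac{1}{2a_j})\le\tfrac12$ and $0<M_s\log(1+\tfrac{m}{2M_s})\le\tfrac{m}{2}$, so the leading behaviour is $-\tfrac{m-1}{2}\log M_s+\tfrac{m-1}{2}\log(2\pi)$, already matching the upper half of the sandwich; for the lower half the remaining corrections are absorbed into the looser constant $-\log m-\log(\Gamma(m/2)/\pi^{m/2})$.

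The routine part is the two displays above; the delicate part, which I would carry out last, is the exact accounting of the $O(1)$ constants, because the upper bound is in fact asymptotically tight (as $M_s\to\infty$ with a fixed empirical distribution, $\sum_j a_j\log(1+\tfrac{1}{2a_j})$ and $M_s\log(1+\tfrac{m}{2M_s})$ both tend to $m/2$ and nearly cancel). Two points need care: first, a zero-count symbol contributes $\Gamma(1/2)=\sqrt\pi$ to the numerator but nothing to the empirical-entropy term, so its Stirling residue must be folded into the constants rather than into the $a_j\log a_j$ terms; second, the constants $\log m$ and $\tfrac{m-1}{2}\log(2\pi)$, together with the prefactor $\log(\Gamma(m/2)/\pi^{m/2})$, must be shown to dominate the accumulated corrections for \emph{every} admissible count vector, including the extremal cases (near-uniform counts for the upper bound, a single nonzero coordinate for the lower one). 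This sharp-constant verification is precisely the estimate established in the cited Krichevsky--Trofimov analyses~\citep{krichevsky1981performance,xie-barron:00} and in~\citep[Ch.~1]{catoni:04}, to which the final bookkeeping can be reduced.
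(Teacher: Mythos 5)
The paper does not actually prove this lemma: it is ``recalled'' with citations to \cite{krichevsky1981performance}, \cite{xie-barron:00} and \cite[Ch.~1]{catoni:04}, so there is no in-paper argument to compare yours against. Your reduction is the standard route those references take, and the algebra checks out: the Gamma-function form $P_{e,s}=\frac{\Gamma(m/2)}{\pi^{m/2}}\prod_j\Gamma(a_j+1/2)/\Gamma(M_s+m/2)$ is correct, the equivalence of \eqref{eq:PeLB}--\eqref{eq:PeUB} with the two-sided bound on $D_s$ is right (including the observation that the prefactor cancels only in the upper bound), and the Stirling expansion of $D_s$ you display is exact. The one place to be careful is your third paragraph: the crude term-by-term bounds $0\le a_j\log(1+\tfrac{1}{2a_j})\le\tfrac12$ and $M_s\log(1+\tfrac{m}{2M_s})\le\tfrac{m}{2}$ do \emph{not} by themselves yield the upper half of the sandwich, because the available slack $\tfrac{m-1}{2}\log(1+\tfrac{m}{2M_s})$ (from replacing $\log(M_s+\tfrac{m}{2})$ by $\log M_s$) vanishes as $M_s\to\infty$ while the residue $R$ need not; one genuinely needs, e.g., concavity of $x\mapsto x\log(1+\tfrac{1}{2x})$ and Jensen to get $\sum_j a_j\log(1+\tfrac{1}{2a_j})\le M_s\log(1+\tfrac{m}{2M_s})$, and then a case analysis showing the Jensen gap plus the slack dominates $R$ for every count vector. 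You explicitly identify this as the delicate step and defer it to the same Krichevsky--Trofimov/Xie--Barron/Catoni analyses the paper cites, so relative to the paper's own treatment there is no gap; just be aware that the ``routine'' displays alone do not close the upper bound.
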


[Throughout the paper, $\log\equiv\log_e$ denotes the natural logarithm.]
Lemma~\ref{internal} 
is a generalisation of 
\cite[Lemma~12]{jiao2013universal}.

\begin{lem}\label{internal}
Under the assumptions of Theorem~\ref{post_conc_thm}, 
for every internal node $s$ of $T^*$,
the branching probability
$P_{b,s} \to 0 $, a.s., as $n \to \infty$.
\end{lem}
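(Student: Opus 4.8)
The plan is to bound the branching probability at $s$ by a ratio of estimated probabilities --- the single Krichevsky--Trofimov (KT) estimator at $s$ against the product of KT estimators at the leaves of $T^*$ lying below $s$ --- and to show that this ratio decays at an exponential rate governed by a strictly positive empirical mutual information. Let $\mathcal{L}_s$ denote the set of leaves of $T^*$ that are descendants of $s$. Since $s$ is internal, the CTW recursion~(\ref{pws}) gives $P_{w,s}\ge(1-\beta)\prod_{j}P_{w,sj}$, and unwinding it all the way down to $\mathcal{L}_s$ --- using $P_{w,u}\ge\beta P_{e,u}$ at each leaf $u$ of $T^*$ and a factor $(1-\beta)$ at each internal node passed through --- a short upward induction yields a constant $\gamma_s>0$, depending only on $\beta$ and the shape of the subtree, with $P_{w,s}\ge\gamma_s\prod_{u\in\mathcal{L}_s}P_{e,u}$. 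Combined with the definition~(\ref{pbs}) this gives
\[
P_{b,s}=\frac{\beta P_{e,s}}{P_{w,s}}\ \le\ \frac{\beta}{\gamma_s}\cdot\frac{P_{e,s}}{\prod_{u\in\mathcal{L}_s}P_{e,u}},
\]
so it suffices to show that the ratio on the right tends to $0$ almost surely.

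Next I would take logarithms and insert the KT bounds of Lemma~\ref{lem:KT}: the upper bound~(\ref{eq:PeUB}) on $\log P_{e,s}$ and the lower bound~(\ref{eq:PeLB}) on each $\log P_{e,u}$. Writing $\hat p_s(j)=a_s(j)/M_s$ for the empirical next-symbol distribution at $s$ and $\hat p_u$ for the analogous distribution at $u$, and using the bookkeeping identities $M_s=\sum_{u\in\mathcal{L}_s}M_u$ and $a_s=\sum_{u\in\mathcal{L}_s}a_u$ (each occurrence of context $s$ refines to a unique leaf $u\in\mathcal{L}_s$), the leading entropy terms combine into
\[
-M_sH(\hat p_s)+\sum_{u\in\mathcal{L}_s}M_uH(\hat p_u)=-M_s\,\hat J_s,
\]
where $\hat J_s:=H(\hat p_s)-\sum_{u}(M_u/M_s)H(\hat p_u)\ge 0$ is the empirical Jensen gap, equivalently the empirical mutual information between the refining symbol and the next symbol given context $s$. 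All remaining terms are $O(\log M_s)=O(\log n)$, since $M_u\le M_s\le n$ and $|\mathcal{L}_s|$ is bounded, so that $\log\big(P_{e,s}/\prod_{u}P_{e,u}\big)\le -M_s\hat J_s+O(\log n)$.

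Finally, positive-ergodicity lets me pass to the limit. By the ergodic theorem $M_s/n\to\pi(s)>0$ and $\hat p_u\to\theta_u$ a.s., so the weights $M_u/M_s\to\pi(u)/\pi(s)>0$ and, by continuity of $H$, $\hat J_s\to J_s^*$ a.s., where $J_s^*$ is the mutual information of the limiting distribution. The crux is that $J_s^*>0$: it could vanish only if all leaf parameters $\{\theta_u:u\in\mathcal{L}_s\}$ coincided, but then the whole subtree below $s$ could be collapsed to a single leaf without changing the law of the chain, contradicting minimality of $T^*$ --- concretely, a deepest internal node beneath $s$ all of whose children are leaves would then carry $m$ sibling leaves with identical parameters, which the minimality definition forbids. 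Since $M_s\sim n\pi(s)\to\infty$, the linear term $-M_s\hat J_s\to-\infty$ dominates the $O(\log n)$ correction, the log-ratio tends to $-\infty$, and hence $P_{b,s}\to 0$ a.s.

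I expect the main obstacle to be exactly this positivity $J_s^*>0$. The tempting shortcut of comparing $P_{e,s}$ only to the \emph{immediate} children $P_{e,sj}$ fails, because at an internal node the marginalised conditional distributions of the children can coincide while deeper refinements still matter; one must therefore descend to the leaves of $T^*$, where minimality guarantees non-identical parameters, and retain the full recursive lower bound on $P_{w,s}$ down to $\mathcal{L}_s$. The only secondary point is the routine verification that the logarithmic KT corrections are genuinely negligible against the linear-in-$n$ entropy term.
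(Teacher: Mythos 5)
Your proof is correct, but it takes a genuinely different route from the paper's. The paper compares $P_{e,s}$ only to the product over the \emph{immediate} children, $\prod_j P_{e,sj}$, and derives the expansion $\log P_{e,s}-\sum_j\log P_{e,sj}=-nI(X;J|s)\pi(s)+o(n)$, where $J$ is the single preceding symbol; this settles the lemma directly only for internal nodes with $I(X;J|s)>0$ (in particular parents of leaves, where minimality forces positivity), and the remaining internal nodes with $I(X;J|s)=0$ are handled by a separate chaining argument (Lemma~\ref{ind_inner} of the supplement) that propagates the exponential decay of $P_{b,\cdot}$ upward from a descendant with positive mutual information. You instead lower-bound $P_{w,s}$ by unwinding the recursion~(\ref{pws}) all the way to the leaves $\mathcal{L}_s$ of $T^*$ below $s$, and compare $P_{e,s}$ to $\prod_{u\in\mathcal{L}_s}P_{e,u}$; the relevant information quantity is then the mutual information between the next symbol and the full refining \emph{leaf} identity, which minimality makes strictly positive for \emph{every} internal node (via your deepest-internal-node observation), so no iteration is needed. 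Your diagnosis of why the immediate-children comparison alone is insufficient is exactly right, and your bookkeeping ($a_s=\sum_{u\in\mathcal{L}_s}a_u$, the Jensen gap, the $O(\log n)$ KT corrections, and the ergodic limits) all checks out. What each approach buys: yours is more self-contained and makes the role of minimality maximally transparent for this lemma; the paper's node-by-node decomposition via $I(X;J|s)$ is more modular for the rest of the paper, since the single-generation expansion~(\ref{mutual_info}) and the chaining lemma are reused to prove the convergence rate in Theorem~\ref{t:postrate} and to identify the limit model $T_\infty$ in the out-of-class setting, where $I(X;J|s)$ is precisely the quantity defining $T_\infty$.
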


\begin{proof}
As the complete proof is quite involved, 
only the main and more interesting part of the
argument is given here, 
with the remaining details given in Section B.1 
of the supplementary~material. 
We begin by observing that,
\begin{align}
P_{b,s} = \frac{\beta P_{e,s}}{P_{w,s}} = \frac{\beta P_{e,s}}{\beta P_{e,s}+ (1-\beta ) \prod _j P_{w,sj}} &= \frac{1}{1+{(1-\beta)}/{\beta} \ { \prod _j P_{w,sj}}/ {P_{e,s}}} \label {pbs_frac} \\ 
&\leq \frac{1}{1+ c_0 \ { \prod _j P_{e,sj}}/ {P_{e,s}}},
\end{align}
for some constant $c_0$,
where in the last step we used that either
$P_{w,sj} \ge \beta P_{e,sj}$ or $P_{w,sj} = P_{e,sj}$.
Therefore, it suffices to show that
$P_{e,s} \big / \prod _ j P_{e,sj} \to 0$, a.s.,
as $n\to\infty$.

\smallskip

Let $s$ be a fixed finite context, and let $X$ and $J$ denote the
random variables corresponding to the symbols that follow
and precede $s$, respectively, under the stationary distribution
of $\{X_n\}$.
Using Lemma~\ref{lem:KT} and the ergodic theorem for Markov chains,
it is shown in Section B.1 
of the supplementary material that,
\begin{equation} \label{mutual_info}
\log P_{e,s} - \sum _ j \log P_{e,sj} = - n I(X;J | s) \pi(s) + o(n), 
\quad \text{a.s.},
\end{equation}
where $I(X;J | s)$ is the conditional mutual information 
between $X$ and $J$ given $s$ \citep[Ch.~2]{cover1999elements}.
This mutual information is always nonnegative, and it is zero 
if and only if
$X$ and $J$ are conditionally independent given $s$.

For any internal node $s$ that is a parent of leaves of $T^*$, 
the minimality of $T^*$ implies that $X$ and $J$ are not 
conditionally independent given $s$, as there exist $j\neq j'$ such 
that~$\theta_{sj}\neq\theta_{sj'}$, so that $\theta_{sj}$ depends on $j$. 
Therefore, $I(X;J | s)>0$ and $\pi(s) >0 $ by assumption, 
so (\ref{mutual_info}) implies that $  \log P_{e,s} - \sum _ j \log P_{e,sj}  
\to - \infty$, a.s., as required.

\smallskip

For the general case of internal nodes that may not be parents of leaves,
a simple iterative argument is given in Section~B.1 of supplementary material
establishing the same result in that case as well, and 
completing the proof of the lemma.
\end{proof}

\begin{lem}\label {leaf}
Under the assumptions of Theorem~\ref{post_conc_thm}, 
for every leaf $s$ of $T^*$,
the branching probability $P_{b,s} \to 1 $, a.s.,
 as $n \to \infty$.
\end{lem}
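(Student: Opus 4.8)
The plan is to mirror the argument used for Lemma~\ref{internal}, but now to extract the second-order (description-length) term, since for a leaf the leading mutual-information term vanishes. First I would dispose of the trivial case of leaves at depth $D$, where there is no branching decision and $P_{b,s}=1$ by convention, and fix a leaf $s$ of $T^*$ with $|s|<D$. Exactly as in~(\ref{pbs_frac}), one has $P_{b,s}=\big(1+\tfrac{1-\beta}{\beta}\,\prod_j P_{w,sj}/P_{e,s}\big)^{-1}$, so it suffices to show that $\prod_j P_{w,sj}/P_{e,s}\to 0$ a.s. The key structural input is that, since $s$ is a leaf of $T^*$, every context below $s$ carries the same parameter vector $\theta_s$; hence, conditioned on being in context $s$, the data are memoryless, $I(X;J\mid s)=0$, and the analogous identities hold at every descendant of $s$.

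Next I would bound the numerator using the branching-process identity of Proposition~\ref{lem:branch_prior}. Expanding the CTW recursion~(\ref{pws}) shows that $P_{w,sj}=\sum_{U_j}\pi(U_j)\prod_{\ell\in U_j}P_{e,\ell}$ is a convex combination over proper subtrees $U_j$ rooted at $sj$ whose weights $\pi(U_j)$ sum to one by Proposition~\ref{lem:branch_prior}, whence $P_{w,sj}\le\max_{U_j}\prod_{\ell\in U_j}P_{e,\ell}$. Gluing the children together gives $\sum_j\log P_{w,sj}\le\max_{U}\sum_{\ell\in U}\log P_{e,\ell}$, where $U$ ranges over the finitely many proper subtrees rooted at $s$ that split at $s$, so that $|U|\ge m$. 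Since there are only finitely many such shapes, independent of $n$, it then suffices to show, for each fixed such $U$, that $\sum_{\ell\in U}\log P_{e,\ell}-\log P_{e,s}\to-\infty$ a.s.

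For a fixed $U$ I would combine the upper bound~(\ref{eq:PeUB}) on each $\log P_{e,\ell}$ with the lower bound~(\ref{eq:PeLB}) on $\log P_{e,s}$. The empirical-entropy parts combine into $M_s H(\hat{p}_s)-\sum_{\ell\in U}M_\ell H(\hat{p}_\ell)\ge 0$, a sum of empirical conditional mutual informations each with population value $0$; the logarithmic parts contribute $-\tfrac{m-1}{2}(|U|-1)\log n+O(1)$, using $M_\ell\sim\pi(\ell)n$ with $\pi(\ell)>0$ by positive-ergodicity (which also guarantees that all relevant contexts are eventually visited). Thus the claim reduces to showing that the nonnegative empirical-MI term is $o(\log n)$ a.s., so that the strictly negative penalty $-\tfrac{m-1}{2}(|U|-1)\log n$ dominates.

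This last point is the main obstacle: the overfitting gain from splitting within the structureless region below $s$ must be shown to grow strictly slower than the KT penalty. I expect to control it by a second-order Taylor expansion of the empirical conditional entropies around $\theta_s$, reducing the gain to a quadratic form in the centred frequencies $\sqrt{M_\ell}\,(\hat{p}_\ell-\theta_s)$; by the law of the iterated logarithm for the underlying ergodic averages these deviations are $O(\sqrt{\log\log n})$ a.s., making the total gain $O(\log\log n)=o(\log n)$ a.s. Taking the maximum over the finite family of subtrees $U$ then yields $\sum_j\log P_{w,sj}-\log P_{e,s}\to-\infty$ a.s., hence $\prod_j P_{w,sj}/P_{e,s}\to 0$ and $P_{b,s}\to1$, as required.
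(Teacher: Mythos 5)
Your proposal is correct, and its overall architecture --- reduce to $\prod_j P_{w,sj}/P_{e,s}\to 0$ via (\ref{pbs_frac}), then play an $O(\log\log n)$ overfitting gain below the structureless leaf against a $\tfrac{m-1}{2}(|U|-1)\log n\ge\tfrac{(m-1)^2}{2}\log n$ penalty extracted from Lemma~\ref{lem:KT} and the LIL --- is the same as the paper's. Where you genuinely diverge is in how the weighted probabilities are controlled. The paper first proves the one-level estimate $\sum_j\log P_{e,sj}-\log P_{e,s}\le-\tfrac{(m-1)^2}{2}\log n+O(\log\log n)$ as in (\ref{47}), and then lifts it from $P_{e,sj}$ to $P_{w,sj}$ by a separate induction on depth (Proposition~\ref{ind_leaves} of the supplement), factoring $\prod_jP_{w,sj}/P_{e,s}$ as $\beta^m\prod_j\bigl(1+\tfrac{1-\beta}{\beta}\prod_tP_{w,sjt}/P_{e,sj}\bigr)\cdot\prod_jP_{e,sj}/P_{e,s}$. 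You instead unroll the recursion (\ref{pws}) into the CTW mixture identity $P_{w,sj}=\sum_{U_j}\pi(U_j)\prod_{\ell\in U_j}P_{e,\ell}$, with weights summing to one precisely by Proposition~\ref{lem:branch_prior}, bound the mixture by its largest term, and reduce everything to a single estimate that holds uniformly over the finitely many ($n$-independent) subtree shapes $U$ rooted at $s$ with $|U|\ge m$. Both routes are sound; yours handles all depths in one stroke, makes the origin of the $\tfrac{m-1}{2}(|U|-1)\log n$ penalty transparent, and in fact directly yields the polynomial rate reused in Theorem~\ref{t:postrate}, while the paper's induction confines the probabilistic work to the single comparison of $s$ with its $m$ children. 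Your second-order analysis (Taylor expansion of the empirical conditional entropies about $\theta_s$, LIL deviations of order $\sqrt{\log\log n/n}$, hence an $O(\log\log n)$ gain) is the same in substance as the paper's bound $\sum_\ell M_\ell D(\hat p_\ell\|\pi_\ell)=O(\log\log n)$ obtained via the $\chi^2$ upper bound on relative entropy. Two details worth making explicit in a write-up: the identity $M_sH(\hat p_s)-\sum_{\ell\in U}M_\ell H(\hat p_\ell)=\sum_{\ell\in U}M_\ell D(\hat p_\ell\|\hat p_s)\ge 0$ relies on $a_s(i)=\sum_{\ell\in U}a_\ell(i)$, which holds only up to an $O(1)$ boundary discrepancy (harmless); and the Taylor expansion needs $\hat p_\ell\to\theta_s$ with $\theta_s$ in the interior of the simplex, which is exactly what positive-ergodicity supplies.
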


\begin{proof}
As with the previous lemma, we only give an outline of the main
interesting steps in the proof here; complete details are provided
in Section~B.2 of the supplementary material.

\smallskip

By definition, for any leaf $s$ of $T^*$ (and also for any `external'
node $s$, that is, any context~$s$ not in $T^*$),
we have $I(X;J | s)=0$ because of conditional independence.
Therefore, we need to consider the higher-order terms in the 
asymptotic expansion of~(\ref{mutual_info}).
Using Lemma~\ref{lem:KT}, the ergodic theorem, and the law of 
the iterated logarithm~(LIL) for Markov chains, it is shown 
in Section~B.2 of the supplementary material that here,
\begin{equation} \label{47}
\sum _ j \log P_{e,sj} - \log P_{e,s} \leq - \frac {(m-1)^2} {2} \log n + O(\log \log n), \quad \text {a.s.}
\end{equation}
This implies that $\sum _ j \log P_{e,sj} - \log P_{e,s} \to - \infty$, 
so that  $\prod _j P_{e,sj}/ {P_{e,s}} \to 0$, a.s.

\smallskip

The last step of the proof, namely that for any leaf $s$ of $T^*$,
$\prod _j P_{e,sj}/ {P_{e,s}} \to 0$ also implies that  
$\prod _j P_{w,sj}/ {P_{e,s}} \to 0$, a.s., 
so that, by~(\ref{pbs_frac}), $P_{b,s}\to 1$, a.s., 
is given in Section~B.2 of the 
supplementary material.
\end{proof}

Our next result is a refinement of Theorem~\ref{post_conc_thm}, 
which characterises the rate at which the posterior
probability of $T^*$ converges to~1.

\begin{thm}
\label{t:postrate}
Let $X_{-D+1}^n$ be a time series generated by 
a positive-ergodic
variable-memory chain 
$\{ X_n \}$ 
with minimal model $T^*\in\mathcal {T}(D)$. For any 
value of the prior hyperparameter $\beta\in(0,1)$
and any $\epsilon>0$, we have, as $n\to\infty$:
\begin{equation*}
\pi (T ^ * | X_{-D+1}^n ) = 1 - O\left (  n^{-\frac{(m-1)^2}{2}+\epsilon} \right ), 
\quad \text {a.s.}
\end{equation*}
\end{thm}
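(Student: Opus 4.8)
The plan is to start from the exact posterior formula~(\ref{post_alt_expr}) and bound the deficit $1-\pi(T^*|X_{-D+1}^n)$ by a sum of the individual node contributions. Writing the product in~(\ref{post_alt_expr}) over the finitely many nodes of the fixed tree $T^*$, and using the elementary inequality $1-\prod_i p_i\le\sum_i(1-p_i)$ valid for $p_i\in[0,1]$, I would obtain
\begin{equation*}
1-\pi(T^*|X_{-D+1}^n)\;\le\;\sum_{s\in T^*_o}P_{b,s}\;+\!\!\sum_{\substack{s\in T^*\\ s\text{ a leaf}}}\!\!(1-P_{b,s}).
\end{equation*}
Since $T^*$ has only finitely many nodes, it then suffices to establish the claimed rate for each summand separately; the slowest of them will dictate the overall rate.

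For the internal nodes the contribution is negligible. The argument of Lemma~\ref{internal} already shows, via~(\ref{mutual_info}), that for every internal node $s$ of $T^*$ one has $\log P_{e,s}-\sum_j\log P_{e,sj}=-nI(X;J|s)\pi(s)+o(n)$ with $I(X;J|s)\pi(s)>0$, so that $P_{e,s}/\prod_jP_{e,sj}\to0$ at an exponential rate $e^{-cn}$ for some $c>0$. Combining this with the bound $P_{b,s}\le\big(1+c_0\prod_jP_{e,sj}/P_{e,s}\big)^{-1}$ from~(\ref{pbs_frac}), I would conclude $P_{b,s}=O(e^{-cn})$ a.s., which is far smaller than any polynomial rate and hence does not affect the final exponent. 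For internal nodes that are not parents of leaves one applies the same iterative reduction used in Lemma~\ref{internal}.

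The leaf terms carry the actual rate. For a leaf $s$ of $T^*$, (\ref{pbs_frac}) gives $1-P_{b,s}=O\big(\prod_jP_{w,sj}/P_{e,s}\big)$, since this ratio tends to $0$. The heart of the matter is the sharp estimate~(\ref{47}), which asserts that $\sum_j\log P_{e,sj}-\log P_{e,s}\le-\tfrac{(m-1)^2}{2}\log n+O(\log\log n)$ a.s.; exponentiating and absorbing the $O(\log\log n)$ fluctuation (produced by the law of the iterated logarithm) into an arbitrarily small polynomial factor, this yields $\prod_jP_{e,sj}/P_{e,s}=O\big(n^{-(m-1)^2/2+\epsilon}\big)$ a.s., for every $\epsilon>0$. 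The remaining ingredient is to pass from the estimated probabilities $P_{e,sj}$ to the weighted probabilities $P_{w,sj}$, showing $\prod_jP_{w,sj}\le C\prod_jP_{e,sj}$ up to a further $n^\epsilon$ factor, whence $1-P_{b,s}=O\big(n^{-(m-1)^2/2+\epsilon}\big)$ a.s.

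I expect this last step to be the main obstacle. Since $s$ is a leaf of $T^*$, every node in the subtree of $T_{\text{MAX}}$ rooted below $s$ is external to $T^*$, so conditional independence holds at each such node and the redundancy-type estimate~(\ref{47}) applies at every level. The task is to propagate these estimates down through the bounded-depth, hence finite, subtree below $s$, using the recursion $P_{w,sj}=\beta P_{e,sj}+(1-\beta)\prod_kP_{w,sjk}$ to show that the deeper weighted terms form a summable, lower-order correction to the leading term $\beta P_{e,sj}$; care is needed to make the relevant estimates hold simultaneously and almost surely for all of the finitely many descendant nodes and for all large $n$, which is where uniform control of the LIL fluctuations together with a Borel--Cantelli argument enter. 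Finally, collecting the finitely many leaf contributions, each $O\big(n^{-(m-1)^2/2+\epsilon}\big)$, and the exponentially smaller internal contributions yields the stated bound.
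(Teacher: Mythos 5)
Your proposal is correct and follows essentially the same route as the paper: it starts from the product representation~(\ref{post_alt_expr}), gets exponentially small $P_{b,s}$ at internal nodes from~(\ref{mutual_info}) (extended by the iterative argument of Lemma~\ref{internal}), obtains the polynomial rate $n^{-(m-1)^2/2+\epsilon}$ at leaves from~(\ref{47}), and handles the passage from $P_{e,sj}$ to $P_{w,sj}$ by exactly the induction on the depth of the (finite) subtree below each leaf that the paper carries out in its supplementary material; the only cosmetic difference is your use of $1-\prod_i p_i\le\sum_i(1-p_i)$ where the paper substitutes the node-wise bounds directly into the product, and your worry about uniformity is unnecessary since finitely many almost-sure statements hold simultaneously almost surely.
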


The proof of Theorem~\ref{t:postrate} is given in Section~B.3 of the 
supplementary material. In fact, as discussed at the end of Section~B.3,
the proof also reveals that a stronger statement
can be made about the rate in the case of full $D$th order Markov chains:

\begin{cor}
If $\{X_n\}$ is a genuinely $D$th order chain in that its minimal
model $T^*$ is the complete tree of depth $D$, then its posterior
probability $\pi(T^*|X_{-D+1}^n)$ almost surely
converges to~1 at an exponential rate.
\end{cor}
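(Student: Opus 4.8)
The plan is to exploit the fact that, for a complete tree $T^*$ of depth $D$, every leaf of $T^*$ sits at maximal depth $D$, where the convention $P_{b,s}=1$ applies. Consequently the general posterior formula (\ref{post_alt_expr}) collapses to a product over internal nodes only,
\[
\pi(T^*|X_{-D+1}^n)=\prod_{s\in T^*_o}\bigl(1-P_{b,s}\bigr),
\]
where $T^*_o$ is the (finite) set of all nodes at depths $0,1,\ldots,D-1$. The key point is that this eliminates the leaf factors responsible for the slow, polynomial $n^{-(m-1)^2/2}$ rate of Theorem~\ref{t:postrate}: that rate arose in Lemma~\ref{leaf} from leaves of $T^*$ lying strictly below depth $D$, and there are none here. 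It therefore suffices to show that each of the finitely many factors $1-P_{b,s}$ tends to $1$ at an exponential rate, i.e. that $P_{b,s}\to 0$ exponentially fast a.s. for every internal node $s$.

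To obtain an exponential rate for a general internal node $s$ (not merely a parent of leaves), I would bound $P_{b,s}$ by comparing $P_{e,s}$ with the product of the estimated probabilities over the entire subtree of $T^*$ rooted at $s$, rather than via the single-step expansion (\ref{mutual_info}). Writing $u$ for the leaves (depth-$D$ descendants) below $s$, repeated application of the weighting recursion (\ref{pws}), using $P_{w,v}\ge(1-\beta)\prod_k P_{w,vk}$ at internal nodes and $P_{w,u}=P_{e,u}$ at leaves, gives $\prod_j P_{w,sj}\ge c\prod_u P_{e,u}$ for a constant $c=c(\beta,m,D)>0$. Combined with $P_{w,s}\ge(1-\beta)\prod_j P_{w,sj}$ and the identity in (\ref{pbs_frac}), this yields
\[
P_{b,s}=\frac{\beta P_{e,s}}{P_{w,s}}\le \frac{\beta}{(1-\beta)\,c}\cdot\frac{P_{e,s}}{\prod_u P_{e,u}},
\]
so everything reduces to showing that $P_{e,s}\big/\prod_u P_{e,u}\to 0$ exponentially fast a.s.

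For this, I would apply the Krichevsky--Trofimov bounds of Lemma~\ref{lem:KT} to both $\log P_{e,s}$ and each $\log P_{e,u}$, discarding the $O(\log n)$ penalty terms, and then invoke the ergodic theorem exactly as in the derivation of (\ref{mutual_info}). The empirical-entropy leading terms converge, after dividing by $n$, to conditional entropies, giving
\[
\frac1n\Bigl(\log P_{e,s}-\sum_u\log P_{e,u}\Bigr)\longrightarrow -\,\pi(s)\,I(X;W\mid s),\quad\text{a.s.},
\]
where $W$ denotes the deeper past symbols refining $s$ down to depth $D$ and $I(X;W\mid s)\ge 0$ is the corresponding conditional mutual information. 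The crucial structural observation is that this mutual information is strictly positive for every internal $s$: if it vanished, then $X$ would be conditionally independent of $W$ given $s$, forcing all leaves $u$ below $s$ to share identical parameters, contradicting the minimality of $T^*$ (this would violate minimality at any depth-$(D-1)$ ancestor of those leaves). Since the chain is positive-ergodic, $\pi(s)>0$, so $\gamma_s:=\pi(s)\,I(X;W\mid s)>0$, and hence $P_{b,s}=O(e^{-(\gamma_s-\epsilon)n})$ a.s. for every $\epsilon>0$.

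Finally, because $T^*_o$ is finite, I would combine these estimates through the elementary inequality $1-\prod_{s\in T^*_o}(1-P_{b,s})\le\sum_{s\in T^*_o}P_{b,s}$, which gives $1-\pi(T^*|X_{-D+1}^n)=O(e^{-\gamma n})$ a.s. with $\gamma$ any constant smaller than $\min_{s\in T^*_o}\pi(s)\,I(X;W\mid s)>0$. The main obstacle is the third step: establishing strict positivity of the \emph{aggregated} conditional mutual information at deep-interior nodes (where the single-step quantity $I(X;J\mid s)$ in (\ref{mutual_info}) may legitimately vanish) and tying it cleanly to minimality, together with justifying the subtree-aggregated information expansion via the ergodic theorem. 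Once that is in place, the exponential rate and the union bound are routine.
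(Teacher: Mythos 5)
Your proposal is correct, and it reaches the conclusion by a genuinely different route from the paper at the one step that actually requires work. The opening move is the same: since every leaf of the complete tree sits at depth $D$, the convention $P_{b,s}=1$ kills all leaf factors in~(\ref{post_alt_expr}), the polynomial rate of Theorem~\ref{t:postrate} (which comes entirely from leaves at depth $<D$ via Lemma~\ref{leaf}) disappears, and the problem reduces to exponential decay of $P_{b,s}$ for the finitely many internal nodes. Where you diverge is in handling internal nodes $u$ at depth $<D-1$, for which the one-step quantity $I(X;J|u)$ may vanish. The paper's proof (end of Section~B.3 of the supplement) disposes of these by citing the suffix-propagation result, Lemma~\ref{ind_inner}: every internal node of the complete minimal tree is a suffix of some depth-$(D-1)$ node $s$ with $I(X;J|s)>0$ (minimality forces this for all parents of leaves), and the bound $P_{b,u}\le C\,\bigl(P_{e,u}/\prod_t P_{e,ut}\bigr)P_{b,uj}$ together with $P_{e,u}/\prod_t P_{e,ut}=\exp(o(n))$ pushes the exponential bound up from $s$ to all of its suffixes. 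You instead aggregate over the entire depth-$D$ subtree below $s$, lower-bound $\prod_j P_{w,sj}$ by $c\prod_u P_{e,u}$ via the recursion~(\ref{pws}), and show $n^{-1}(\log P_{e,s}-\sum_u\log P_{e,u})\to-\pi(s)I(X;W|s)$ with $I(X;W|s)>0$ forced by minimality (if it vanished, all sibling $m$-tuples of leaves below $s$ would share parameters). Both arguments are sound and rest on the same KT bounds plus the ergodic theorem; the "obstacles" you flag are not real gaps, since $a_s(i)=\sum_u a_u(i)$ up to negligible boundary terms and the $O(\log n)$ KT corrections wash out after dividing by $n$. What your version buys is an explicit exponent, $\min_{s\in T^*_o}\pi(s)I(X;W|s)$, and independence from the suffix-propagation machinery; what the paper's version buys is economy, since Lemma~\ref{ind_inner} is already proved for Theorem~\ref{t:postrate} and the corollary then follows by inspection.
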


\subsection{Out-of-class modelling}

In this section we consider the behaviour
of the posterior distribution $\pi(T|x)$ 
on models~$T\in{\mathcal T}(D)$ when the time 
series $x$ is {\em not} generated by a chain 
from the model class ${\mathcal T}(D)$, but from
a general stationary and ergodic process $\{X_n\}$
with possibly infinite memory.
We first
give an explicit description of 
the ``limiting'' model $T_\infty\in{\mathcal T}(D)$ 
on which the posterior $\pi(T|x)$ concentrates
when the observations are generated by a general 
process outside ${\mathcal T}(D)$,
and then we give conditions under which $T_\infty$
is structurally ``as close as possible'' to the
true underlying~model.

\medskip

\noindent
{\bf Description of $T_\infty $. }  
Recall that, for any context $s$, we write
$X$ and $J$ for the random variables corresponding to the 
symbols that follow and precede $s$, respectively.
The limiting tree $T_\infty\in{\mathcal T}(D)$ corresponding 
to a general stationary process $\{X_n\}$ with 
values in $A$ can be constructed via the following
procedure:

\begin{itemize}
\item
Take $T _ \infty$ to be the empty tree.
\item
Starting with the nodes at depth $d=D-1$, 
for each such $s$, if $I(X;J|s)>0$, then add $s$ to $T _ \infty$ 
along with all its children and all its ancestors;
that is, add the complete path from the root $\lambda$ to the children
of $s$. 
\item
After all nodes $s$ at depth $d=D-1$ have been examined, 
examine all possible nodes $s$ at depth $d=D-2$ that are 
not already included in $T_\infty $, and repeat the same process. 
\item
Continue recursively towards the root, until all nodes 
at all depths $0\leq d \leq D-1$ have been examined. 
\item
For any node already in $T_\infty$ 
at depth $d \leq D-1$,
such that only some but not all $m$ of its 
children are included in~$T_\infty$, 
add the missing children to $T_\infty$ so that it becomes
proper.
\item
Output $T_\infty$.
\end{itemize}

In order to state our results we need to impose two additional
conditions on the underlying data-generating process.
Suppose $\{X_n\}$ is stationary. Without loss of generality
(by Kolmogorov's extension theorem) we may consider the two-sided
version of the process, $\{X_n\;;\;n\in{\mathbb Z}\}$.
Its $\alpha$-mixing coefficients 
\citep{ibragimov:62,philipp1975almost} are defined as,
\begin{equation}
\alpha _ n = 
\sup_ {A \in \mathcal {A},\; B \in \mathcal {B}} 
|  \mathbb {P} (A \cap B ) -\mathbb {P} (A) \mathbb {P} (B)   |,
\end{equation}
where $\mathcal {A} $ and $\mathcal {B}$ denote the 
$\sigma$-algebras,
$\sigma (\ldots,X_{-1},X_0)$ 
and $\sigma (X _ {n} , X_ {n+1} , \ldots)$, respectively.
We will need a mixing condition and a positivity condition for our
results:
\begin{equation}
\sum _ {n=1} ^ \infty \alpha _ n ^ {\delta / (2 + \delta ) } < \infty
\;\mbox{for some}\;\delta>0,
\quad\mbox{and}\quad {\mathbb P}(X_0^{D}=x_0^D)>0\;\mbox{for all}
\;x_0^D\in A^{D+1}.
\label{eq:assume}
\end{equation}

\begin{thm}
\label{t:out}
Let $X_{-D+1}^n$ be a time series generated by a stationary ergodic process~$\{X_n\}$ satisfying the assumptions~{\em (\ref{eq:assume})},
and let $T_\infty\in {\mathcal T}(D)$ be given
by the above construction. Then,
for any value of the prior hyperparameter $\beta\in(0,1)$ we have:
\begin{equation*}
 \pi(T_ \infty |X_{-D+1}^n) \to 1, \quad \text{a.s., as } n \to \infty .
\end{equation*}
\end{thm}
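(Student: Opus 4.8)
The plan is to reduce the theorem, exactly as in the proof of Theorem~\ref{post_conc_thm}, to a node-by-node statement about the branching probabilities $P_{b,s}$ via the exact posterior expression~(\ref{post_alt_expr}). Since $\pi(T_\infty|x)=\prod_{s\in(T_\infty)_o}(1-P_{b,s})\prod_{s\in T_\infty}P_{b,s}$ and $T_\infty$ has finitely many nodes, it suffices to show that $P_{b,s}\to0$ a.s.\ for every internal node $s$ of $T_\infty$ and $P_{b,s}\to1$ a.s.\ for every leaf $s$ at depth $<D$ (the depth-$D$ leaves being covered by the convention $P_{b,s}=1$). First I would record the structural fact read off from the construction of $T_\infty$: a node $s$ is internal in $T_\infty$ if and only if some $u$ with $s\preceq u$ and $|u|\le D-1$ has $I(X;J|u)>0$, while $s$ is a leaf if and only if $I(X;J|u)=0$ for every such $u$. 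This is the out-of-class analogue of "$s$ is a parent of leaves of $T^*$ with positive mutual information" versus "$s$ is a leaf of $T^*$".

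The two analytic inputs are the out-of-class versions of~(\ref{mutual_info}) and~(\ref{47}), established for any context $s$ of length at most $D$ (which all have positive probability by the positivity condition in~(\ref{eq:assume})). Writing $\hat I_n(s)$ for the empirical conditional mutual information between the symbols following and preceding $s$, Lemma~\ref{lem:KT} gives
\[
\sum_j\log P_{e,sj}-\log P_{e,s}=M_s\,\hat I_n(s)-\tfrac{m-1}{2}\Big(\textstyle\sum_j\log M_{sj}-\log M_s\Big)+O(1).
\]
Birkhoff's ergodic theorem (valid since $\{X_n\}$ is stationary and ergodic) yields $M_s/n\to\pi(s)$ and $\hat I_n(s)\to I(X;J|s)$ a.s., giving the first-order expansion $\sum_j\log P_{e,sj}-\log P_{e,s}=nI(X;J|s)\pi(s)+o(n)$ a.s. When $I(X;J|s)=0$ the leading term vanishes and the behaviour is governed by $M_s\hat I_n(s)$; here I would invoke a law of the iterated logarithm for the $\alpha$-mixing process under the rate condition in~(\ref{eq:assume}) to show that the relevant empirical frequencies fluctuate by $O(\sqrt{n\log\log n})$ a.s., so that, by the local quadratic behaviour of mutual information at independence, $M_s\hat I_n(s)=O(\log\log n)$ a.s. Combined with $\sum_j\log M_{sj}-\log M_s=(m-1)\log n+O(1)$, this gives the analogue of~(\ref{47}), namely $\sum_j\log P_{e,sj}-\log P_{e,s}\le-\tfrac{(m-1)^2}{2}\log n+O(\log\log n)\to-\infty$ a.s.

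For an internal node $s$ I would argue through a subtree rather than a single split. Pick a descendant $u\succeq s$ with $|u|\le D-1$ and $I(X;J|u)>0$, and let $T'_s$ be the proper subtree rooted at $s$ obtained by expanding only the path $s=w_0,\dots,w_\ell=u$ and splitting fully at $u$. Iterating the CTW recursion~(\ref{pws}) gives $P_{w,s}\ge c\prod_{v}P_{e,v}$ over the leaves $v$ of $T'_s$ for a positive constant $c$, and telescoping yields $\log\prod_vP_{e,v}-\log P_{e,s}=\sum_{i=0}^{\ell}\big(\sum_k\log P_{e,w_ik}-\log P_{e,w_i}\big)$. The $i=\ell$ term equals $n\pi(u)I(X;J|u)+o(n)\to+\infty$ by the first-order expansion, while each earlier term is $\ge M_{w_i}\hat I_n(w_i)-\tfrac{m-1}{2}(\sum_k\log M_{w_ik}-\log M_{w_i})+O(1)\ge-C\log n$, using only $\hat I_n(w_i)\ge0$. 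Hence $P_{w,s}/P_{e,s}\to\infty$ a.s., and since $P_{b,s}=\beta P_{e,s}/P_{w,s}$ we conclude $P_{b,s}\to0$.

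For a leaf $s$ at depth $<D$, every descendant $u\succeq s$ with $|u|\le D-1$ has $I(X;J|u)=0$, so the second input applies at $s$ and at every such $u$, giving $\prod_j P_{e,sj}/P_{e,s}\to0$ a.s.\ at each of these nodes. Exactly as in the final step of Lemma~\ref{leaf} (the argument there being purely combinatorial once the $P_e$-ratios vanish at all unproductive nodes of a bounded-depth subtree), this propagates to $\prod_j P_{w,sj}/P_{e,s}\to0$ a.s., whence $P_{w,s}/P_{e,s}=\beta+(1-\beta)\prod_jP_{w,sj}/P_{e,s}\to\beta$ and therefore $P_{b,s}=\beta P_{e,s}/P_{w,s}\to1$. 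Combining the two cases in~(\ref{post_alt_expr}) gives $\pi(T_\infty|X_{-D+1}^n)\to1$ a.s. The main obstacle is the leaf case: it rests entirely on the $O(\log\log n)$ second-order control, i.e.\ a law of the iterated logarithm for the empirical transition counts of the $\alpha$-mixing process, which is precisely what the mixing rate in~(\ref{eq:assume}) is imposed to supply, replacing the Markov-chain LIL used in the in-class proof of Lemma~\ref{leaf}.
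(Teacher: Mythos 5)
Your proposal is correct and follows essentially the same route as the paper's proof: reduce to the node-wise limits of the branching probabilities via (\ref{post_alt_expr}), replace the Markov-chain ergodic theorem and LIL by Birkhoff's theorem and the mixing LIL supplied by (\ref{eq:assume}) so that the expansions (\ref{mutual_info}) and (\ref{47}) persist, and classify nodes of $T_\infty$ by whether some descendant at depth $\le D-1$ has positive conditional mutual information. Your telescoping along the path to such a descendant is just an unrolled form of the paper's inductive argument (Lemma~\ref{ind_inner}), and your leaf case coincides with the propagation step of Lemma~\ref{leaf}.
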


The proof of Theorem~\ref{t:out} follows along exactly the same lines
as the earlier proof of Theorem~\ref{post_conc_thm}. Instead of the
ergodic theorem for Markov chains \citep[p.~92]{chung1967markov} we
now use Birkhoff's ergodic theorem \citep[Ch.~6]{breiman:book},
and instead of the LIL for Markov chains we apply the general
LIL for functions of blocks of an ergodic process, which follows,
as usual, from the almost-sure invariance principle
\citep{philipp1975almost, rio1995functional,zhao2008law}.
The mixing condition in~(\ref{eq:assume}) was chosen
as one of the simplest ones that guarantee this general
version of the LIL.

Regarding the overall structure of the proof,
all the earlier asymptotic expansions 
of the branching probabilities still remain valid,
including~(\ref{mutual_info}) and~(\ref{47}).
The only possible difference might be at the boundary conditions 
that are required as a starting point for the iterative argument 
in the proof of Lemma~\ref{internal}, since the actual ``leaves''
of the true underlying model of $\{X_n\}$
are not necessarily at depth~$d \leq D$ here. But, as before, 
all branching probabilities $P_{b,s}$ tend 
either to 0 or to 1, depending on whether the mutual 
information condition that appears in the description
of $T_\infty$ holds or not.
Specifically, starting from nodes~$s$ at depth $d=D-1$, 
if $I(X;J|s)=0$ then we are in the same situation 
as in Lemma~\ref{leaf}, so that  $P_{b,s} \to 1$, 
and all children of $s$ are pruned. On the other hand,
if $I(X;J|s)>0,$ then $P_{b,s} \to 0$, and by the same 
iterative argument, 
$P_{b,u} \to 0$ for all ancestors $u$ of node $s$ as well.

Analogous comments apply to the proof of Theorem~\ref{t:outrate}, which is again a refinement characterising the rate at which the posterior probability of $T_ \infty $ converges to 1.

\begin{thm}
\label{t:outrate}
Let $X_{-D+1}^n$ be a time series generated by a stationary ergodic process~$\{X_n\}$ satisfying the assumptions~{\em (\ref{eq:assume})}, and
let $T_\infty$ be its limiting model in ${\mathcal T}(D)$. Then,
for any $\beta\in(0,1)$ and 
any $\epsilon>0$, as $n\to\infty$ we have:
\begin{equation*}
\pi (T_\infty | X_{-D+1}^n ) = 1 - O\left (  n^{-\frac{(m-1)^2}{2}+\epsilon} \right ), \quad\mbox{a.s.}
\end{equation*}
\end{thm}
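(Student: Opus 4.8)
The plan is to mirror the proof of Theorem~\ref{t:postrate}, replacing the minimal model $T^*$ by the limiting model $T_\infty$ and making the same two substitutions used in passing from Theorem~\ref{post_conc_thm} to Theorem~\ref{t:out}: Birkhoff's ergodic theorem in place of the Markov ergodic theorem, and the law of the iterated logarithm (LIL) for functions of blocks of a stationary ergodic process in place of the LIL for Markov chains. Starting from the posterior representation~(\ref{post_alt_expr}),
\[
\pi(T_\infty|x)=\prod_{s\in (T_\infty)_o}\big(1-P_{b,s}\big)\prod_{s\in T_\infty}P_{b,s},
\]
I would bound $1-\pi(T_\infty|x)$ using the elementary inequality $1-\prod_i q_i\le\sum_i(1-q_i)$ for $q_i\in[0,1]$, which reduces the rate estimate to controlling, separately, the quantities $P_{b,s}$ at the internal nodes of $T_\infty$ (where $P_{b,s}\to0$) and $1-P_{b,s}$ at its leaves (where $P_{b,s}\to1$).

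For the internal nodes, which by the construction of $T_\infty$ are exactly the nodes $s$ with $I(X;J\mid s)>0$, the expansion~(\ref{mutual_info}) stays valid once Birkhoff's theorem is used for the empirical frequencies; since $I(X;J\mid s)>0$ and, by the positivity assumption in~(\ref{eq:assume}), every context has positive stationary probability, the leading $-nI(X;J\mid s)\pi(s)$ term drives $\prod_j P_{e,sj}/P_{e,s}\to\infty$ and hence $P_{b,s}\to0$ at an exponential rate, so these contributions are negligible against the target. For the leaves of $T_\infty$ below the maximal depth --- which are precisely the conditional-independence nodes $I(X;J\mid s)=0$, those at depth $D$ contributing a factor $1$ by the convention accompanying~(\ref{post_alt_expr}) --- the leading term of~(\ref{mutual_info}) vanishes and one passes to the second-order estimate~(\ref{47}). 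There the deterministic Krichevsky--Trofimov penalties of Lemma~\ref{lem:KT} supply the exponent: comparing the parent term $\tfrac{m-1}{2}\log M_s$ with the $m$ child terms $\tfrac{m-1}{2}\log M_{sj}$ produces the coefficient $\tfrac{(m-1)^2}{2}$ of $\log n$, while the empirical conditional mutual information, a quadratic functional of the centered counts, is controlled at order $O(\log\log n)$ a.s.\ by the LIL. This gives $1-P_{b,s}=O(n^{-(m-1)^2/2+\epsilon})$ at each such leaf, and summing the (exponentially small) internal contributions with the (polynomial) leaf contributions yields the claimed rate, the leaves being dominant.

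The main obstacle is establishing~(\ref{47}) in this genuinely non-Markovian setting. In the Markov case the fluctuation bound on the empirical mutual information came from the LIL for Markov chains; here one must instead invoke the LIL for functions of blocks of a stationary ergodic process, which follows from the almost-sure invariance principle. The crux is to verify that the $\alpha$-mixing condition $\sum_n\alpha_n^{\delta/(2+\delta)}<\infty$ of~(\ref{eq:assume}) is strong enough to drive this invariance principle, and hence the LIL, at the precision needed to retain the exact exponent $\tfrac{(m-1)^2}{2}$; the positivity condition of~(\ref{eq:assume}) then guarantees that each $M_s/n$ converges to a strictly positive limit, so that the $\log M_s$ terms genuinely behave like $\log n$. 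A secondary point, already flagged in the discussion preceding Theorem~\ref{t:out}, is that the boundary conditions initialising the iterative argument at depth $D$ must be treated with care, since the true process may have infinite memory and its effective ``leaves'' need not lie at depth $\le D$; but, exactly as there, every branching probability still tends to $0$ or $1$ according to the sign of $I(X;J\mid s)$, so the iteration carries over and only the refined bookkeeping of~(\ref{47}) is needed to finish.
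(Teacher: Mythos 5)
Your overall route is the paper's: start from the product representation \eqref{post_alt_expr}, show the internal-node factors of $T_\infty$ decay to their limits exponentially fast via the first-order expansion \eqref{mutual_info}, extract the polynomial rate $n^{-(m-1)^2/2+\epsilon}$ at the leaves from the Krichevsky--Trofimov penalties in Lemma~\ref{lem:KT} plus an $O(\log\log n)$ fluctuation bound, and justify both expansions in the non-Markovian setting by replacing the Markov ergodic theorem with Birkhoff's theorem and the Markov LIL with the LIL obtained from the almost-sure invariance principle under the mixing condition in \eqref{eq:assume}. That is exactly how the paper argues, so the skeleton is right.

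There is, however, one genuine gap. You assert that the internal nodes of $T_\infty$ ``are exactly the nodes $s$ with $I(X;J\mid s)>0$,'' and your exponential bound is derived only for such nodes. This characterisation is false: by construction, $T_\infty$ also contains as internal nodes all \emph{ancestors} of any node with positive conditional mutual information, and these ancestors $u$ may well have $I(X;J\mid u)=0$. The paper's own bimodal example exhibits this --- with $D=3$ the limiting tree is the full tree of depth $3$, yet the root and the depth-one nodes satisfy $I(X;J\mid u)=0$ because $X_n$ is independent of $(X_{n-1},X_{n-2})$. For such a node $u$ the expansion \eqref{mutual_info} only gives $\log P_{e,u}-\sum_t\log P_{e,ut}=o(n)$, which by itself yields nothing about $P_{b,u}$; if anything, the second-order analysis of \eqref{47} would push $P_{b,u}$ towards $1$, the wrong limit. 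What rescues these nodes is an inductive propagation down the suffix chain: from \eqref{pbs_frac} one gets $P_{b,u}\le C\,\bigl(P_{e,u}/\prod_t P_{e,ut}\bigr)\,P_{b,uj}=\exp(o(n))\cdot o\bigl(\exp\{-n(1-\epsilon)I(X;J\mid s)\pi(s)\}\bigr)$, where $uj$ is the child of $u$ lying on the path to a descendant $s$ with $I(X;J\mid s)>0$. Your closing remark that every branching probability tends to $0$ or $1$ ``according to the sign of $I(X;J\mid s)$'' is precisely the wrong criterion for these ancestor nodes, so the iteration you invoke does not carry over in the form you state it. A secondary, more routine omission: at a leaf $s$ of depth $d\le D-1$ the quantity controlled by \eqref{47} is $\prod_j P_{e,sj}/P_{e,s}$, whereas $1-P_{b,s}$ involves $\prod_j P_{w,sj}/P_{e,s}$; transferring the rate from the former to the latter requires a separate induction on depth (since $P_{w,sj}\ne P_{e,sj}$ below depth $D$), which your write-up skips.
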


In general, it is natural to expect that $T_\infty$ should be 
``as close as possible'' in some sense 
to the true underlying model $T^*$,
and this is indeed what is most often observed in applications:
$T_\infty$ being the same as $T^*$ truncated to depth $D$.

\newpage

But this is not always the case. For example,
recall the 3rd order chain $\{X_n\}$ considered in Example~5.2
of~\cite{our},
also described as having a ``bimodal posterior'' in Section~\ref{s:entropy}.
There, $T^*$ is the complete $m$-ary tree of depth~3
(with $m=6$),
but $X_n$ depends on $(X_{n-1},X_{n-2},X_{n-3})$
only via $X_{n-3}$. For that reason, the limiting model 
$T_\infty$ in ${\cal T}(D)$ with $D=1$ or $D=2$ 
is not 
$T^*$ truncated at depth $D$, but rather the empty tree
$\{\lambda\}$ consisting of only the root node $\lambda$.

\smallskip

Fortunately, we can read
a simple necessary and sufficient condition 
for the ``expected'' behaviour to occur from
the definition of $T_\infty$ itself.
Let $\{X_n\}$ be a stationary and ergodic process on 
the finite alphabet $A$. From the 
description of \cite{csiszar2006context}, 
it is easy to see that there is a unique
minimal context tree model $T^*$ for
$\{X_n\}$ of possibly infinite depth. 
Let $T ^*  _  { | D} $ be $T^*$ truncated at 
depth $D$, and write $N_{D-1} (T^*)$ for the set of all 
internal nodes of $T^*$ at depth $d=D-1$ whose children exist 
in $T^*$ (at depth $D$) but are not leaves of $T^*$.

\begin{cor}\label{cor}
The limiting model 
$T_ \infty =  T ^*  _  { | D}$ if and only if 
$I(X;J|s) > 0 $ for all the nodes $s \in N_{D-1} (T^*)$.
\end{cor}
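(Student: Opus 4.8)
The plan is to translate the set-theoretic content of the construction of $T_\infty$ into a statement about which nodes are \emph{active}, where I call a context $s$ at depth $\le D-1$ active if $I(X;J|s)>0$. The first step is to read off from the construction that the internal nodes of $T_\infty$ form exactly the \emph{ancestor-closure} of the active nodes: a node $u$ at depth $\le D-1$ is internal in $T_\infty$ if and only if the subtree rooted at $u$ contains an active node (possibly $u$ itself). Indeed, whenever an active node is found the whole path up to the root is inserted, while the final ``make proper'' step only ever creates leaves, so a node that is neither active nor an ancestor of an active node is never made internal. Since the mutual-information argument of Lemmas~\ref{internal} and~\ref{leaf} shows that an active node is necessarily internal in $T^*$ (its children carry genuinely different one-step predictions, so it cannot be a sufficient context), and since the set of internal nodes of $T^*$ at depth $\le D-1$ is automatically ancestor-closed, this already yields the easy inclusion $T_\infty\subseteq T^*_{|D}$ for every process satisfying~(\ref{eq:assume}). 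Hence, via the posterior representation~(\ref{post_alt_expr}), $T_\infty=T^*_{|D}$ holds if and only if the reverse inclusion does, i.e.\ if and only if every internal node of $T^*_{|D}$ is an ancestor of an active node.

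Next I would reduce this reverse inclusion to a condition on the deepest internal nodes. Because $T^*_{|D}$ is a finite proper tree, every internal node of $T^*_{|D}$ is an ancestor (or equal) of some \emph{leaf-parent}, i.e.\ a node all of whose children are leaves of $T^*_{|D}$; so the reverse inclusion is equivalent to the assertion that every leaf-parent of $T^*_{|D}$ is active. I would then split the leaf-parents by depth. A leaf-parent at depth strictly below $D-1$ has all its children at depth $<D$, so these are genuine leaves of $T^*$, and minimality of $T^*$ (a node whose children are all leaves cannot have all children sharing the parent's prediction, else it could be pruned) forces it to be active with no extra hypothesis. The leaf-parents at depth exactly $D-1$ are precisely all internal nodes of $T^*$ at depth $D-1$, since their depth-$D$ children are leaves of $T^*_{|D}$ by truncation regardless of their status in $T^*$; among these, the ones all of whose children are genuine $T^*$-leaves are again active by minimality, while the remaining ones, whose children include at least one node that is internal in $T^*$, are exactly the members of $N_{D-1}(T^*)$. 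This is the step that isolates the hypothesis: these are the only leaf-parents whose activity is not already guaranteed, so requiring $I(X;J|s)>0$ on $N_{D-1}(T^*)$ is both necessary and sufficient to make every leaf-parent active.

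With this reduction in hand both implications are short. For sufficiency, assuming $I(X;J|s)>0$ for all $s\in N_{D-1}(T^*)$ makes every leaf-parent of $T^*_{|D}$ active, hence every internal node of $T^*_{|D}$ an ancestor of an active node, giving the reverse inclusion and thus $T_\infty=T^*_{|D}$. For necessity I would argue by contraposition: if some $s\in N_{D-1}(T^*)$ has $I(X;J|s)=0$, then $s$ is a non-active leaf-parent of $T^*_{|D}$ whose depth-$D$ children are never examined by the construction and which has no active descendant at depth $\le D-1$, so $s$ fails to be internal in $T_\infty$ although it is internal in $T^*_{|D}$, whence $T_\infty\ne T^*_{|D}$.

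The main obstacle I anticipate lies not in either implication but in the two structural facts feeding them, both of which must be handled carefully under the out-of-class hypotheses~(\ref{eq:assume}). First, the equivalences ``active $\Rightarrow$ internal in $T^*$'' and ``all children are $T^*$-leaves $\Rightarrow$ active'' rest on the exact description of the minimal context tree of a general, possibly infinite-memory, process \citep{csiszar2006context}; one must verify that ``sufficient context'' there matches $I(X;J|s)=0$ under the positivity part of~(\ref{eq:assume}), so that the one-step mutual information faithfully detects the tree structure. Second, the definition of $N_{D-1}(T^*)$ must be read as ``\emph{at least one} child is not a leaf of $T^*$'': a depth-$(D-1)$ internal node with both leaf and internal children can have vanishing one-step mutual information, so it must be covered by the hypothesis, and confirming that the construction breaks for precisely these nodes is the delicate bookkeeping point of the argument.
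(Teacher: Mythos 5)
Your argument is correct and follows essentially the same route as the paper, whose proof is a one-line appeal to the definition of $T_\infty$ plus the observation that $I(X;J|s)>0$ holds automatically (by minimality of $T^*$) for depth-$(D-1)$ nodes whose children are all leaves of $T^*$. What you have written is simply a fully spelled-out version of that argument --- the ancestor-closure description of $T_\infty$, the inclusion $T_\infty\subseteq T^*_{|D}$, and the reduction to leaf-parents split by depth --- and your reading of $N_{D-1}(T^*)$ as the depth-$(D-1)$ internal nodes with \emph{at least one} non-leaf child is the one consistent with the paper's proof.
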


\begin{proof}
The result follows directly from the definition of 
$T_\infty$ combined with the observation
that the condition $I(X;J|s) > 0 $ 
is already satisfied for all nodes $s$ of $T^*$ at depth $d=D-1$ whose 
children are leaves of $T^*$.
\end{proof}

\subsection{The posterior predictive distribution}

The branching process representation of the posterior can
also be used to facilitate practically useful computations. 
In Proposition~\ref{pred_distr}, an exact expression is given for 
the posterior predictive distribution 
$P\left (x_{n+1}|x_{-D+1}^ n \right )$ in terms of
the branching probabilities~$P_{b,s}$.

\begin{prop}\label{pred_distr}
The posterior predictive distribution is given by,
\begin{equation}\label{48}
P \left (x_{n+1} | x _{-D+1 } ^ n \right  ) = \sum _ {i=0} ^ D \left ( \frac{a_{s ^ {(i)}} \left ( x_ {n+1}\right ) + 1/2} {M_{s ^ {(i)}} + m/2}\right ) \gamma_i  ,
\end{equation}
where, for $0 \leq i \leq D$, the string $s  ^ {(i)}$ is 
the context of length $i$ preceding $x_{n+1}$, and $\gamma_i$ is 
the posterior probability that node $s  ^ {(i)}$ is a leaf, given by,
\begin{equation}\label{post_s_leaf}
\gamma _ i\!=\!
\left\{
\begin{array}{ll}
P_{b, \lambda},  \; \; \;  & i=0 ,   \\
\prod _{k=0} ^ {i-1} \left ( 1- P _ {b , s  ^ {(k)} } \right )   P_{b,s  ^ {(i)} } ,  \; \; \; & 1 \leq i \leq D-1 ,   \\
\prod _{k=0} ^ {D-1} \left ( 1- P _ {b , s  ^ {(k)} } \right ) ,    \; \; \; &  i=D . 
\end{array}\!\!
\right.\!\!
\end{equation}
\end{prop}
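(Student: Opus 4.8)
The plan is to compute the predictive distribution by first conditioning on the model $T$ and parameters $\theta$ and then marginalising both out, exploiting the fact that the parameter posterior factorises over leaves. First I would write
\[
P\left(x_{n+1}\mid x_{-D+1}^n\right)
=\sum_{T\in\mathcal{T}(D)}\pi(T\mid x)
\int P(x_{n+1}\mid x,T,\theta)\,\pi(\theta\mid T,x)\,d\theta .
\]
By the definition of the BCT model class in Section~\ref{bct}, once $T$ and $\theta$ are fixed the next symbol depends only on the parameter vector $\theta_{s_T}$ attached to the unique leaf $s_T$ of $T$ that is a suffix of the context $(x_n,\ldots,x_{n-D+1})$ preceding $x_{n+1}$, with $P(x_{n+1}=a\mid x,T,\theta)=\theta_{s_T}(a)$. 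Since the parameter posterior $\pi(\theta\mid T,x)$ in~(\ref{post_theta}) is a product of independent Dirichlet densities, the inner integral reduces to the mean of a $\text{Dir}(1/2+a_{s_T}(0),\ldots,1/2+a_{s_T}(m-1))$ distribution in coordinate $x_{n+1}$. This mean is exactly the Krichevsky--Trofimov predictive probability $\big(a_{s_T}(x_{n+1})+1/2\big)\big/\big(M_{s_T}+m/2\big)$, so the whole predictive distribution becomes a posterior average of these estimators over models $T$.

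The key structural observation is that the relevant leaf $s_T$ is forced to lie on a single path. The contexts $s^{(0)}=\lambda,s^{(1)},\ldots,s^{(D)}$ preceding $x_{n+1}$ are nested, with each $s^{(k)}$ a child of $s^{(k-1)}$, and for any proper tree $T\in\mathcal{T}(D)$ the leaf $s_T$ is precisely the unique $s^{(i)}$ on this path that is a leaf of $T$. I would therefore partition the sum over models according to the length $i$ of $s_T$. Because the KT factor depends on $T$ only through $s^{(i)}$, it can be pulled out of each block of the partition, giving
\[
P\left(x_{n+1}\mid x_{-D+1}^n\right)
=\sum_{i=0}^{D}\frac{a_{s^{(i)}}(x_{n+1})+1/2}{M_{s^{(i)}}+m/2}\,\gamma_i ,
\qquad
\gamma_i:=\!\!\sum_{T:\,s_T=s^{(i)}}\!\!\pi(T\mid x),
\]
so that $\gamma_i$ is by construction the posterior probability that the node $s^{(i)}$ is a leaf.

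It then remains to evaluate each $\gamma_i$ via the branching process representation of Proposition~\ref{p:branching_post}. Under that representation the tree is built top-down, marking each examined node $s$ as a leaf with probability $P_{b,s}$ and otherwise adding all $m$ of its children, independently from node to node. Hence $s^{(i)}$ is a leaf exactly when every ancestor $s^{(0)},\ldots,s^{(i-1)}$ branches and $s^{(i)}$ is itself marked as a leaf; by independence these probabilities multiply to give $\gamma_i=\big(\prod_{k=0}^{i-1}(1-P_{b,s^{(k)}})\big)P_{b,s^{(i)}}$, which is the middle case of~(\ref{post_s_leaf}). The two boundary cases follow from the same formula: for $i=0$ the empty product is $1$ and $\gamma_0=P_{b,\lambda}$, while for $i=D$ the node $s^{(D)}$ lies at maximal depth and is necessarily a leaf, so the convention $P_{b,s^{(D)}}=1$ deletes the final factor and leaves $\gamma_D=\prod_{k=0}^{D-1}(1-P_{b,s^{(k)}})$.

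I expect the main obstacle to be the structural bookkeeping in the second step rather than either probabilistic computation: one must verify that along the context path $\{s^{(i)}\}_{i=0}^{D}$ every proper tree of depth at most $D$ has exactly one leaf, so that $\{T:s_T=s^{(i)}\}_{i=0}^{D}$ is a genuine partition of $\mathcal{T}(D)$ with no model double-counted or omitted. Once this is established, factoring the KT term out of each block and recognising the block weights as the leaf probabilities $\gamma_i$ is routine, and the independence built into Proposition~\ref{p:branching_post} makes the evaluation of each $\gamma_i$ immediate.
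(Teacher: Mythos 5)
Your proposal is correct and follows essentially the same route as the paper: decompose the predictive distribution over models, partition $\mathcal{T}(D)$ according to which context $s^{(i)}$ on the path is the leaf, observe that the Dirichlet posterior mean gives the common KT factor on each block, and identify the block weights as the leaf probabilities $\gamma_i$ via the branching process representation. Your final step spelling out $\gamma_i$ from the independence in Proposition~\ref{p:branching_post} is in fact slightly more explicit than the paper, which leaves that identification implicit.
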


\begin{proof}
Writing $x=x_ {-D+1} ^ n $, 
the posterior predictive distribution
can be expressed as,
\begin{equation} \label{pred_distr_1}
P \left (x_{n+1} | x \right  )  = \sum _ {T \in \mathcal {T} (D)}  P \left (x_{n+1} | T , x  \right  )  \pi (T | x ) . 
\end{equation}
For any tree ${T \in \mathcal {T} (D)} $, 
exactly one of the contexts $s  ^ {(i)}$, $0 \leq i \leq D$,  is a 
leaf of the tree.
For every $0 \leq i \leq D$, define the subset $\mathcal {T} _ i (D) 
\subset \mathcal T(D) $ to be the collection of trees~$T \in \mathcal {T} (D)$ 
such that the context of $x_{n+1}$ that is a leaf of $T$ is $s  ^ {(i)}$; 
these $\mathcal {T} _ i (D) $ are disjoint and their union is~$\mathcal {T} (D)$. 

\newpage 

\noindent The key observation here is that $P (x_ {n+1} | T , x)  $ is the same for 
all trees $T \in \mathcal {T} _ i (D) $, since,
\begin{align}
P (x_ {n+1} | T , x)  =&  \int _ \theta P (x_ {n+1}  | T , \theta ,  x)  \pi (\theta | T , x )  \ d \theta \nonumber \\
= & \int _ { \theta_{s^{(i)}} } P (x_ {n+1}  | T , \theta _ {s ^ {(i)}} ,  x)  \pi (\theta _ {s ^ {(i)}}  | T , x )  \ d  \theta _ {s ^ {(i)}}\nonumber \\
 = &  \int _ { \theta _ {s ^ {(i)}} } \theta _ {s ^ {(i)}} \left ( x_ {n+1}\right )  \pi (\theta _ {s ^ {(i)}}  | T , x )  \ d  \theta _ {s ^ {(i)}} =  \frac{a_{s ^ {(i)}} \left ( x_ {n+1}\right ) + 1/2} {M_{s ^ {(i)}} + m/2} ,
\end{align}
where we used the full 
conditional density of the parameters
in~(\ref{post_theta}).
So, from~(\ref{pred_distr_1}),
\begin{align*}
P \left (x_{n+1} | x \right  )  =  \sum _ {i=0} ^ D   \sum _ {T \in \mathcal {T} _ i (D)} \hspace*{-0.1 cm} P \left (x_{n+1} | T , x  \right  )  \pi (T | x ) 
 =  \sum _ {i=0} ^ D   \frac{a_{s ^ {(i)}} \left ( x_ {n+1}\right ) + 1/2} {M_{s ^ {(i)}} + m/2}  \sum _ {T \in \mathcal {T} _ i (D)} \hspace*{-0.1 cm}  \pi (T | x ) ,
\end{align*}
which completes the proof upon noticing that
the last sum $\sum _ {T \in \mathcal {T} _ i (D)}   \pi (T | x ) $ is exactly 
the posterior probability that node $s  ^ {(i)}$ is a leaf, 
namely, $\gamma _ i $ as in~(\ref{post_s_leaf}).
\end{proof}

\vspace*{-0.4 cm}

\section{Experimental results} \label{exp}

\vspace*{-0.1 cm}

Being able to obtain 
exact i.i.d.\ samples from the posterior 
is generally more desirable and typically
leads to more efficient estimation than using
approximate MCMC samples.
In Section~\ref{s:MCMC} we offer empirical evidence
justifying this statement in the present setting
through 
a simple simulation example. 
Then in Section~\ref{s:entropy}
we present the results of a careful empirical
study of the natural entropy estimator induced
by the BCT framework, compared against
a number of the most common 
alternative estimators,
on three simulated and three real-world data sets.

\vspace*{-0.1 cm}

\subsection{Comparison with MCMC}
\label{s:MCMC}


Consider 
$n=1000$ observations generated from 
a 5th order, ternary chain, with model
given by the 
context tree of Figure~\ref{tree} in Section~\ref{bct}
(the values of the parameters $\theta=\{\theta_s;s\in T\}$
are given in Section~C of the supplementary 
material).
A simple and effective convergence diagnostic 
here (which can also be viewed as an example
of an estimation problem)
is the examination of the frequency with which the MAP model,
$T_1^*$,
appears in the i.i.d.\ or the MCMC sample trajectory.
The model $T_1^*$ can be identified by the BCT algorithm 
and its posterior probability $\pi (T _ 1 ^ *|x)$
can be computed, as in \citep{our}.

\begin{figure}[!ht]
\centering
\vspace*{-0.4 cm}
 \includegraphics[width= 0.6 \linewidth]{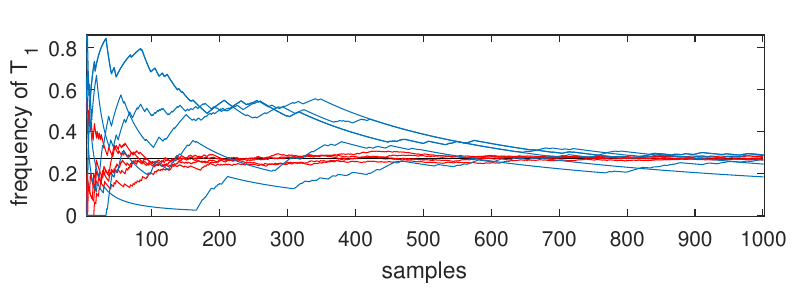}
\vspace*{-0.3 cm}
\caption{Frequency of $T _ 1 ^ *$. Blue: MCMC estimates. Red: 
i.i.d.\ estimates. In each case, the five graphs correspond to 
five independent repetitions of the experiment with 
\mbox{$N=1000$} simulated samples. The horizontal 
line is the limiting frequency, $\pi (T _ 1 ^ *|x)$.}
\label{ternt1}
\end{figure}

As shown in Figure~\ref{ternt1}, the estimates
based on the random-walk MCMC sampler of \cite{our}
and on the i.i.d.\ sampler of Section~\ref{s:sampling}
both
appear to converge quite quickly, 
with the corresponding MCMC estimates converging significantly
more slowly. 
In
50 independent repetitions of the same 
experiment (with $N=1000$ simulated
samples in each run), the estimated variance of the
MCMC estimates (0.0084) was found to be
larger than that 
for the i.i.d.\ estimates
($1.4 \times 10 ^ {-4}$),
by a factor of around~60.

Figure~\ref{tern_trace} shows the trace plots 
\citep{roy2020convergence}
obtained from
$N=10000$ simulated samples from the MCMC and i.i.d.\ samplers, which can be used to monitor the log-posterior in each case.
It is immediately evident that the i.i.d.\ sampler is more efficient
in exploring the effective support of the posterior.

\begin{figure}[!ht]
\vspace*{-0.3 cm}
\centering
 \includegraphics[width= 3.7in]{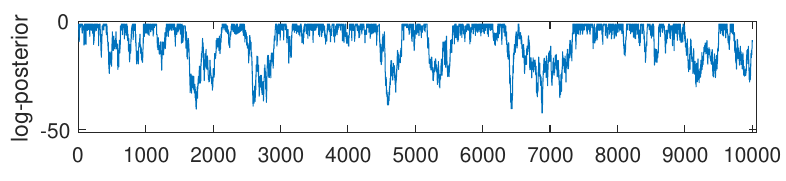}

\vspace*{-0.2 cm}

\hspace{0.12in}
\includegraphics[width= 3.63in, height=0.92in]{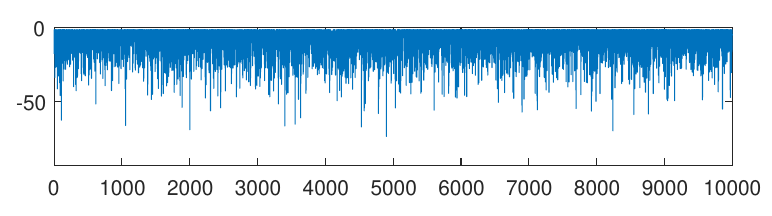}

\vspace*{-0.2 cm}

\caption{Trace plots showing the log-posterior,
$\log \pi (T  ^{(i)}|x)$,
at each iteration.
Top:~MCMC samples.
Bottom: i.i.d.\ samples.}
\label{tern_trace}
\vspace*{-0.2 cm}
\end{figure}

As expected,
the i.i.d.\ sampler has
superior performance compared to the MCMC sampler, 
both in terms of estimation and in terms of mixing. 
Also, although the two types of samplers
have comparable complexity in terms of computation time 
and memory requirements, the structure of the i.i.d.\
sampler is much simpler, giving a much easier implementation.
In view of these observations, in the following section we only 
employ the i.i.d.\ sampler for the purposes of
entropy estimation.

\vspace*{-0.2 cm}

\subsection{Entropy estimation}
\label{s:entropy}

\vspace*{-0.1 cm}

Estimating the entropy rate from empirical data
-- in this case, a discrete time series --
is an important and timely problem
that has received a lot of attention in the recent
literature, in connection with questions
in many areas including
neuroscience~\citep{timme:18},
natural language modelling~\citep{willems:16},
animal communication~\citep{kershenbaum:14},
and cryptography~\citep{simion:20}, 
among others;
see, e.g., the recent
literature reviews by~\cite{verdu2019empirical}
and~\cite{feutrill:21}. The well-known
difficulties of entropy estimation
stemming from the  nonlinear nature
of the entropy rate functional and its 
dependence on the entire process distribution
are discussed in the references listed
above.

For a general process $\{X_n \}$ on a finite alphabet, 
the {\em entropy rate} $\bar{H}$ 
is defined as the limit $\bar{H} = \lim _ {n \to \infty} (1/n)H (X_1^n)$,
whenever the limit exists, where $H(X_1^n)$ denotes the
usual Shannon entropy (in nats rather than bits, as we take logarithms
to the base $e$) of the discrete random vector
$X_1^n$. For an ergodic, first-order Markov 
chain $\{X_n\}$,
$\bar{H}$ can be expressed as,
\begin{equation}\label{rate_mc}
\bar{H} = - \sum _{i,j \in S}  \pi (i) P_{ij} \log P _{ij}, 
\end{equation}
where $S$ is the state space of $\{X_n\}$,
and $(P_{ij})$ and $(\pi(i))$ denote its
transition matrix and its
stationary distribution, respectively.

\newpage

An analogous formula can be written 
for the entropy rate of any ergodic variable-memory 
chain with model $T\in{\mathcal T}(D)$, by viewing it as a full $D$th order chain and considering blocks of length $(D+1)$, as usual;
cf.\ \cite{cover1999elements}. This means that
$\bar{H}$ can be expressed as an explicit 
function $\bar{H}=H(T ,\theta)$ of the model
and~parameters.

Therefore, given a time series $x$,
using the MC sampler of Section \ref{section3.3}
to produce i.i.d.\ samples $(T^{(i)},\theta^{(i)})$
from $\pi(T,\theta|x)$, we can obtain i.i.d.
samples $H^{(i)}=H(T^{(i)},\theta^{(i)})$ from the
posterior $\pi(\bar{H}|x)$ of the entropy rate.
The calculation of each 
$H^{(i)}=H(T^{(i)},\theta^{(i)})$ is straightforward
and only requires the computation of the stationary
distribution $\pi$ of the induced first-order chain that
corresponds to taking blocks of size [depth$(T^{(i)})+1$].
The only potential difficulty is if either the
depth of $T^{(i)}$ or the alphabet size $m$ 
are so large that the computation of $\pi$ becomes
computationally expensive. In such cases, $H^{(i)}$
can be computed approximately by including an 
additional Monte Carlo step: Generate a sufficiently
long random sample $Y_{-D+1}^M$ from the chain 
$(T^{(i)},\theta^{(i)})$, 
and calculate:
\begin{equation}
H^{(i)}\approx-\frac{1}{M}\log P(Y_1^M|Y_{-D+1}^0, T^{(i)},\theta^{(i)}).
\label{eq:MCMCMC}
\end{equation}
The ergodic theorem and the central limit theorem
for Markov chains 
\citep{chung1967markov,meyn2012markov}
then guarantee the accuracy
of~(\ref{eq:MCMCMC}).

In the remainder of this section, the BCT estimator 
(with maximum model depth $D=10$) is compared with 
the state-of-the-art approaches,
as identified by \cite{gao2008estimating} 
and \cite{verdu2019empirical} and summarised below.
The BCT estimator is found to generally give the most reliable estimates 
on a variety of different types of simulated and real-world
data. Moreover, compared 
to most existing approaches that give simple point estimates 
(sometimes accompanied by confidence intervals), 
the BCT estimator has the additional advantage 
that it provides the entire posterior distribution
$\pi(\bar{H}|x)$.

\smallskip

\noindent
{\bf Plug-in estimator. } 
Motivated by the definition of the entropy rate,
the simplest and one of the most commonly used
estimators of the entropy rate is the per-sample
entropy of the empirical distribution of $k$-blocks.
Letting $\widehat p_k (y_1 ^k ) $, $y_1^k\in A^k$,
denote the empirical distribution of $k$-blocks
induced by the data on $A^k$, the {\em plug-in}
or {\em maximum-likelihood} estimator is simply,
$\widehat H_ k = (1/k) H (\widehat p_k)$.
The main advantage of this estimator
is its simplicity. Well-known drawbacks
include its high variance due to undersampling,
and the difficulty
in choosing appropriate block-lengths $k$
effectively.

\smallskip

\noindent
{\bf Lempel-Ziv estimator. } 
Among the numerous match-length-based entropy estimators
that have been derived from the Lempel-Ziv family of 
data compression algorithms, we consider the
increasing-window estimator of \cite{gao2008estimating},
identified there as the most effective one.
For every position $i$ in the observed data,
let $\ell_i$ denote the length of the longest segment 
$x_i ^{i+\ell_i-1}$ starting at $i$ which also appears 
somewhere in the window~$x_{0} ^{i-1}$ preceding $i$. 
Writing $L_i = 1 + \ell_i$ for each $i$, 
the relevant estimator is, \vspace*{-0.1 cm}
$$
\widehat H_ {\text{LZ}} = \frac{1}{n} \sum _ {i=2} ^ n \frac{\log i }{L_i}. \vspace*{-0.1 cm}
$$
\noindent
{\bf CTW estimator. } This uses the prior predictive
likelihood $P(x)$ computed by the CTW algorithm,
to define $\widehat H _ { \text {CTW}} = -(1/n)\log P(x_1^n)$. 
This estimator was found by \cite{gao2008estimating} 
and \cite{verdu2019empirical} to achieve the best performance in practice. 
Its consistency and asymptotic normality follow easily from standard
results, and its (always positive) bias is of $O((\log n)/n)$, which
can be shown to be in a minimax sense as small as possible.
In all experiments we take the maximum depth of CTW to be $D=10$.

\newpage

\noindent
{\bf PPM estimator. } Using a different
adaptive probability assignment, $Q(x)$, this method
forms an estimate of the same type as the CTW estimator,
$\widehat H _ { \text {PPM}} = -(1/n)\log Q(x_1^n)$,
where prediction by partial matching (PPM)
\citep{cleary1984data} is used to fit the model 
that leads to $Q(x_1^n)$. 
We use the interpolated smoothing variant of PPM 
introduced by \cite{bunton1996line}, 
which is implemented in the R package 
available at: \url{https://rdrr.io/github/pmcharrison/ppm/}.

\medskip

\noindent
{\bf A ternary chain. } We consider the same
$n=1000$ observations generated from the 5th order,
ternary chain examined in Section~\ref{s:MCMC}. 
The entropy rate of this chain is~$\bar{H}=1.02$. 
In Figure \ref{post_evolution} we show MC estimates of the
prior distribution $\pi(\bar{H})$,
and of the posterior~$\pi(\bar{H}|x)$
based on $n=100$ and on $n=1000$ observations from the chain.
After $n=1000$ observations, the posterior is
close to a Gaussian  
with mean $\mu=1.005$ and standard deviation 
$\sigma = 0.017$. 
For each histogram $N=10^5$ i.i.d.\ samples were 
used, and in each
case (and in all subsequent examples), the vertical 
axis of the histograms shows the frequency of the bins
in the Monte Carlo sample.

\begin{figure}[!ht]
\vspace*{-0.3 cm}
\begin{subfigure}{0.32 \linewidth}
\hspace*{-0.1in}
\includegraphics[width= 1.09 \linewidth]{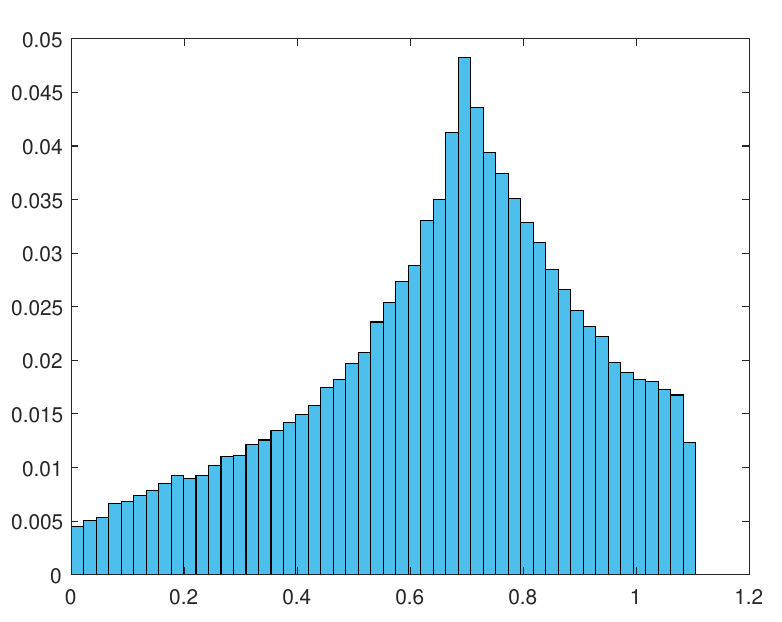}
\vspace*{-0.49 cm}
\caption{prior}
\end{subfigure}
\begin{subfigure}{0.32 \linewidth}
\hspace*{-0.03in}
 \includegraphics[width= 1.10 \linewidth]{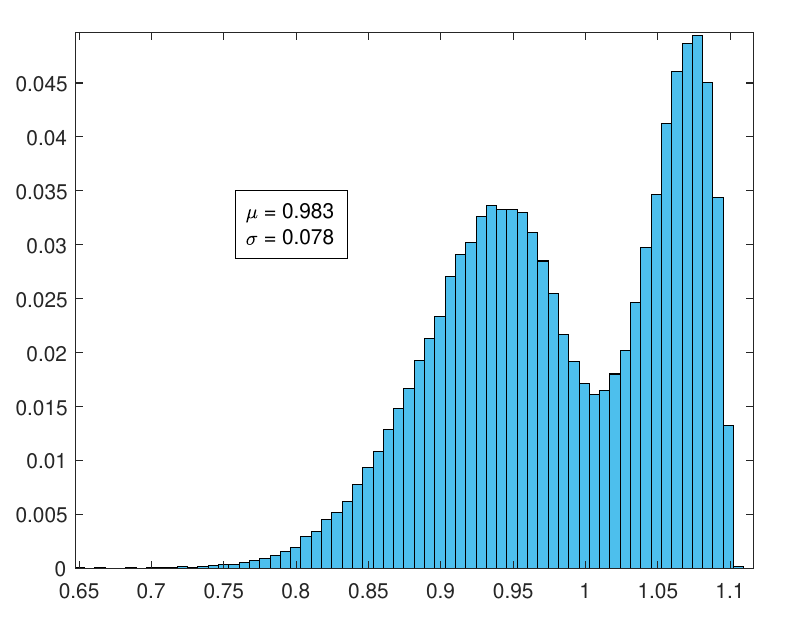}
\vspace*{-0.52 cm}
\caption{$n=100$}
\end{subfigure}
\vspace*{-0.52 cm}
\begin{subfigure}{ 0.32 \linewidth}
\hspace*{0.06in}
 \includegraphics[width= 1.05 \linewidth]{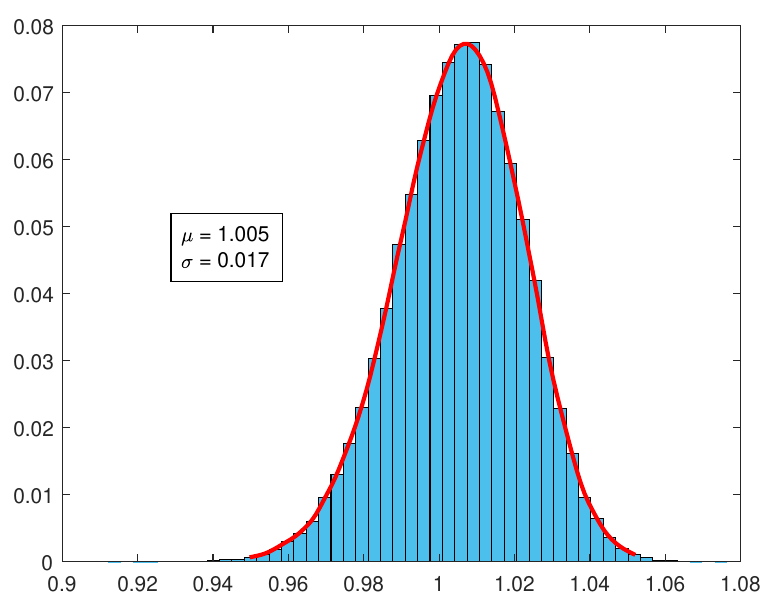}
\vspace*{-0.41 cm}
\caption{$n=1000$}
\end{subfigure}
\vspace*{0.3 cm}
\caption{Prior $\pi(\bar{H})$ and posterior 
$\pi(\bar{H}|x)$ of the 
entropy rate $\bar{H}$ with $n=100$ and $n=1000$ observations 
$x$.}
\label{post_evolution}
\end{figure}

\vspace*{-0.3 cm}

Figure \ref{tern_h_plots} shows the performance
of the BCT estimator compared with the other
four estimators described above, as a function
of the length $n$ of the available observations~$x$. 
For BCT we plot the posterior mean. For the plug-in
we plot estimates with block-lengths $k=5,6,7$. It is easily observed that the BCT estimator outperforms all the alternatives, and converges faster and closer to the true value of $ \bar H  $.

\begin{figure}[!ht]
\centering
\vspace*{-0.2 cm}
\includegraphics[width= 0.77 \linewidth]{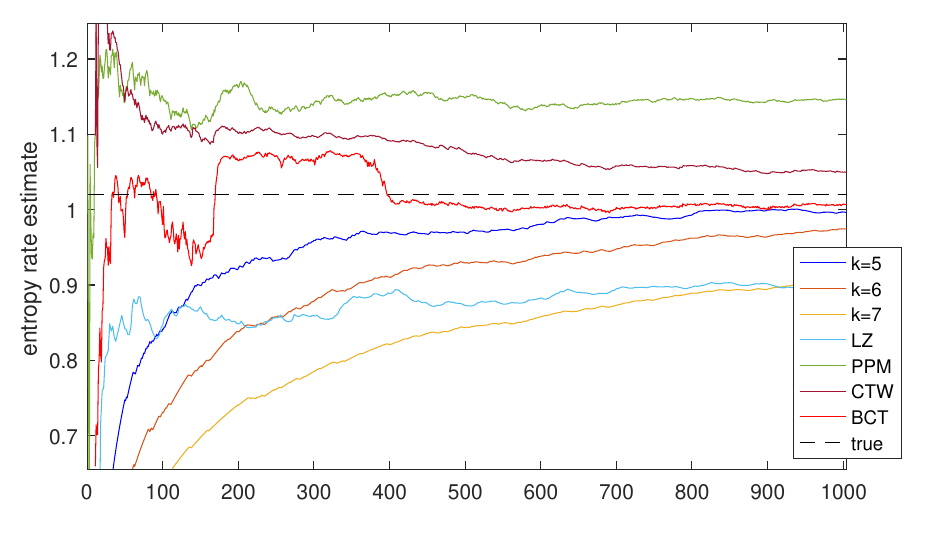}
\vspace*{-0.35 cm}
\caption{Entropy rate estimates for the 5th order ternary chain,
as the number of observations increases.}
\label{tern_h_plots} 
\vspace*{-0.2 cm}
\end{figure}

\noindent
{\bf A third order binary chain. } Here, we consider $n=1000$ observations 
generated from an example of a third order binary chain 
from \cite{berchtold2002mixture}. The underlying model is the 
complete binary tree of depth $3$ pruned at node $s=11$;
the tree model $T$ and the parameter values
$\theta=\{\theta_s;s\in T\}$ are given in Section~C
of the supplementary material. 
The entropy rate of this chain is $\bar{H}=0.4815$. 
Figure~\ref{y3_pots} shows the performance 
of all five estimators,
where 
the BCT estimator (which uses the posterior mean again) is found to have the best performance.
The histogram of the BCT posterior after $n=1000$ observations,
shown in Section~C of the supplementary material, is
close to a Gaussian with a
mean $\mu = 0.4806$ and a standard deviation~$\sigma = 0.0405$.

\begin{figure}[!ht]
\centering
\vspace*{-0.1 cm}
\includegraphics[width= 0.82 \linewidth]{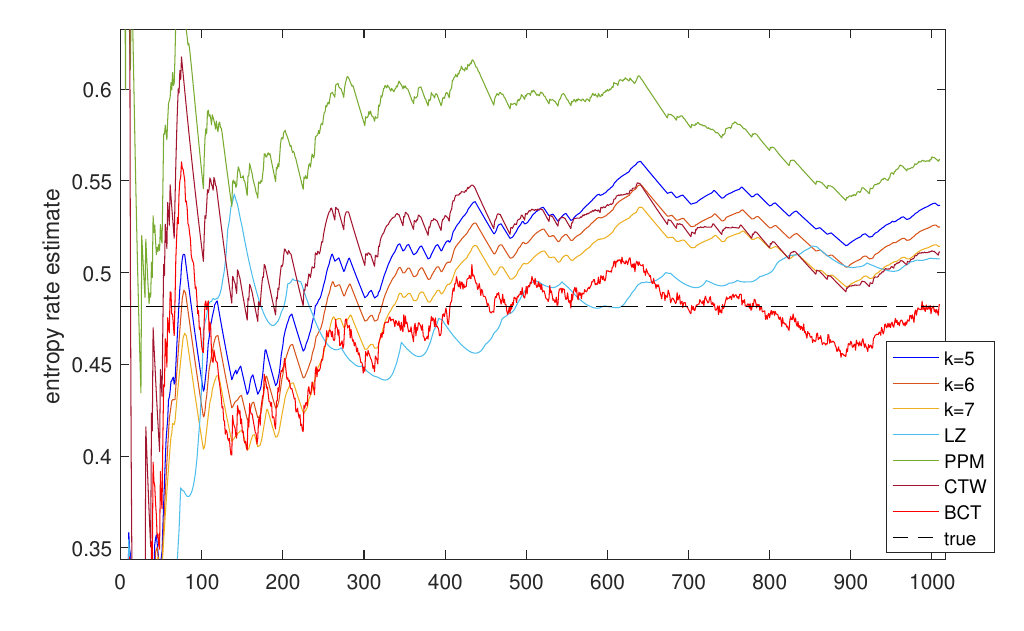}
\vspace*{-0.4 cm}
\caption{Entropy rate estimates for the third order binary chain,
as the number of observations increases.}
\label{y3_pots} 
\vspace*{-0.1 cm}
\end{figure}


\noindent
{\bf A bimodal posterior. } 
We re-examine a simulated time series $x$ 
from \cite{our}, which
consists of
$n=1450$ observations generated from 
a $3$rd order chain~$\{X_n\}$ with 
alphabet size $m=6$ and with the property that each $X_n$ depends 
on past observations only via $X_{n-3}$.
The complete
specification of the chain is given in Section~C 
of the supplementary material.
Its entropy rate is $\bar{H}=1.355$. 
An interesting aspect of this data set is that 
the model posterior is bimodal, with one mode
corresponding to the empty tree (describing i.i.d.\ observations)
and the other consisting of tree models of depth~3.

\smallskip

As shown in Figure \ref{bimodal_h}, the posterior
of the entropy rate is also bimodal here, 
with two separated approximately-Gaussian modes 
corresponding to each of the modes of the model posterior. 
The dominant mode is the one corresponding to models of depth 3; 
it has mean $\mu_1=1.406$, standard deviation $\sigma_1 =0.031$, 
and relative weight $w_1 = 0.91$. The second mode 
corresponding to the empty tree has mean $\mu_2 = 1.632$, standard 
deviation $\sigma_2 = 0.020$, and a much smaller weight 
$w_2 =1 -w_1 =  0.09$. In this case,
the mode of $\pi(\bar{H}|x)$ gives a more reasonable 
choice for a point estimate than the posterior mean.
Like in the previous two examples,
the BCT entropy estimator 
performs better than most benchmarks, 
as illustrated in
Section~C of the supplementary material. 

\newpage

\begin{figure}[!ht]
\begin{subfigure}{0.49 \linewidth}
 \includegraphics[width= 1 \linewidth]{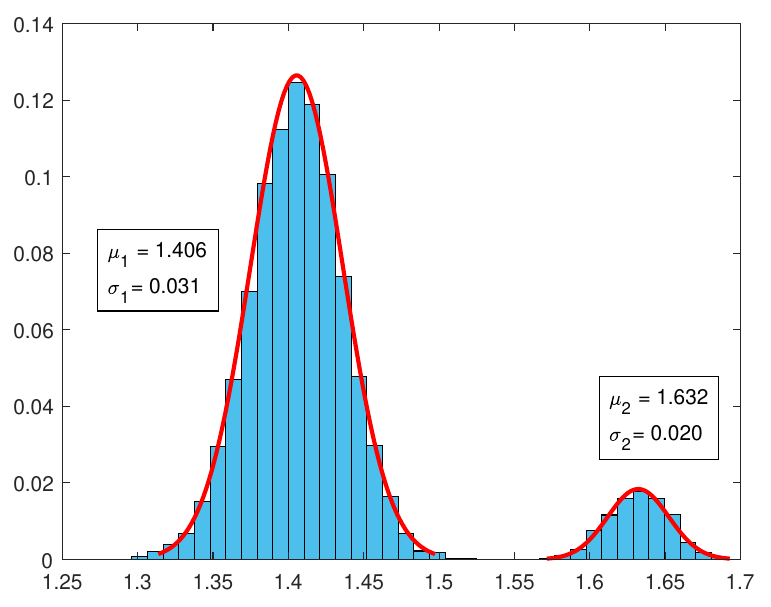}
\vspace*{-0.5 cm}
\caption{bimodal example}
\vspace*{0.02 cm}
\label{bimodal_h}
\end{subfigure}
\begin{subfigure}{0.49 \linewidth}
 \includegraphics[width= 1 \linewidth]{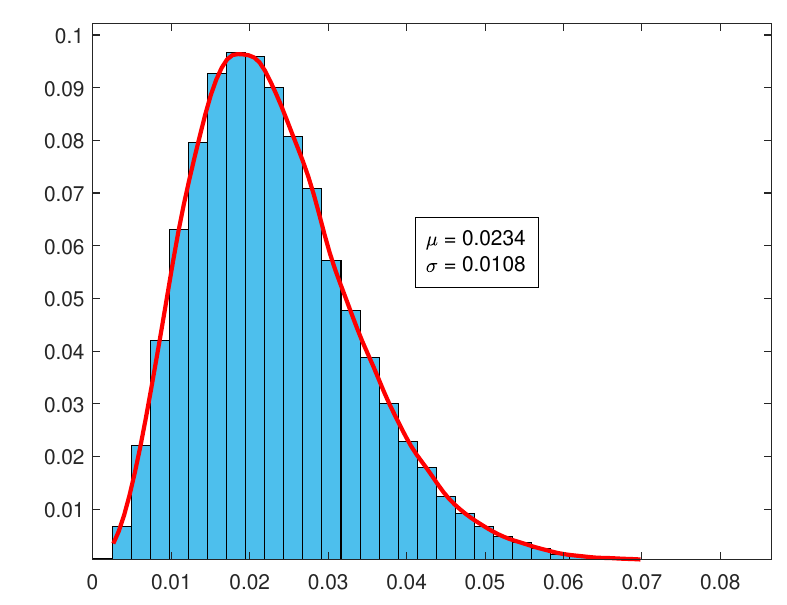}
\vspace*{-0.45 cm}
\caption{spike train}
\label{spike_h}
\end{subfigure}
\vspace*{-0.2 cm}
\caption{Histograms of the
posterior distribution $\pi(\bar{H}|x)$ of the entropy rate,
constructed from $N=10^5$ i.i.d.\ samples in each case.}
\vspace*{-0.4 cm}
\end{figure}

\medskip

\noindent
{\bf Neural spike trains. } 
We consider $n=1000$ binary observations from a spike train recorded from a single neuron in region V4 of a monkey's brain. The BCT posterior is 
shown in Figure \ref{spike_h}: Its mean is $\mu = 0.0234$, its standard deviation 
is $\sigma = 0.0108$, and is skewed to the right. This dataset is 
the first part of a long spike train of length $n=3,919,361$ 
from \cite{gregoriou2009high,gregoriou2012cell}.
Although there is no ``true'' value of the entropy rate here,
for the purposes of comparison we use the estimate obtained
by the CTW estimator (identified as the most effective method
by \cite{gao2008estimating} and \cite{verdu2019empirical})
when all $n=3,919,361$ samples are used, giving
$\bar{H}=0.0241$. 
The resulting estimates for all 
five methods (with the posterior 
mean given for BCT) are summarised in Table~\ref{spike_table}, verifying again that BCT outperforms all the other methods.

\begin{table}[!ht]
\centering
{\small
\begin{tabular}{cccccccccc}
\midrule 
 & ``True" & BCT & CTW & PPM &LZ & $k=2$ & $k=5$ & $k=10$ & $k=15$ \\
\midrule
$\widehat H $ & 0.0241 & \bf{0.0234} & 0.0249 & 0.0360 & 0.0559 & 0.0204 & 0.0204 & 0.0198 & 0.0187 \\
\bottomrule
\end{tabular}
}
\caption{Entropy rate estimates for the neural spike train.}
\label{spike_table}
\end{table}


\noindent
{\bf Financial data. } Here, we consider $n=2000$ observations from the 
financial dataset~F.2 of \cite{our}. This consists of tick-by-tick price 
changes of the Facebook stock price, quantised to three values:
$x_i =0$ if the price goes down, $x_i=1$ if it stays the same,
and $x_i=2$ if it goes up.
The BCT entropy-rate posterior is shown in Figure~\ref{fb_h_hist}:  
It has mean $\mu = 0.921$, and standard deviation $\sigma = 0.028$.

Once again, as the ``true'' value of the entropy rate we take
the estimate produced by the CTW estimator on a 
longer sequence with $n=10^4$ observations,
giving $\bar{H}=0.916$. The results of all five estimators
are summarised in Table~\ref{table_fb},
where for the BCT estimator we once again give the posterior mean.

\begin{table}[!ht]
\centering
\begin{tabular}{ccccccccccc}
\midrule 
 & ``True" & BCT & CTW & PPM &LZ & $k=5$ & $k=6$ & $k=7$ & $k=10$  \\
\midrule
$\widehat H $ & 0.916 & \bf{0.921} & 0.939 & 1.049 & 0.846 & 0.930 & 0.907 & 0.870 & 0.713 \\
\bottomrule
\end{tabular}
\caption{Entropy rate estimates for the financial data set.}
\label{table_fb}
\end{table}

\begin{figure}[!ht]
\begin{subfigure}{0.49 \linewidth}
 \includegraphics[width= 1 \linewidth]{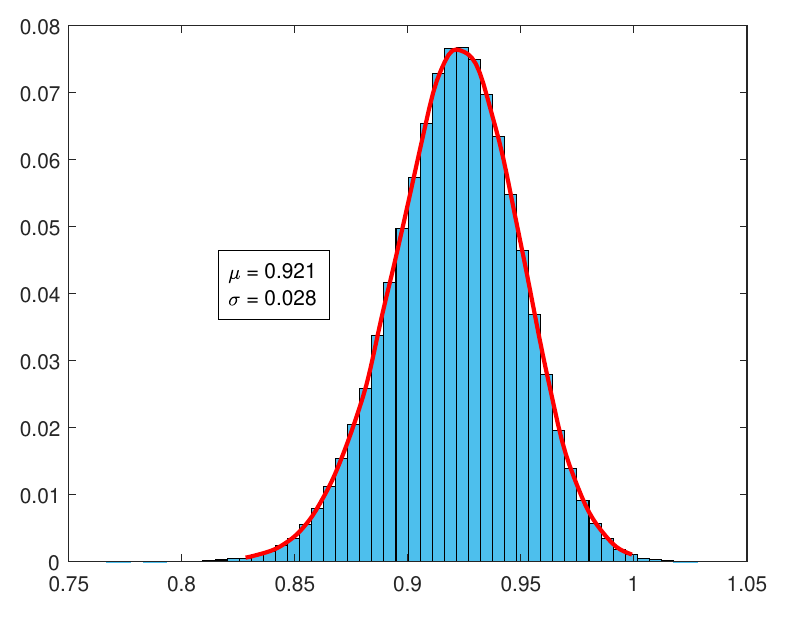}
\vspace*{-0.65 cm}
\caption{financial dataset}
\label{fb_h_hist}
\end{subfigure}
\begin{subfigure}{0.49 \linewidth}
 \includegraphics[width= 1 \linewidth]{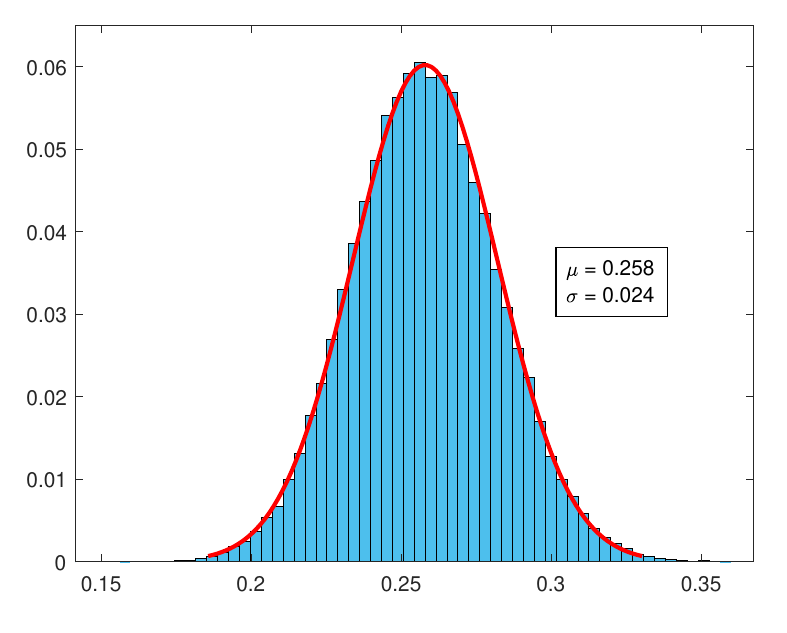}
\vspace*{-0.65 cm}
\caption{pewee birdsong}
\label{pewee_h_hist}
\end{subfigure}
\vspace*{-0.2 cm}
\caption{Histograms of the posterior 
distribution $\pi(\bar{H}|x)$ of the entropy rate,
constructed from $N=10^5$ i.i.d.\ samples in each case.}
\vspace*{-0.3 cm}
\end{figure}

\noindent
{\bf Pewee birdsong. } The last data set
examined is a time series $x$ describing
the twilight song of the wood pewee 
bird \citep{craig1943song,sarkar2016bayesian}.
It consists of $n=1327$ observations from an 
alphabet of size $m=3$. 
The BCT posterior is shown in Figure \ref{pewee_h_hist}: 
It is approximately Gaussian with mean $\mu = 0.258$ and standard 
deviation $\sigma = 0.024$. The fact that the standard deviation
is small is important as it suggests ``confidence'' in the 
resulting estimates, which is important because here
(as in most real applications) 
there is no knowledge of a ``true'' underlying value. 
Table~\ref{pewee_table} shows all the resulting estimates;
the posterior mean is shown for the BCT estimator.

\begin{table}[!ht]
\centering
\begin{tabular}{cccccccccc}
\toprule 
& BCT & CTW & PPM &LZ & $k=2$ & $k=5$ & $k=10$ & $k=15$  \\
\midrule
$\widehat H $ & 0.258 &  0.278 & 0.318 & 0.275 & 0.776 & 0.467 & 0.336 & 0.272 \\
\midrule
\end{tabular}
\caption{Entropy rate estimates for the pewee song data.}
\label{pewee_table}
\end{table}


\noindent
{\bf Summary. }
The main conclusion from the results on the six data
sets examined in this section is that the BCT estimator
gives the most accurate and reliable results among the
five estimators considered. In addition to the fact that
the BCT point estimates typically outperform those produced 
by other methods, the BCT estimator is accompanied by the
entire posterior distribution $\pi(\bar{H}|x)$ of the
entropy rate, induced by the observations~$x$. 
As usual, this
distribution can be used to quantify the uncertainty 
in estimating $\bar{H}$, and it contains significantly 
more information than simple point estimates and their
associated confidence intervals.

 \vspace{-0.1 cm}

\section{Concluding remarks}

In this work, we revisited the Bayesian Context Trees (BCT) 
modelling framework, which was recently found to be very effective 
for a range of statistical tasks in the analysis of discrete 
time series. 
We showed that the prior and posterior distributions on model 
space admit simple and easily interpretable representations 
in terms of branching processes, and we demonstrated
their utility both in theory and in practice.

The branching process representation was first employed to develop 
an efficient Monte Carlo sampler that provides  i.i.d.\ samples 
from the joint posterior on models and parameters,
thus facilitating effective Bayesian inference
with empirical time series data.
Then, it was used to establish strong theoretical results 
on the asymptotic consistency of the BCT posterior
on model space, 
which provide important theoretical justifications for the use
of the BCT framework in practice.
Finally, the performance of the proposed Monte Carlo sampler 
was examined extensively in the context of entropy estimation.
The resulting fully-Bayesian entropy estimator
was found to outperform several of the state-of-the-art 
approaches, on simulated and real-world~data. 

\smallskip

Although the BCT framework was originally developed 
for modelling and inference of discrete-valued time 
series, it was recently used to develop general 
mixture models for real-valued time series,
along with a collection of associated algorithmic
tools for inference \citep{bct_ar}. 
Extending the results presented in this work to 
that setting presents an interesting direction of further 
research, motivated by important practical 
applications.

\section*{Acknowledgments}


We are grateful to Georgia Gregoriou
for providing us with the spike train data of Section~\ref{s:entropy}.

\bibliographystyle{ba}



\newpage


\appendix

\begin{center}

   \bf  \huge {Supplementary material}
\end{center}

\section{Proof of Proposition~3.2}

We need
the following representation of the marginal likelihood
from \citep{our}.

\begin{lem} \label{marg_lik}
The marginal likelihood $P(x|T)$ 
of the observations
$x$ given a model $T$~is,
$$P(x|T)
=\int P(x|\theta,T)\pi(\theta|T)d\theta
=\prod_{s\in T}P_{e,s},$$
where $P_{e,s}$ are the estimated probabilities in~{\em (2.4)}. 
\end{lem}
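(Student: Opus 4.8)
The plan is to use the fact that, conditional on the tree model $T$, both the likelihood and the parameter prior factorise into a product over the leaves $s\in T$, so that the integral over $\theta$ decouples into independent per-leaf Dirichlet integrals, each of which evaluates to a Krichevsky--Trofimov factor $P_{e,s}$.

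First I would establish the likelihood factorisation. By the definition of the variable-memory chain described by $T$, at each time $i$ the symbol $x_i$ is drawn from the probability vector $\theta_s$ attached to the unique leaf $s\in T$ that is a suffix of the length-$D$ context $(x_{i-1},\ldots,x_{i-D})$; the initial context $x_{-D+1}^0$ ensures this leaf is well defined for every $i=1,\ldots,n$. Grouping the $n$ one-step factors by the leaf responsible for each observation, and recalling from~(2.5) that $a_s(j)$ counts the number of times symbol $j$ follows context $s$ in $x_1^n$, yields
$$P(x|\theta,T)=\prod_{s\in T}\prod_{j=0}^{m-1}\theta_s(j)^{a_s(j)}.$$
Since the prior~(2.2) is a product of independent $\mathrm{Dir}(1/2,\ldots,1/2)$ densities across leaves, the integrand factorises across the blocks $\{\theta_s\}$, and Fubini gives
$$P(x|T)=\int P(x|\theta,T)\pi(\theta|T)\,d\theta=\prod_{s\in T}\int\Big(\prod_{j=0}^{m-1}\theta_s(j)^{a_s(j)}\Big)\pi(\theta_s)\,d\theta_s.$$

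Next I would evaluate each per-leaf integral as a Dirichlet normalising constant. Writing the $\mathrm{Dir}(1/2,\ldots,1/2)$ density as $\Gamma(m/2)\Gamma(1/2)^{-m}\prod_j\theta_s(j)^{-1/2}$, each factor equals
$$\frac{\Gamma(m/2)}{\Gamma(1/2)^m}\cdot\frac{\prod_{j=0}^{m-1}\Gamma\big(a_s(j)+1/2\big)}{\Gamma\big(M_s+m/2\big)}.$$
Finally I would reconcile this with the definition of $P_{e,s}$ in~(2.4) using the recursion $\Gamma(t+1)=t\Gamma(t)$: iterating it gives $\Gamma(a_s(j)+1/2)/\Gamma(1/2)=(1/2)(3/2)\cdots(a_s(j)-1/2)$ and $\Gamma(M_s+m/2)/\Gamma(m/2)=(m/2)(m/2+1)\cdots(m/2+M_s-1)$, so the displayed factor is precisely $P_{e,s}$, and taking the product over $s\in T$ finishes the proof.

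The step requiring the most care is the likelihood factorisation: one must verify that regrouping the observation-by-observation product into a product over leaves is legitimate, which rests on the tree-source property that every one of the $n$ observations is assigned to exactly one leaf of $T$, together with the correct bookkeeping of the count vectors $a_s$. The subsequent Dirichlet integral and the Gamma-function simplification are routine.
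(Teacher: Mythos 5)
Your proof is correct and is the standard argument: the paper itself does not prove this lemma but imports it from \cite{our}, and the derivation there is exactly the one you give --- factorise the likelihood over the leaves of $T$ using the tree-source property, exploit the product form of the Dirichlet prior to decouple the integral leaf by leaf, and evaluate each Dirichlet--multinomial integral as a ratio of Gamma functions that telescopes to the Krichevsky--Trofimov factor $P_{e,s}$ of (2.4). The one point you rightly flag as needing care, the regrouping of the time-indexed product into a product over leaves with the counts $a_s(j)$, is handled correctly by conditioning on the initial context $x_{-D+1}^0$, so nothing is missing.
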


\noindent
{\bf Proof of Proposition~3.2. }
The proof parallels that of 
Proposition~3.1.
When $D=0$, $\mathcal {T}(D)$ consists of a single tree,
$T=\{\lambda\}$, which has probability~1 under 
both the BCT posterior $\pi(\cdot|x)$ and under the distribution
$\pi_b(\cdot)$ induced by the branching 
process construction. 
Suppose $D\geq 1$. 

As before,
we view every tree $T\in\mathcal {T}(D)$
as a collection of $k$ 
of $m$-branches, and we proceed 
by induction on $k$. 
For $k=0$,
i.e., for $T=\{\lambda\}$,
by the definitions,
$$
\pi_b(\{\lambda\})=P_{b,\lambda}=
\frac{\beta P_{e,\lambda}}{P_{w,\lambda}}
=\pi_D(\{\lambda\};\beta)
\frac{P_{e,\lambda}}{P_{w,\lambda}},
$$
and using
Lemma~A.1 and the fact that $P_{w,\lambda}$ is exactly the normalising constant~$P(x)$, 
$$
\pi_b(\{\lambda\})
=
\frac{
\pi_D(\{\lambda\};\beta) P(x|\{\lambda\})
}{P(x)}=\pi(\{\lambda\}|x).
$$

Now assume the result of the proposition
holds for all trees with $k$ $m$-branches, 
and suppose $T'\in\mathcal {T}(D)$ contains 
$(k+1)$ $m$-branches and is 
obtained from some $T\in\mathcal {T}(D)$
by adding a single $m$-branch
to one of its leaves, $s$.
Again, consider two~cases.

$(i)$
If $s$ is at depth $D-2$ or smaller,
then by 
construction,
\begin{align}
\pi_b(T')
=&
	\frac{\pi_b(T)}{P_{b,s}} (1-P_{b,s})\prod_{j=0}^{m-1}P_{b,sj}   , 
	\nonumber
\end{align}
and therefore, using the inductive hypothesis,
\begin{equation}
\pi_b(T') 
=
	\pi(T|x)\Big(\frac{1-P_{b,s}}{P_{b,s}}\Big)\prod_jP_{b,sj}
=
	\pi(T'|x)
	\frac{\pi(T|x)}{\pi(T'|x)}
	\Big(\frac{1-P_{b,s}}{P_{b,s}}\Big)\prod_jP_{b,sj}.
\label{eq:induct1}
\end{equation}
Using the definitions of 
$\pi_D$ and $P_{b,s}$, as well as Lemma~A.1,
we can express the 
posterior 
odds 
$\frac{\pi(T'|x)}{\pi(T|x)}$
in~(\ref{eq:induct1}) as,
\begin{align*}
\frac{\pi_D(T';\beta)}{\pi_D(T;\beta)}
	\frac{P(x|T')}{P(x|T)}
=&
	\frac{\pi_D(T';\beta)}{\pi_D(T;\beta)}
	\frac{\prod_{j=0}^{m-1}P_{e,sj}}{P_{e,s}} 
	\\
=&
	\frac{\beta^m(1-\beta)}{\beta}
	\frac{\prod_{j}P_{e,sj}}{P_{e,s}}\\
=&
	(1-P_{b,s})\frac{1-\beta}{\beta P_{e,s}(1-P_{b,s})}
	\prod_{j}\beta P_{e,sj}\\
=&
	(1-P_{b,s})
	\frac{P_{w,s}}{\beta P_{e,s}}
	\frac{1-\beta}{(P_{w,s}-\beta P_{e,s})}
	\prod_{j}\beta P_{e,sj},
\end{align*}
and 
from the definitions of $P_{w,s}$ and $P_{b,s}$
we obtain,
\begin{align}
\frac{\pi(T'|x)}{\pi(T|x)}
=
	(1-P_{b,s})
	\frac{1}{P_{b,s}}
	\frac{1}{\prod_jP_{w,sj}}
	\prod_{j}\beta P_{e,sj}
=
	\Big(\frac{1-P_{b,s}}
	{P_{b,s}}\Big)
	\prod_{j}P_{b,sj}.
\label{eq:braodds1}
\end{align}
Substituting~(\ref{eq:braodds1}) into~(\ref{eq:induct1})
yields, $\pi_b(T')=\pi(T'|x)$, as claimed.

\smallskip

\noindent $(ii)$ Similarly, if $s$ is at depth $D-1$, from the
inductive hypothesis,
\begin{align*}
\pi_b(T')
=
	\frac{\pi_b(T)}{P_{b,s}} (1-P_{b,s})
=
	\pi(T|x)\Big(\frac{1-P_{b,s}}{P_{b,s}}\Big)
=
	\pi(T'|x)
	\frac{\pi(T|x)}{\pi(T'|x)}
	\Big(\frac{1-P_{b,s}}{P_{b,s}}\Big),
\end{align*}
where the 
posterior 
odds
can be expressed as,
\begin{align} \label{odds}
\frac{\pi(T'|x)}{\pi(T|x)}
=&
	\frac{\pi_D(T';\beta)}{\pi_D(T;\beta)}
	\frac{\prod_{j=0}^{m-1}P_{e,sj}}{P_{e,s}} \nonumber
	\\
=&
	\Big(\frac{1-\beta}{\beta}\Big)
	\frac{\prod_{j}P_{w,sj}}
	{P_{e,s}} \nonumber
	\\
=&
	\frac{P_{w,s}-\beta P_{e,s}}{\beta P_{e,s}}=
	\frac{1-P_{b,s}}{P_{b,s}},
\end{align}
where in the second equality we used that $P_{w,sj} = P_{e,sj}$, as all nodes $sj$ are at depth $d=D$ in this case.
Substituting (\ref{odds}) above
yields $\pi_b(T')=\pi(T'|x)$, and completes the proof.\qed

\vspace*{-0.2 cm}

\section{Proofs of results from Section~4}

\subsection{Proof of Lemma~4.2}

Here we establish the two missing steps in the proof of the
lemma given in Section 4.1 of the main text. 

\smallskip

\noindent
{\bf Proof of~(4.5). }
Using the upper bound of Lemma~4.1 for a fixed context $s$,
and the corresponding lower bound for the context $sj$,
we obtain the upper bound,
\begin{align}
\frac{1}{M_s} \left ( \log P_{e,s} - \sum _ {j=0} ^ {m-1} \log P_{e,sj} \right ) \leq& \sum_{i=0} ^ {m-1} \frac{a_s(i)} {M_s}\log\frac{a_s(i)}{M_s} -\sum _{j =0} ^ {m-1}\frac{M_{sj}}{M_s} \sum_{i=0}^{m-1}\frac{a_{sj}(i)} {M_{sj}}\log\frac{a_{sj}(i)}{M_{sj}} \nonumber \\ 
+ & \frac{m-1}{2 M_s} \left ( \sum _ {j=0} ^ {m-1} \log M_{sj}   -\log M_s \right ) + \frac {C}{M_s}, \label{expansion1}
\end{align}
for some constant $C$. Since $M_s$ and $M_{sj}$ both tend to infinity
a.s.\ as $n\to\infty$ by positive-ergodicity, the last two terms
above both vanish a.s.

For the first two terms, we first note that,
by the ergodic theorem for Markov chains 
(e.g., \cite[p.~92]{chung1967markov}),
\begin{align}
\frac {a_s (i)} {M_s} = \frac {a_s (i)} {n}  \  \frac {n} {M_s} \to \frac{\pi(si)}{\pi (s)} = \pi(i|s), \quad \text{a.s.},
\end{align}
where for the stationary distribution $\pi$, the notation we use is that $si$ denotes the concatenation of context $s$ followed by symbol $i$
moving `forward' in time.

\newpage

\noindent Recalling the definition of $X$ and $J$,
we have,
\begin{equation}
\mathbb P (X =i | s) = \frac{\pi(si)}{\pi (s)} =  \pi(i|s), \quad \mathbb P (J =j | s) =  \frac{\pi (js)}{\pi (s)}, 
\end{equation}
so that for the first term of (\ref{expansion1}), as $n\to\infty$,
\begin{equation}\label{entropy1}
 \sum_{i=0} ^ {m-1} \frac{a_s(i)} {M_s}\log\frac{a_s(i)}{M_s} \to - H(X|s), \quad \text{a.s.}
\end{equation}
Similarly, for the second term of (\ref{expansion1}), from the ergodic theorem,
\begin{align}
& \frac {a_{sj} (i)} {M_{sj}} =\frac {a_{sj}(i)} {n}  \  \frac {n} {M_{sj}} \to \frac{\pi(jsi)}{\pi (js)} = \pi(i|js) =\mathbb P (X =i | s , J =j)  , \quad \text{a.s.}, \\
& \frac{M_{sj}}{M_s} = \frac{M_{sj}}{n}  \ \frac{n}{M_s}  \to \frac{\pi (js)}{\pi (s)} =  \mathbb P (J =j | s) ,  \quad \text{a.s.}, 
\end{align}
so that, as $n\to\infty$,
\begin{align}
- \sum _{j =0} ^ {m-1}\frac{M_{sj}}{M_s} \sum_{i=0}^{m-1}\frac{a_{sj}(i)} {M_{sj}}\log\frac{a_{sj}(i)}{M_{sj}} \to \sum _ {j=0} ^ {m-1}  \mathbb P (J =j | s)   H (X |s, J=j) = H(X|s,J), 
\end{align}
by the definition of conditional entropy. 
Finally, combining with (\ref{entropy1}), we get,
\begin{align}
 \sum_{i=0} ^ {m-1} \frac{a_s(i)} {M_s}\log\frac{a_s(i)}{M_s}- \sum _{j =0} ^ {m-1}\frac{M_{sj}}{M_s} \sum_{i=0}^{m-1}\frac{a_{sj}(i)} {M_{sj}}\log\frac{a_{sj}(i)}{M_{sj}} \to - I(X;J|s), \quad \text{a.s.}
\end{align}

Following the same sequence of steps, we can obtain a
lower bound corresponding to~(\ref{expansion1}) as,
\begin{align}
\frac{1}{M_s} \left ( \log P_{e,s} - \sum _ {j=0} ^ {m-1} \log P_{e,sj} \right ) \geq& \sum_{i=0} ^ {m-1} \frac{a_s(i)} {M_s}\log\frac{a_s(i)}{M_s} -\sum _{j =0} ^ {m-1}\frac{M_{sj}}{M_s} \sum_{i=0}^{m-1}\frac{a_{sj}(i)} {M_{sj}}\log\frac{a_{sj}(i)}{M_{sj}} \nonumber \\ 
+ & \frac{m-1}{2 M_s} \left ( \sum _ {j=0} ^ {m-1} \log M_{sj}   -\log M_s \right ) + \frac {C'}{M_s}, \label{expansion2}
\end{align}
where the only difference from (\ref{expansion1}) is the constant $C'$. Therefore,
\begin{equation*}
\frac{1}{M_s} \left ( \log P_{e,s} - \sum _ {j=0} ^ {m-1} \log P_{e,sj} \right ) \to - I(X;J|s), \quad \text{a.s.},
\end{equation*}
or, equivalently,
\begin{equation}
\log P_{e,s} - \sum _ {j=0} ^ {m-1} \log P_{e,sj}  = - M_s I(X;J|s) + o(M_s), \quad \text{a.s.}
\end{equation}
And since $M_s = n \pi(s) + o(1)$ a.s.\ by the ergodic
theorem, we get~(4.5).
\qed

\noindent
{\bf Proof of final step in Lemma~4.2. }
As already noted,~(4.5) implies $P_{b,s} \to 0$ a.s.\
for nodes whose children are leaves of $T^*$. 
The same holds for all internal nodes $s$ of $T^*$ 
for which $I(X;J|s)>0$.
The only remaining case is that
of internal nodes $u$ for which $I(X;J|u)=0$.
The fact that again
$P_{b,u} \to 0$ a.s.\ is an immediate
consequence of the result given as
Lemma~\ref{ind_inner} 
in Section~\ref{rates}.
\qed

\subsection{Proof of Lemma~4.3}

Here we provide proofs for the two missing steps in the proof
of the lemma given in Section~4.1 of the main text.

\noindent
{\bf Proof of~(4.6). }
We can rewrite (\ref{expansion2}) as, 
\begin{align}
\sum _ {j=0} ^ {m-1} \log P_{e,sj}  - \log P_{e,s} &\leq \sum _{j =0} ^ {m-1}\sum_{i=0}^{m-1}{a_{sj}(i)} \log\frac{a_{sj}(i)}{M_{sj}} -  \sum_{i=0} ^ {m-1} {a_s(i)} \log\frac{a_s(i)}{M_s} \nonumber \\
&- \frac{m-1}{2} \left ( \sum _ {j=0} ^ {m-1} \log M_{sj}   -\log M_s \right ) - {C'}. \label{expansion3}
\end{align}
We write $\widehat p _ s $ and $\pi _ s$ for the empirical and 
stationary conditional distributions of the symbol following $s$, 
\[
\widehat p _ s (i) := \frac{a_s(i)} {M_s}, \quad \pi _ s (i) := \pi(i|s) = \frac{\pi(si)}{\pi (s)},\qquad i\in A.
\]
Let $D(p\|q)$ denote the relative entropy 
(or Kullback-Leibler divergence) between 
two probability mass functions $p,q$ on the same
discrete alphabet
\citep[Ch.~2]{cover1999elements}. By the nonnegativity of
relative entropy we have,
\begin{eqnarray*}
\sum _{i=0} ^ {m-1}  a _ s (i)  \log \pi _ s (i) -  
\sum_{i=0} ^ {m-1} {a_s(i)} \log\frac{a_s(i)}{M_s}
 = 
	M_s  \sum_{i=0} ^ {m-1} {\widehat p _ s (i) } 
	\log \frac{\pi _ s (i)}{\widehat p _ s (i) } 
=  - M_s D(\widehat p _ s \| \pi _ s) \leq 0.
\end{eqnarray*}
Adding and subtracting the 
term $\sum _{i=0} ^ {m-1}  a_s(i)\log \pi_s (i)$ 
to (\ref{expansion3}) and using the last inequality,
\begin{align}
\sum _ {j=0} ^ {m-1} \log P_{e,sj}  - \log P_{e,s} &\leq \sum _{j =0} ^ {m-1} \sum_{i=0}^{m-1} a _ {sj} (i)  \log\widehat p _ {sj} (i)  -   \sum_{i=0} ^ {m-1} a_ {s} (i) \log  \pi _ s (i)  \nonumber \\
&- \frac{m-1}{2} \left ( \sum _ {j=0} ^ {m-1} \log M_{sj}   -\log M_s \right ) - {C'}. \label{expansion4}
\end{align}
We examine the first and second terms in~(\ref{expansion4})
separately. For the first (and main) term,
since
for all count vectors, $a_s(i) = \sum _ {j=0} ^ {m-1} a_{sj} (i)$, 
we can express,
\begin{align*}
\sum _{j =0} ^ {m-1} \sum_{i=0}^{m-1} a _ {sj} (i)  
\log\widehat p _ {sj} (i)   -   \sum_{i=0} ^ {m-1} 
	a_ {s} (i) \log  \pi _ s (i) 
= & 
	\sum _{j =0} ^ {m-1} \sum_{i=0}^{m-1} a _ {sj} (i)  
	\log \frac{\widehat p _ {sj} (i) } { \pi _ s (i)  }\\
= & 
	\sum _{j =0} ^ {m-1} M_{sj} \sum_{i=0}^{m-1}  \widehat p _ {sj} (i)
	\log \frac{\widehat p _ {sj} (i) } { \pi _ s (i)  } \\
= &  
	\sum _{j =0} ^ {m-1} M_{sj} D ( \widehat p _ {sj} \| \pi _ s )\\
= &  
	\sum _{j =0} ^ {m-1} M_{sj} D ( \widehat p _ {sj} \| \pi _ {sj} ) ,
\end{align*} 
where the last equality holds because $s$ is either
a leaf or an external nodes of $T ^*$, so that
$I(X;J|s)=0$ and $\pi_{sj} = \pi _ {sj'} = \pi_s$, for all $j,j'$. 

\newpage

In order to bound the 
relative entropy between the empirical and the stationary conditional 
distributions, we first recall that 
relative entropy is bounded above by the $\chi ^ 2$-distance,
e.g., \citep{gibbs:02},
\begin{equation}\label{chi_bound}
 D ( \widehat p _ {sj} \| \pi _ {sj} ) \leq d _ { \chi ^2} ( \widehat p _ {sj} , \pi _ {sj} )  = \sum _ {i=0} ^ {m-1} \frac{( \widehat p _ {sj} (i) - \pi _ {sj}(i) ) ^ 2}{\pi _ {sj}(i) }.
\end{equation}
From the law of the iterated logarithm (LIL) for 
Markov chains \citep[p.~106]{chung1967markov}, we have,
a.s.\ as $n\to\infty$,
\begin{align} \label{lil}
a_{sj} ( i) = n \pi(jsi) + O (\sqrt {n \log \log n}) , \quad M_{sj} = n \pi (js)+ O (\sqrt {n \log \log n}),
\end{align}
so that, 
$$\widehat p _ {sj}(i) 
= \frac{a_{sj} ( i)}{M_{sj}} 
= \frac{\pi(jsi)}{\pi (js)}
+ O\left ( \sqrt { \frac{\log \log n}{n}}\right )
= \pi _{sj} (i) 
+ O\left ( \sqrt { \frac{\log \log n}{n}}\right ).
$$
Substituting in (\ref{chi_bound}) yields,
\begin{equation}\label{rel_entr_bound_fin}
 D ( \widehat p _ {sj} \| \pi _ {sj} ) \leq  O\left ( { \frac{\log \log n}{n}}\right ) \sum _ {i=0} ^ {m-1}  \frac{1}{\pi _ {sj}(i) } = O\left ( { \frac{\log \log n}{n}}\right ),  \quad \text{a.s.},
\end{equation}
and finally, using (\ref{lil}) again,
\begin{equation}\label{ologlogn}
  \sum _{j =0} ^ {m-1} M_{sj} D ( \widehat p _ {sj} \| \pi _ {sj} ) =O (\log \log n) , \quad \text{a.s.}
\end{equation} 

\noindent For the third term in~(\ref{expansion4}), 
$$
 \sum _ {j=0} ^ {m-1} \log M_{sj}   -\log M_s  = \sum _ {j=0} ^ {m-1} \log \frac {M_{sj}} {M_s} + (m-1) \log M_s .
$$
Using the LIL, 
$ M_{s} = n \pi (s)+ O (\sqrt {n \log \log n})$, a.s., so,
$$\log M_{s}  
= \log \left ( n \pi (s) 
\left \{ 1 +  O (\sqrt {\log \log n / n }) \right \} \right )  
= \log n + O (1), \quad \text{a.s.}
$$
And, using LIL again as above, 
$M_{sj} / M_s = \pi(js) / \pi(s) + O (\sqrt {\log \log n / n}) = O (1)$, 
a.s., so that,
$ \log  {M_{sj}}/ {M_s}= O(1)$, a.s.,
and,
\begin{equation}
  \sum _ {j=0} ^ {m-1} \log M_{sj}   -\log M_s    = (m-1) \log n + O (1) , \quad \text{a.s.},
\end{equation} 
which together with (\ref{expansion4}) and (\ref{ologlogn}) complete the 
proof of equation (4.6). \qed

\smallskip

\noindent
{\bf Proof of final step in Lemma~4.3. }
As discussed in the proof of Lemma~4.3 in the main text, 
the asymptotic relation~(4.6) implies that  
$\prod _j P_{e,sj}/ {P_{e,s}} \to 0$, a.s., for all leaves and 
external nodes $s$ of $T^*$. The next proposition states that 
it also implies that  $\prod _j P_{w,sj}/ {P_{e,s}} \to 0$, 
so that by~(4.3) $P_{b,s} \to 1$, a.s., completing the proof of Lemma~4.3.
Note that it suffices to consider leaves at depths $d\leq D-1$,
since for leaves $s$ at depth~$D$ we already have $P_{b,s}=1$.
\qed

\begin{prop}
\label{ind_leaves}
Under the assumptions of Theorem~4.1,
for all leaves and external nodes $s$ of $T^*$ at
depths $d\leq D-1$ we have,
as $n\to\infty$:
\begin{equation*} 
\frac { \prod _ {j=0} ^ {m-1} P_{w,sj}} { {P_{e,s}}} \to 0, \quad \text{a.s.}
\end{equation*}
\end{prop}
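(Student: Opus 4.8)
The plan is to relate each weighted probability $P_{w,sj}$ to the corresponding estimated probability $P_{e,sj}$ and then reduce the claim to the ratio bound already established in~(4.6). Write $R_u := P_{w,u}/P_{e,u}$ and $\rho_u := \prod_{k=0}^{m-1} P_{e,uk}/P_{e,u}$ for any context $u$. Factoring the identity $\prod_j P_{w,sj} = \prod_j P_{e,sj}\,R_{sj}$ and dividing by $P_{e,s}$ gives
\[
\frac{\prod_{j=0}^{m-1} P_{w,sj}}{P_{e,s}} = \rho_s \prod_{j=0}^{m-1} R_{sj}.
\]
Since $s$ is a leaf or external node of $T^*$ at depth $\le D-1$, the relation~(4.6) already yields $\rho_s \to 0$ a.s. Hence it suffices to show that the factors $R_{sj}$ stay bounded, and for this I will prove that eventually a.s. $R_u \le 1$ for every descendant $u$ of $s$ at depth $\le D-1$ (with $R_u = 1$ at depth $D$).

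The key structural observation is that, because $s$ is a leaf or external node of $T^*$, every context $u$ descending from $s$ shares the same conditional law, $\pi_u = \pi_s$, so that $I(X;J\mid u)=0$. Consequently the argument behind~(4.6) applies directly at each such node, giving $\rho_u \to 0$ a.s.\ for every descendant $u$ of $s$ at depth $\le D-1$. There are only finitely many contexts in $\mathcal{T}(D)$, so all these convergences, together with the fact that (for a positive-ergodic chain) $T_{\text{MAX}}$ eventually contains every context of depth $D$, hold simultaneously on a single almost-sure event; I will work on this event throughout.

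On that event I establish $R_u \le 1$ by backward induction on the depth of $u$. At depth $D$ the node is a leaf of $T_{\text{MAX}}$, so $P_{w,u}=P_{e,u}$ and $R_u = 1$. For a node $u$ at depth $d \le D-1$ the CTW recursion~(2.6) gives $R_u = \beta + (1-\beta)\,\rho_u \prod_{k} R_{uk}$; using the inductive bound $R_{uk}\le 1$ together with $\rho_u \to 0$ (so that eventually $\rho_u \le 1$) yields $R_u \le \beta + (1-\beta) = 1$ for all large $n$. As the depth range is finite, this propagates up to the children $sj$ of $s$, so $R_{sj}\le 1$ eventually a.s., whence $\prod_j R_{sj}\le 1$. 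Combined with $\rho_s \to 0$ this gives $\prod_j P_{w,sj}/P_{e,s} = \rho_s\prod_j R_{sj}\le \rho_s \to 0$, which by the definition $P_{b,s}=\beta P_{e,s}/P_{w,s}$ and~(4.3) is exactly the statement that $P_{b,s}\to 1$.

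The conceptual crux, and the only genuine obstacle, is controlling $R_{sj}=P_{w,sj}/P_{e,sj}$ from above: a priori the weighting mechanism could make $P_{w,sj}$ far larger than $P_{e,sj}$ if some deeper refinement fit the data better. The point is that in a conditionally memoryless region no deeper split helps, and this is captured quantitatively by $\rho_u \to 0$ holding at every level, which drives the backward induction and keeps the term $(1-\beta)\rho_u\prod_k R_{uk}$ below $1-\beta$. The remaining work is routine bookkeeping: checking that the ``eventually'' thresholds can be taken uniform over the finitely many nodes involved, and that nodes which are leaves of $T_{\text{MAX}}$ at depth below $D$ (before all depth-$D$ contexts have appeared) cause no trouble, since there $R_u=1$ trivially.
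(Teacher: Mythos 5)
Your proof is correct and takes essentially the same route as the paper's: a backward induction on depth driven by the CTW recursion $P_{w,u}=\beta P_{e,u}+(1-\beta)\prod_k P_{w,uk}$ together with the fact that $\prod_j P_{e,uj}/P_{e,u}\to 0$ at every leaf and external node of $T^*$. The only (cosmetic) difference is that you propagate the invariant $P_{w,u}/P_{e,u}\le 1$ up the tree and invoke $\rho_s\to 0$ only at the top node, whereas the paper's induction propagates the statement $\prod_k P_{w,uk}/P_{e,u}\to 0$ itself; both rest on the same two ingredients.
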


\noindent
{\bf Proof. }
Note that, since the stationary distribution is positive on all
finite contexts, the tree $T_{\text{MAX}}$ is eventually a.s.\ the complete 
tree of depth $D$, so we need not consider special cases of contexts
$s$ that do not appear in the data separately.
Let $s$ be a leaf or external node of $T^*$ at depth 
$0 \leq d \leq D-1$.  The proof is by induction on $d$. 

\smallskip

For $d=D-1$, the claim is satisfied trivially as $P_{w,sj} = P_{e,sj}$, since nodes $sj$ are at depth~$D$. For the inductive step, we assume that the claim holds for all leaves and external nodes $s$ of $T^*$ at some depth $d\leq D-1$, and 
consider a leaf or external node $s$ of $T^*$ at depth $d-1$. We have,
as $n\to\infty$,
\begin{align}
\frac { \prod _ {j=0} ^ {m-1} P_{w,sj}} { {P_{e,s}}} 
=& \frac { \prod _ {j=0} ^ {m-1}\big [\beta P_{e,sj} +(1 - \beta ) \prod _ {t=0} ^ {m-1} P_{w,sjt}\big]} { {P_{e,s}}} \nonumber \\
=& \frac { \prod _ {j=0} ^ {m-1} \big[ \beta P_{e,sj} +(1 - \beta ) \prod _ {t=0} ^ {m-1} P_{w,sjt}\big]} { \prod _ {j=0} ^ {m-1} \beta P_{e,sj}} \ \frac  { \prod _ {j=0} ^ {m-1} \beta P_{e,sj}} { {P_{e,s}}}  \nonumber \\
= & \prod _ {j=0} ^ {m-1} \left ( 1 +\frac {1-\beta} { \beta} \frac { \prod _ {t=0} ^ {m-1} P_{w,sjt}}{P_{e,sj}}  \right )  \ \frac  { \prod _ {j=0} ^ {m-1}  P_{e,sj}} { {P_{e,s}}} \beta ^ m \to 0 , \quad \text {a.s.}, \label{eq_ind_leaves}
\end{align}
as $  { \prod _ {t=0} ^ {m-1} P_{w,sjt}} / {P_{e,sj}} \to 0$ by the inductive hypothesis, and $  { \prod _ {j=0} ^ {m-1}  P_{e,sj}} /{ {P_{e,s}}} \to 0$ for 
a node $s$ which is either a leaf or external node of $T^*$.
This establishes the inductive step and completes the proof.
\qed

\subsection{Proof of Theorem~4.2} \label{rates}

The starting point of the proof is the representation 
of the posterior given in equation~(3.2) of the main
text. In particular, we examine the asymptotic
behaviour of the branching probabilities
$P_{b,s}$ separately for leaves and internal~nodes.

\noindent
{\bf Leaves. } 
Let $s$ be a leaf or an external node of $T^*$.
We already have a strong upper bound for the
estimated probabilities of $s$ in~(4.6).
Write $r = (m-1) ^ 2 /2 $.
Since the bound~(4.6) holds a.s., a straightforward
sample-path-wise computation immediately 
implies that, for all $\epsilon>0$,
\begin{align}
\frac {\prod _ {j=0} ^ {m-1} P _ {e,sj}}{P_{e,s}} 
=O\big(n^{-r+ \epsilon}\big), \quad \text {a.s.}
\label{eq:Pesj}
\end{align}
Proposition~\ref{ind_rate} states that 
$\prod _j P_{w,sj}/ {P_{e,s}}$
has the same asymptotic behaviour.

\begin{prop}\label{ind_rate}
For all leaves and external nodes $s$ of $T^*$
at depths $d\leq D-1$, for any $\epsilon>0$ we have
as $n\to\infty$:
\begin{equation*}
\frac { \prod _ {j=0} ^ {m-1} P_{w,sj}} { {P_{e,s}}}  = 
{O} \left (  n^{-\frac{(m-1)^2}{2} + \epsilon  } \right ), \quad \text{a.s.}
\end{equation*}
\end{prop}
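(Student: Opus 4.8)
The plan is to mirror the inductive argument already used for Proposition~\ref{ind_leaves}, but now carrying the explicit polynomial rate through every step. The only quantitative input I would need is the bound~(\ref{eq:Pesj}), which already supplies $\prod_j P_{e,sj}/P_{e,s}=O(n^{-r+\epsilon})$ a.s.\ for every leaf or external node $s$ of $T^*$ at depth $d\leq D-1$, where $r=(m-1)^2/2$. Given this, I would establish the stated rate by induction on the depth $d$ of $s$, starting from $d=D-1$ and decreasing toward the root, exactly as in the proof of Proposition~\ref{ind_leaves}.

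For the base case $d=D-1$ the children $sj$ lie at depth $D$, so $P_{w,sj}=P_{e,sj}$ and the claim reduces immediately to~(\ref{eq:Pesj}). For the inductive step I would assume the rate holds for all leaves and external nodes at depth $d\leq D-1$, take a leaf or external node $s$ at depth $d-1$, and expand the CTW recursion precisely as in~(\ref{eq_ind_leaves}):
\[
\frac{\prod_{j=0}^{m-1}P_{w,sj}}{P_{e,s}}
=\prod_{j=0}^{m-1}\left(1+\frac{1-\beta}{\beta}\,\frac{\prod_{t=0}^{m-1}P_{w,sjt}}{P_{e,sj}}\right)\cdot\frac{\prod_{j=0}^{m-1}P_{e,sj}}{P_{e,s}}\,\beta^m.
\]
Since $s$ is a leaf or external node of $T^*$, each child $sj$ is itself an external node of $T^*$ at depth $d\leq D-1$, so the inductive hypothesis gives $\prod_t P_{w,sjt}/P_{e,sj}=O(n^{-r+\epsilon})\to 0$ a.s. Thus each of the $m$ correction factors tends to $1$ a.s., and their finite product is $O(1)$ along almost every sample path. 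Multiplying this $O(1)$ factor by $\prod_j P_{e,sj}/P_{e,s}=O(n^{-r+\epsilon})$ from~(\ref{eq:Pesj}) yields the claimed $O(n^{-r+\epsilon})$ bound and closes the induction.

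Because this is essentially a rate-aware repetition of Proposition~\ref{ind_leaves}, there is no genuinely new obstacle; the single point requiring care is verifying that the product of correction factors does not erode the rate. This is harmless since it is a product of only $m$ terms, each converging to $1$ almost surely, so along almost every sample path it is eventually bounded and the path-wise identity $O(1)\cdot O(n^{-r+\epsilon})=O(n^{-r+\epsilon})$ holds. As in Proposition~\ref{ind_leaves}, I would also record that positive-ergodicity makes $T_{\text{MAX}}$ eventually the complete tree of depth $D$, so contexts absent from the data need no separate treatment, and that leaves at depth $D$ may be excluded since there $P_{b,s}=1$ already.
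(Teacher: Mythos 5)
Your proposal is correct and follows essentially the same route as the paper: induction downward on depth, with the base case $d=D-1$ reduced to~(\ref{eq:Pesj}) and the inductive step obtained by substituting~(\ref{eq:Pesj}) into the expansion~(\ref{eq_ind_leaves}), the only (immaterial) difference being that the paper cites Proposition~\ref{ind_leaves} for $\prod_t P_{w,sjt}/P_{e,sj}\to 0$ while you deduce it from the rate in the inductive hypothesis. Your explicit remark that the $m$ correction factors are path-wise $O(1)$ and hence do not erode the rate is exactly the point the paper leaves implicit.
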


\noindent 
{\bf Proof. } 
The proof is similar to that of Proposition~\ref{ind_leaves},
by induction on $d$. 

\smallskip

If $d=D-1$, 
then $P_{w,sj}=P_{e,sj}$ and the claim 
follows from~(\ref{eq:Pesj}).
For the inductive step, assume the claim holds for 
all leaves and external nodes at some depth $d\leq D-1$,
and consider a leaf or external node $s$ at depth $d-1$.
Then substituting~(\ref{eq:Pesj})
into~(\ref{eq_ind_leaves}) and noting
that ${ \prod _ t P_{w,sjt}} / {P_{e,sj}} \to 0$ a.s.\
by Proposition~\ref{ind_leaves}, completes the proof. 
\qed

Combining Proposition~\ref{ind_rate} with
equation~(4.3), we get that,
for any leaf or external node $s$ at depth $d$,
a.s.\ as $n\to\infty$:
\begin{equation}
P_{b,s} = 1 - {O} \left (  n^{-\frac{(m-1)^2}{2}+\epsilon}\right ),
\quad\mbox{if }d\leq D-1,
\qquad
\mbox{and }
P_{b,s}=1,
\quad\mbox{if }d=D.
\label{eq:boundleaf}
\end{equation}

\noindent
{\bf Internal nodes. } As in the proof of Lemma~4.2, 
we first consider internal nodes 
whose children are leaves of $T^*$, so that $I=I(X;J|s) > 0$. 
For these nodes, equation (4.5) gives,
\begin{align}
 \frac {P_{e,s}}{\prod _ {j=0} ^ {m-1} P _ {e,sj}} = \exp 
\big \{- n I \pi(s)      + o(n) \big\}, \quad \text{a.s.}, \label{b32}
\end{align}
so for any $\epsilon >0 $,
\begin{align}
 \frac {P_{e,s}}{\prod _ {j=0} ^ {m-1} P _ {e,sj}}= o\big ( \exp \big  \{ -n  (1- \epsilon) I \pi (s) \big  \} \big ), \quad \text{a.s.},
\end{align}
and substituting in equation~(4.4) in the main
text we obtain the same bound for the branching
probabilities,
\begin{align}
P_{b,s} 
=  o\Big ( \exp \big  \{ -n  (1- \epsilon) I \pi (s) \big  \} \Big ), \quad \text{a.s.}  \label{p_b_internal}
\end{align}

Next we establish a corresponding bound
for all internal nodes,
indeed, for any node $u$ that is a suffix of a node $s$ that
has $I(X;J|s) > 0$.

\begin{lem}\label{ind_inner}
Let $s$ be a context of length $l(s) \leq D-1$ for which $I(X;J|s) > 0$, and let~$u$ be any suffix of~$s$. Then, for any $\epsilon > 0 $,
we have as $n\to\infty$:
\begin{equation}
P_{b,u} = o\Big ( \exp \big \{ -n  (1- \epsilon) I(X;J|s)  \pi (s) \big \}\Big ), 
\quad \text{a.s.}
\label{eq:lemmaO}
\end{equation}
\end{lem}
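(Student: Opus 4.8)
The plan is to prove \eqref{eq:lemmaO} by induction on the depth $d$ of $u$, run \emph{downward} from $d=l(s)$ to $d=0$, i.e.\ climbing from $s$ toward the root, at each step expressing the branching probability of a node through that of its unique child on the path to $s$. Write $a := I(X;J|s)\,\pi(s)>0$, and for $0\le i\le l(s)$ let $s^{(i)}$ be the suffix of $s$ of length $i$, so that $s^{(0)}=\lambda$, $s^{(l(s))}=s$, and $s^{(i)}$ is a child of $s^{(i-1)}$ for each $i$. Any suffix $u$ of $s$ equals some $s^{(d)}$ with $d\le l(s)\le D-1$, so it suffices to show that for every $\epsilon>0$ and every $0\le d\le l(s)$ one has $P_{b,s^{(d)}}=o\big(\exp\{-n(1-\epsilon)a\}\big)$ a.s. The base case $d=l(s)$, that is $u=s$, is precisely the bound \eqref{p_b_internal} established just above.

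For the inductive step, assume the claim for $s^{(d+1)}$ and set $u=s^{(d)}$, with distinguished child $uj^*:=s^{(d+1)}$. Since $l(u)\le D-1$, the node $u$ has children in the eventually-complete tree $T_{\text{MAX}}$, so the CTW recursion \eqref{pws} gives $P_{w,u}\ge(1-\beta)\prod_j P_{w,uj}$; bounding the $m-1$ children other than $uj^*$ below via $P_{w,uj}\ge\beta P_{e,uj}$ (valid since either $P_{w,uj}=P_{e,uj}$ or $P_{w,uj}\ge\beta P_{e,uj}$) while keeping $P_{w,uj^*}$ intact, and using the definition $P_{b,uj^*}=\beta P_{e,uj^*}/P_{w,uj^*}$, yields
\[
P_{b,u}=\frac{\beta P_{e,u}}{P_{w,u}}\le\frac{1}{(1-\beta)\beta^{m-1}}\,\frac{P_{e,u}}{\prod_j P_{e,uj}}\,P_{b,uj^*}.
\]
Thus the branching probability at $u$ is controlled by that at $uj^*$ up to the multiplicative factor $P_{e,u}/\prod_j P_{e,uj}$.

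The one genuinely new ingredient is a \emph{uniform} polynomial upper bound on this factor, valid at every node $u$ irrespective of whether $I(X;J|u)$ vanishes. This is the complement of \eqref{47}: applying the two bounds of Lemma~\ref{lem:KT} in the reverse direction (the lower bound for $P_{e,u}$, the upper bound for each $P_{e,uj}$) and using that the \emph{empirical} conditional mutual information at $u$ is nonnegative, one obtains, for every $\epsilon>0$, $P_{e,u}/\prod_j P_{e,uj}=O\big(n^{(m-1)^2/2+\epsilon}\big)$ a.s.; when $I(X;J|u)>0$ the ratio in fact decays exponentially by \eqref{mutual_info}, so the bound is trivial there. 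Feeding the inductive hypothesis for $P_{b,uj^*}$ with slack $\epsilon/2$ into the displayed inequality and absorbing the polynomial factor into the exponential rate — legitimate because $\log n=o(n)$ — gives $P_{b,u}=o\big(\exp\{-n(1-\epsilon)a\}\big)$ a.s. As the number of steps is at most $D$, a constant, the rate $a$ is preserved throughout.

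I expect the uniform polynomial control of $P_{e,u}/\prod_j P_{e,uj}$ at the nodes with $I(X;J|u)=0$ to be the main obstacle: the already-proved bound \eqref{47} supplies only the matching \emph{lower} bound on this ratio, so the complementary upper bound must be extracted afresh from the Krichevsky--Trofimov estimates. Here positive-ergodicity is essential, guaranteeing that every context $u$ and $uj$ is visited a linear number of times and that the law-of-the-iterated-logarithm corrections to the counts are $O(\sqrt{n\log\log n})$, exactly as in the derivation of \eqref{47}.
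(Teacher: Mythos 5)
Your proof is correct, and its skeleton --- induction along the suffix chain from $s$ down to the root, the base case \eqref{p_b_internal}, and the recursive inequality $P_{b,u}\le [\beta^{m-1}(1-\beta)]^{-1}\,\big(P_{e,u}/\prod_j P_{e,uj}\big)\,P_{b,uj^*}$ obtained by discarding all children except the one on the path to $s$ --- is exactly the paper's argument (phrased there as induction on $\Delta l=l(s)-l(u)$). The one place you genuinely diverge is the step you flag as the main obstacle: controlling $P_{e,u}/\prod_j P_{e,uj}$ at intermediate nodes $u$ where $I(X;J|u)$ may vanish. The paper needs no new estimate here: the expansion \eqref{mutual_info} was proved for an arbitrary fixed context, so applied at $u$ it gives $\log P_{e,u}-\sum_j\log P_{e,uj}=-nI(X;J|u)\pi(u)+o(n)\le o(n)$ a.s., and a factor of $\exp(o(n))$ is already weak enough to be absorbed into $\exp\{-n(1-\epsilon)I(X;J|s)\pi(s)\}$ at the cost of replacing $\epsilon$ by $2\epsilon$ (harmless, since the chain has at most $D$ links). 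Your sharper uniform bound $P_{e,u}/\prod_j P_{e,uj}=O\big(n^{(m-1)^2/2+\epsilon}\big)$ is also true and derivable exactly as you indicate from the nonnegativity of the empirical conditional mutual information together with Lemma~\ref{lem:KT} --- though note your parenthetical has the two bounds swapped: to upper-bound this ratio you need the upper bound \eqref{eq:PeUB} for $P_{e,u}$ and the lower bound \eqref{eq:PeLB} for each $P_{e,uj}$. Either way the prefactor is subexponential and the induction closes; the paper's route is simply shorter because it recycles an estimate already in hand rather than extracting a new polynomial bound from the Krichevsky--Trofimov inequalities.
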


\noindent
{\bf Proof. }
Let $\Delta l = l(s) -l(u) \geq 0$; the proof is by induction on $\Delta l$.
For $\Delta l=0 $, the claim is satisfied trivially as $u=s$, for which 
$I(X;J|s) > 0$, corresponding to the previous case.

\smallskip

For the inductive step, we assume that the claim holds for context $uj$ 
which is the suffix of $s$ with $\Delta l =k \geq 0$, and prove that 
it also holds for the context $u$, which is the suffix of $s$ 
with $\Delta l =k+1 $. For node $u$, which is at depth $d<D-1$, 
from~(4.4) and the definition of the branching probabilities,
$$
P_{b,u} \leq \Big(\frac{\beta}{1-\beta}\Big)
\frac{P_{e,u}}{\prod _ {t=0} ^ {m-1}  P_{w,ut}}
=  C  \frac{P_{e,u}}{\prod _ {t=0} ^ {m-1}  P_{e,ut}}  \prod _{t=0} ^ {m-1}  P_{b,ut},
$$
where the constant $C=[\beta^{m-1}(1-\beta)]^{-1}$.
And, as $P_{b,ut}\leq 1 $ for all $t$, we can further bound,
\begin{align} \label {b37}
P_{b,u}  \leq C \   \frac{P_{e,u}}{\prod _ {t=0} ^ {m-1}  P_{e,ut}} \   P_{b,uj},
\end{align}
keeping only the specific child $uj$ of $u$ 
which is a suffix of $s$. 

\newpage

\noindent From (\ref{b32}) for node $u$, 
we know that, even if $I(X;J|u)$ is zero, we have,
\begin{equation} 
 \frac {P_{e,u}}{\prod _ {t=0} ^ {m-1} P _ {e,ut}} =
\exp( o(n)),  \quad \text{a.s.},
\end{equation}
and combining this with~(\ref{b37}) and the inductive hypothesis
that~(\ref{eq:lemmaO}) holds for $uj$ in place of $u$,
$$
P_{b,u} = o\Big ( \exp \big \{ -n  (1- 2\epsilon ) I(X;J|s)  \pi (s) \big \}\Big ), 
\quad \text{a.s.},
$$
completing the proof of the inductive step and the proof of the lemma.
\qed

Substituting the bounds
on the branching probabilities on the leaves~(\ref{eq:boundleaf}) 
and on the internal nodes~(\ref{eq:lemmaO}) into the expression
for the posterior of $T^*$ in equation~(3.2) in the main text,
yields the result claimed in Theorem~4.2.
Finally, a simple examination of~(\ref{eq:boundleaf}) 
and~(\ref{eq:lemmaO}) in the case when $T^*$ is the full
tree of depth $D$ shows that $P_{b,s}=1$ for all leaves $s$,
so the rate is determined by the exponential bounds
in~(\ref{eq:lemmaO}) as claimed in Corollary~4.1.

\vspace*{-0.3 cm}

\section{Entropy estimation} \label{D}

\vspace*{-0.2 cm}

This section contains additional details associated with the entropy 
estimation experiments of Section 5.2 of the main text.

\medskip

\noindent
{\bf A ternary chain. } The 
parameters of this 
chain are:
\begin{align*}
&
\hspace{-0.18in} 
	\theta_1=(0.4,0.4,0.2),\; \theta_2=(0.2,0.4,0.4),\\
& 
\hspace{-0.18in} 
	\theta_{00}=(0.4,0.2,0.4),\; \theta_{01}=(0.3,0.6,0.1),\\
&
\hspace{-0.18in} 
	\theta_{022}=(0.5,0.3,0.2),\\
& 
\hspace{-0.18in} 
	\theta_{0212}=(0.1,0.3,0.6),\;
	\theta_{0211}=(0.05,0.25,0.7),\;  \theta_{0210}=(0.35,0.55,0.1),\\
&
\hspace{-0.18in} 
	\theta_{0202}=(0.1,0.2,0.7),\;
	\theta_{0201}=(0.8,0.05,0.15),\\
& 
\hspace{-0.18in} 
	\theta_{02002}=(0.7,0.2,0.1),\; \theta_{02001}=(0.1,0.1,0.8),\;
	\theta_{02000}=(0.3,0.45,0.25).
\end{align*}
\noindent
{\bf A third order binary chain. }

\begin{figure}[!ht]
\vspace*{-0.3 cm}
\begin{subfigure}{0.48 \linewidth}
\vspace*{0.3 cm}
 \includegraphics[width= 1 \linewidth]{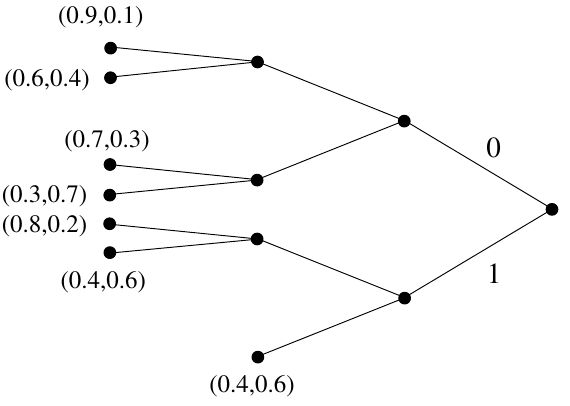}
\vspace*{-0.25 cm}
\caption{tree model and parameters}
\label{binary_tree_example}
\end{subfigure}
~
\begin{subfigure}{0.49 \linewidth}
 \includegraphics[width= 1 \linewidth]{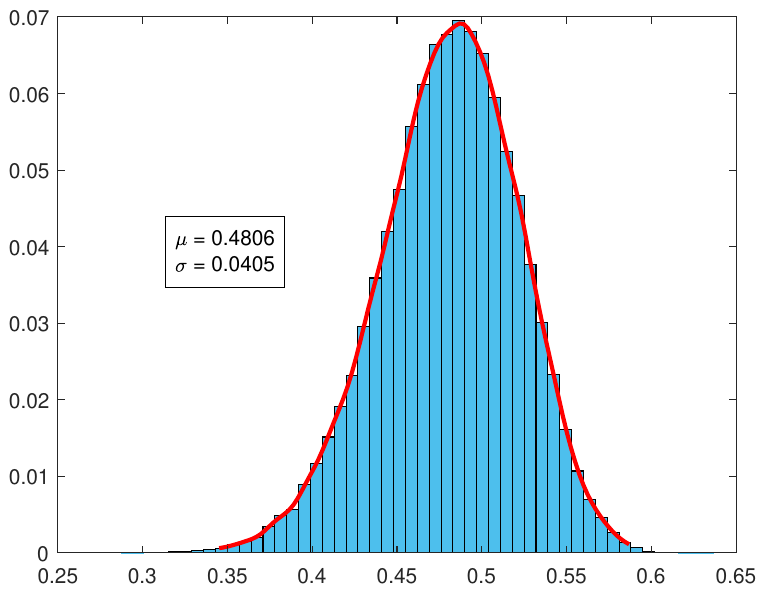}
\vspace*{-0.58 cm}
\caption{entropy rate posterior}
\label{binary_h_hist}
\end{subfigure}
\vspace*{-0.2 cm}
\caption{$(a)$~Model and parameters for the third order binary chain.
$(b)$~Histogram of the posterior $\pi(\bar{H}|x)$ of the entropy
rate given $n=1000$ observations from the chain, constructed from $N=10^5$
Monte Carlo samples.}
\end{figure}

\newpage

\noindent
The tree model for the third order binary chain, along with its associated parameters, 
is shown in Figure \ref{binary_tree_example}.
The posterior $\pi(\bar{H}|x)$ of the entropy
rate based on $n=1000$ observations $x$ is shown in Figure~\ref{binary_h_hist}: 
It is approximately Gaussian with mean $\mu = 0.4806$ and standard deviation 
$\sigma = 0.0405$. The histogram was constructed using $N=10 ^5 $ i.i.d.\ samples 
from the entropy rate posterior.

\medskip 

\noindent
{\bf A bimodal posterior. } The distribution 
of the third-order chain $\{X_n\}$ in this
example is given by,
$$\Pr(X_n=j|X_{n-1}=a,X_{n-2}=b,X_{n-3}=i)=Q_{ij},\quad
i,j,a,b\in A,$$ 
where the alphabet $A = \{0,1,2,3,4,5\}$ and 
the transition matrix $Q=(Q_{ij})$ is,
\[ Q=
\left( \begin{array}{cccccc}
0.5 &  0.2 &  0.1  &  0    &  0.05 &   0.15\\
0.4 &  0   &  0.4  &  0.2  &  0    &   0\\
0.3 &  0.1 &  0.23 &  0.12 &  0.05 &   0.2\\
0.05&  0.1 &  0.05 &  0.05 &  0.03 &   0.72\\
0   &  0   &  1    &  0    &  0    &   0\\
0.1 &  0.2 &  0.3  &  0.2  &  0.05 &   0.15
\end{array} \right).
\]
Viewed as a variable-memory chain, 
the model of $\{X_n\}$ is the complete tree of depth 3,
but the dependence of each $X_n$ on its past is only
via $X_{n-3}$. So, meaningful dependence is
detected only at memory
lengths of at least three: the two most recent symbols are independent of~$X_n$.
This is why the MAP model $T_1^*$
identified by the BCT algorithm of \cite{our}
based on the $n=1450$ observations $x$
is the 
empty tree $T_1 ^* = \{ \lambda \}$, 
corresponding to i.i.d.\ data. Its posterior probability 
is $\pi (T _ 1 ^ * | x) = 0.09 $, which corresponds exactly to the weight of the secondary mode in the entropy rate posterior, as shown in Figure 7a of the main text. All other trees identified by the $k$-BCT algorithm are 
complex trees of depth~3, with posterior probabilities
close to that of $T_ 1 ^*$; e.g., 
the second {\em a posteriori} most likely model $T_2^*$ has posterior
$\pi ( T _ 2 ^ * | x) = 0.08$. These trees of depth~3 form the other 
mode of the bimodal posterior on model space, which corresponds to 
the primary mode of the entropy rate posterior in Figure 7a. 

\smallskip

Table \ref{bimodal_h_table} shows the entropy
rate estimates by all five methods in this example.
The MAP value of the posterior $\pi(\bar{H}|x)$ is given
for BCT. Note that, although the plug-in estimator with
block-length $k=5$ gives a slightly better estimate
than the BCT here, given the well-known high variability 
of the plug-in this is likely more a coincidence rather than 
an indication of accuracy of the plug-in.

\begin{table}[!ht]
\centering
\begin{tabular}{ccccccccccc}
\midrule 
 & True & BCT & CTW & PPM &LZ & $k=3$ & $k=4$ & $k=5$ & $k=6$ \\
\midrule
$\widehat H $ & 1.355 & 1.406 & 1.643 & 1.650 & 1.283 & 1.609 & 1.481 & 1.333 & 1.173 \\
\midrule
\end{tabular}
\caption{Entropy rate estimates for the `bimodal posterior' example.}
\label{bimodal_h_table}
\end{table}

\end{document}